\newtheorem{theorem}{\indent Theorem}[section]
\newtheorem{lemma}[theorem]{\indent Lemma}
\newtheorem{corollary}[theorem]{\indent Corollary}
\newtheorem{proposition}[theorem]{\indent Proposition}
\newtheorem{EXAMPLE}{\indent Example}[section]
\newtheorem{definition}{\indent Definition}[section]
\newenvironment{example}{\begin{EXAMPLE}\rm}{\rm\end{EXAMPLE}}
\newcommand{\code}{{\mathcal{C}}}
\newcommand{\graph}{{\mathcal{G}}}
\newcommand{\cL}{{\mathcal{L}}}
\newcommand{\cV}{{\mathcal{V}}}
\newcommand{\cH}{{\mathcal{H}}}
\newcommand{\cF}{{\mathcal{F}}}
\newcommand{\cG}{{\mathcal{G}}}
\newcommand{\ngbr}{{\mathcal{N}}} 
\newcommand{\cE}{{\mathcal{E}}}
\newcommand{\cI}{{\mathcal{I}}}
\newcommand{\cJ}{{\mathcal{J}}}
\newcommand{\cM}{{\mathcal{M}}}
\newcommand{\cN}{{\mathcal{N}}}
\newcommand{\cK}{{\mathcal{K}}}
\newcommand{\cP}{{\mathcal{P}}}
\newcommand{\cQ}{{\mathcal{Q}}}
\newcommand{\cS}{{\mathcal{S}}}
\newcommand{\cT}{{\mathcal{T}}}
\newcommand{\cU}{{\mathcal{U}}}
\newcommand{\cX}{{\mathcal{X}}}
\newcommand{\sG}{{\mathsf{G}}} 
\newcommand{\sV}{{\mathsf{V}}} 
\newcommand{\sU}{{\mathsf{U}}} 
\newcommand{\sE}{{\mathsf{E}}} 
\newcommand{\sT}{{\mathsf{T}}} 
\newcommand{\sX}{{\mathsf{X}}} 
\newcommand{\sY}{{\mathsf{Y}}} 
\newcommand{\sZ}{{\mathsf{Z}}} 
\newcommand{\scp}{{\mathsf{c}}} 
\newcommand{\sfn}{{\mathsf{f}}}
\newcommand\rr{{\mathbb R}}
\newcommand\qq{{\mathbb Q}}
\newcommand\nn{{\mathbb N}}
\newcommand{\GF}{{\mathrm{GF}}}
\newcommand{\blda}{{\mbox{\boldmath $a$}}}
\newcommand{\bldaa}{{\mbox{\scriptsize \boldmath $a$}}}
\newcommand{\bldb}{{\mbox{\boldmath $b$}}}
\newcommand{\bldbb}{{\mbox{\scriptsize \boldmath $b$}}}
\newcommand{\bldc}{{\mbox{\boldmath $c$}}}
\newcommand{\bldcc}{{\mbox{\scriptsize \boldmath $c$}}}
\newcommand{\bldchi}{{\mbox{\boldmath $\chi$}}}
\newcommand{\blddelta}{{\mbox{\boldmath $\delta$}}}
\newcommand{\bldf}{{\mbox{\boldmath $f$}}}
\newcommand{\bldff}{{\mbox{\scriptsize \boldmath $f$}}}
\newcommand{\bldg}{{\mbox{\boldmath $g$}}}
\newcommand{\bldgg}{{\mbox{\scriptsize \boldmath $g$}}}
\newcommand{\bldmu}{{\mbox{\boldmath $\mu$}}}
\newcommand{\bldh}{{\mbox{\boldmath $h$}}}
\newcommand{\bldk}{{\mbox{\boldmath $k$}}}
\newcommand{\bldkk}{{\mbox{\scriptsize \boldmath $k$}}}
\newcommand{\bldkappa}{{\mbox{\boldmath $\kappa$}}}
\newcommand{\bldsubkappa}{{\mbox{\scriptsize \boldmath $\kappa$}}}
\newcommand{\bldm}{{\mbox{\boldmath $m$}}}
\newcommand{\bldp}{{\mbox{\boldmath $p$}}}
\newcommand{\bldr}{{\mbox{\boldmath $r$}}}
\newcommand{\bldrr}{{\mbox{\scriptsize \boldmath $r$}}}
\newcommand{\bldt}{{\mbox{\boldmath $t$}}}
\newcommand{\bldppp}{{\mbox{\scriptsize \boldmath $p$}}}
\newcommand{\bldtt}{{\mbox{\scriptsize \boldmath $t$}}}
\newcommand{\bldtau}{{\mbox{\boldmath $\tau$}}}
\newcommand{\bldw}{{\mbox{\boldmath $w$}}}
\newcommand{\bldx}{{\mbox{\boldmath $x$}}}
\newcommand{\bldxx}{{\mbox{\scriptsize \boldmath $x$}}}
\newcommand{\bldy}{{\mbox{\boldmath $y$}}}
\newcommand{\bldyy}{{\mbox{\scriptsize \boldmath $y$}}}
\newcommand{\bldz}{{\mbox{\boldmath $z$}}}
\newcommand{\bldxi}{{\mbox{\boldmath $\xi$}}}
\newcommand{\bldXi}{{\mbox{\boldmath $\Xi$}}}
\newcommand{\bldXiXi}{{\mbox{\scriptsize \boldmath $\Xi$}}}
\newcommand{\bldG}{{\mbox{\boldmath $G$}}}
\newcommand{\bldL}{{\mbox{\boldmath $L$}}}
\newcommand{\bldP}{{\mbox{\boldmath $P$}}}
\newcommand{\bldlambda}{{\mbox{\boldmath $\lambda$}}}
\newcommand{\bldsigma}{{\mbox{\boldmath $\sigma$}}}
\newcommand{\ones}{{\mbox{\boldmath $1$}}}
\newcommand{\zeros}{{\mbox{\boldmath $0$}}}
\newcommand{\rrr}{\mathfrak{R}}%
\newcommand{\ppp}{\mathfrak{P}}%
\newcommand{\zeross}{{\mbox{\scriptsize \boldmath $0$}}}
\newcommand{\rrrm}{\rrr^{-}}
\newcommand{\bldzero}{{\mbox{\boldmath $0$}}}
\newcommand{\Prob}{{p}}
\newcommand{\half}{{\textstyle\frac{1}{2}}}
    \def\squarebox#1{\hbox to #1{\hfill\vbox to #1{\vfill}}}
\newlength{\Algwidth}
\title{Linear-Programming Decoding of Nonbinary Linear Codes    
\thanks{%
    This work was supported in part by the Claude Shannon Institute for 
    Discrete Mathematics, Coding and Cryptography (Science Foundation Ireland Grant 06/MI/006). The material in this paper was presented in part at the 7-th International ITG Conference on Source and Channel Coding (SCC), Ulm, Germany, January 2008, and in part at the IEEE International Symposium on Information Theory (ISIT), Toronto, Canada, July 2008.  
    \newline 
    M. F. Flanagan is with the School of Electrical, Electronic and Mechanical Engineering, University College Dublin, Belfield, Dublin 4, Ireland (e-mail:mark.flanagan@ieee.org).
    \newline
    V. Skachek, E. Byrne and M. Greferath are with the Claude Shannon Institute
    and the School of Mathematical Sciences, University College Dublin, Belfield, Dublin 4, Ireland 
    (e-mail:\{vitaly.skachek, ebyrne, marcus.greferath\}@ucd.ie). 
    }        
}
\author{Mark F. Flanagan,~\IEEEmembership{Member,~IEEE,} Vitaly Skachek,~\IEEEmembership{Member,~IEEE,} 
Eimear Byrne, and Marcus Greferath} 
\begin{document}
\maketitle

\begin{abstract}
A framework for linear-programming (LP) decoding of nonbinary linear codes over rings is developed. This framework facilitates linear-programming based reception for coded modulation systems which use direct modulation mapping of coded symbols. It is proved that the resulting LP decoder has the `maximum-likelihood certificate' property. It is also shown that the decoder output is the lowest cost pseudocodeword. Equivalence between pseudocodewords of the linear program and pseudocodewords of graph covers is proved. It is also proved that if the modulator-channel combination satisfies a particular symmetry condition, the codeword error rate performance is independent of the transmitted codeword. Two alternative polytopes for use with linear-programming decoding are studied, and it is shown that for many classes of codes these polytopes yield a complexity advantage for decoding. These polytope representations lead to polynomial-time decoders for a wide variety of classical nonbinary linear codes. LP decoding performance is illustrated for the $[11,6]$ ternary Golay code with ternary PSK modulation over AWGN, and in this case it is shown that the performance of the LP decoder is comparable to codeword-error-rate-optimum hard-decision based decoding. LP decoding is also simulated for medium-length ternary and quaternary LDPC codes 
with corresponding PSK modulations over AWGN. 
%Simulation results are also presented for a short low-density code over $\mathbb{Z}_4$ mapped to quaternary PSK for transmission over AWGN.  

\textbf{Keywords:}
Linear-programming decoding,
LDPC codes,
pseudocodewords,
coded modulation. 
\end{abstract}

\section{Introduction}

\emph{Low-density parity-check} (LDPC) codes~\cite{Gallager} have become very popular in recent years due to their excellent performance under \emph{sum-product} (SP) decoding (or \emph{message-passing} decoding). The primary research focus in this area to date has been on \emph{binary} LDPC codes. Finite-length analysis of such LDPC codes under SP decoding is a difficult task. An approach to such an analysis was proposed in~\cite{Wiberg} based on the consideration of so-called \emph{pseudocodewords} and their \emph{pseudoweights}, defined with respect to a structure called the \emph{computation tree}. By replacing this set of pseudocodewords with another set defined with respect to cover graphs of the Tanner graph (here called \emph{graph-cover pseudocodewords}), the analysis was found to be significantly more tractable while still yielding accurate experimental results~\cite{FKKR},~\cite{KV-characterization},~\cite{KV-IEEE-IT}. 

In~\cite{Feldman-thesis} and~\cite{Feldman}, the decoding of \emph{binary} LDPC codes using \emph{linear-programming} (LP) decoding was proposed, and many important connections between linear-programming decoding and classical message-passing decoding were established. In particular, it was shown that the LP decoder is inhibited by a set of pseudocodewords corresponding to points in the LP relaxation polytope with rational coordinates (here called \emph{linear-programming pseudocodewords}), and that the set of these pseudocodewords is equivalent to the set of graph-cover pseudocodewords. This represents a major result as it indicates that essentially the same phenomenon determines performance of LDPC codes under both LP and SP decoding.

For high-data-rate communication systems, bandwidth-ef\-fi\-cient signalling schemes are required which necessitate the use of higher-order  (nonbinary) modulation. Of course, within such a framework it is desirable to use state-of-the-art error-correcting codes. Regarding the combination of LDPC coding and higher-order modulation, bit-interleaved coded modulation (BICM)~\cite{Caire} is a high-performance method which cascades the operations of binary coding, interleaving and higher-order constellation mapping. Here however the problem of system analysis is exacerbated by the complication of joint design of binary code, interleaver and constellation mapping; this becomes even more difficult when feedback is included from the decoder to the demodulator~\cite{Ritcey}. 

Alternatively, higher-order modulation may be achieved in conjunction with coding by the use of nonbinary codes whose symbols map directly to modulation signals. A study of such codes over rings, for use with PSK modulation, was performed in~\cite{Deepak}, with particular focus on the ring of integers modulo $8$. Nonbinary LDPC codes over fields have been investigated with direct mapping to binary~\cite{MacKay} and nonbinary~\cite{LSLG},~\cite{Bennatan1},~\cite{Bennatan2},~\cite{Bennatan-thesis} modulation signals; in all of this work, SP decoding (with respect to the nonbinary alphabet) was assumed. Recently, some progress has been made on the topic of analysis of such codes; in particular, pseudocodewords of nonbinary codes were defined and some bounds on the pseudoweights were derived ~\cite{Kelley-Sridhara-ISIT-2006}. 
 
In this work, we extend the approach in~\cite{Feldman} towards coded modulation, in particular to codes over rings mapped to nonbinary modulation signals. As was done in~\cite{Feldman}, we show that the problem of decoding may be formulated as an LP problem for the nonbinary case. We also show that an appropriate relaxation of the LP leads to a solution which has the `maximum-likelihood (ML) certificate' property, i.e. if the LP outputs a codeword, then it must be the ML codeword. Moreover, we show that if the LP output is integral, then it must correspond to the ML codeword. We define the \emph{graph-cover pseudocodewords} of the code, and the \emph{linear-programming pseudocodewords} of the code, and prove the equivalence of these two concepts. This shows that the links between LP decoding on the relaxation polytope and message-passing decoding on the Tanner graph generalize to the nonbinary case. Of course, while we use the term `nonbinary' throughout this paper, our framework includes the binary framework as a special case.

For coded modulation systems using maximum-likelihood (ML) decoding, the concept of \emph{geometric uniformity} \cite{Forney} was introduced as a condition which, if satisfied, guarantees codeword error rate (WER) performance independent of the transmitted codeword (this condition was used for design of the coded modulation systems in \cite{Deepak}). An analogous symmetry condition was defined in \cite{Richardson} for binary codes over $GF(2)$ with SP decoding; this was later extended to nonbinary codes over $GF(q)$ by invoking the concept of \emph{coset} LDPC codes~\cite{Bennatan1},~\cite{Bennatan2}. We show that for the present framework, there exists a symmetry condition under which the codeword error rate performance is independent of the transmitted codeword. This provides a condition somewhat akin to geometric uniformity for the present framework. It is noteworthy that the same symmetry condition has recently been shown to yield codeword-independent decoder performance in the context of SP decoding \cite{Flanagan_ISIT_08} and also in the context of ML decoding \cite{Hof_Sason_Shamai}. In particular, this identifies a `natural' mapping for nonbinary codes mapped to PSK modulation, where LP, SP or ML decoding is used with direct modulation mapping of coded symbols.

For the binary framework, alternative polytope representations were studied which gave a complexity advantage in certain scenarios~\cite{Feldman-thesis},~\cite{Feldman},~\cite{Chertkov},~\cite{Feldman-Yang},~\cite{Feldman-Yang2}. Analogous to these works, we define 
two alternative polytope representations, which offer a smaller number of variables and constraints for many classes of nonbinary codes. 
We compare these representations with the original polytope, and show that both of them have equal error-correcting performance to the original LP relaxation. Both of these representations lead to polynomial-time decoders for a wide variety of classical nonbinary linear codes.

To demonstrate performance, LP decoding is simulated for the ternary Golay code mapped to ternary PSK over AWGN, and the LP decoder is seen to perform approximately as well as codeword-error-rate optimum hard-decision decoding, and approximately $1.5$ dB from the union bound for codeword-error-rate optimum soft-decision decoding. 
%Decoding performance is also demonstrated for a length-$200$ low-density code over $\mathbb{Z}_4$ mapped to QPSK for transmission over AWGN.

The paper is organized as follows. Section~\ref{sec:general} introduces general settings and notation. The nonbinary decoding problem is formulated as a linear-programming problem in Section~\ref{sec:lp}, and basic properties of the decoding polytope are studied in Section~\ref{sec:pol-properties}.  
A sufficient condition for codeword-independence performance of the decoder is presented in Section~\ref{sec:codeword_independence}. 
\emph{Linear-programming pseudocodewords} are defined in Section~\ref{sec:lp-pseudocodewords}, and their 
properties are discussed. Their equivalence to the \emph{graph-cover pseudocodewords} 
is shown in Section~\ref{sec:eq-pseudocodewords}. Two alternative polytope representations are presented in Sections~\ref{sec:alternative_polytope} and~\ref{sec:cascaded_polytope}, both of which have equivalent performance to the original but may provide lower-complexity decoding. Simulation results are presented in Section~\ref{sec:sims} for some example coded modulation systems. Finally, some directions for future research are proposed in Section~\ref{sec:future}. 
 
\section{General Settings}
\label{sec:general} 

We consider codes over finite rings (this includes codes over finite fields, but may be more general). Denote by $\rrr$ a ring with $q$ elements, by $0$ its additive identity, and let $\rrrm = \rrr \backslash \{ 0 \}$. Let $\code$ be a code of length $n$ over the ring $\rrr$, defined by
\begin{equation}
\code = \{ \bldc \in \rrr^n \; : \; \bldc \cH^T = \bldzero \}
\label{eq:code_definition}
\end{equation}
where $\cH$ is an $m \times n$ matrix (with entries from $\rrr$) called the \emph{parity-check matrix} of the code $\code$. Obviously, the code $\code$ may admit more than one parity-check matrix; however we will consider that the parity-check matrix $\cH$ is fixed in this paper. 

Linearity of the code $\code$ follows directly from (\ref{eq:code_definition}). 
%Now it may easily be seen that the resulting code is linear, since for any $\bldc_1 \in \code$, $\bldc_2 \in \code$, $\alpha \in \rrr$ and $\beta \in \rrr$ we must have $(\alpha \bldc_1 + \beta \bldc_2) \cH^T = \alpha (\bldc_1 \cH^T) + \beta (\bldc_2 \cH^T) = 0$, and so $\alpha \bldc_1 + \beta \bldc_2 \in \code$. 
Also, the \emph{rate} of the code $\code$ is defined as $R(\code) = \log_q(\left| \code \right|)/n$ and is equal to the number of information symbols per coded symbol. The code $\code$ may then be referred to as an $[n,\log_q(\left| \code \right|)]$ linear code over $\rrr$.

Denote the set of column indices and the set of row indices of $\cH$ by  $\cI = \{1, 2, \cdots, n \}$ 
and $\cJ = \{1, 2, \cdots, m \}$, respectively. 
We use notation $\cH_j$ for the $j$-th row of $\cH$, where $j \in \cJ$. 
Denote by $\mbox{supp}(\bldc)$ the support of a vector $\bldc$. For each $j\in\cJ$, let $\cI_j = \mbox{supp}(\cH_j)$ and $d_j=|\cI_j|$, and let $d = \max_{j \in \cJ} \{ d_j \}$. 

Given any $\bldc \in \rrr^n$, we say that parity-check $j\in\cJ$ is \emph{satisfied} by $\bldc$ if and only if
\begin{equation}
\bldc \cH_j^T = \sum_{i\in\cI_j} c_i \cdot \cH_{j,i} = 0 \; .
\label{eq:parity_check_satisfied}
\end{equation}
For $j\in\cJ$, define the single parity-check code $\code_j$ over $\rrr$ by 
\[
\code_j = \{ ( b_i )_{i\in\cI_j} \; : \; \sum_{i\in\cI_j} b_i \cdot \cH_{j,i} = 0 \}
\]
Note that while the symbols of the codewords in $\code$ are indexed by $\cI$, the symbols of the codewords in $\code_j$ are indexed by $\cI_j$. 
We define the projection mapping for parity-check $j\in\cJ$ by
\[
\bldx_j(\bldc) = ( c_i )_{i\in\cI_j}
\]
Then, given any $\bldc \in \rrr^n$, we may say that parity-check $j\in\cJ$ is satisfied by $\bldc$ if and only if
\begin{equation}
\bldx_j(\bldc) \in \code_j \; ,
\label{eq:word_in_SPC_code}
\end{equation}
since~(\ref{eq:parity_check_satisfied}) and~(\ref{eq:word_in_SPC_code}) are equivalent. 
Also, it is easily seen that $\bldc \in \code$ if and only if all parity-checks $j\in\cJ$ are satisfied by $\bldc$. In this case we say that $\bldc$ is a \emph{codeword} of $\code$.

We shall take an example which shall be used to illustrate concepts throughout this paper. Consider the $[4,2]$ linear code over~$\rrr=\mathbb{Z}_3$ with parity-check matrix
\begin{equation}
\cH = \left( \begin{array}{cccc}
1 & 2 & 2 & 1 \\
2 & 0 & 1 & 2
\end{array} \right)
\label{eq:example_PCM}
\end{equation}
Here $\cI_1=\{ 1, 2, 3, 4 \}$, $\cI_2=\{ 1, 3, 4 \}$, and the two single parity-check codes $\code_1$ and $\code_2$, of length $d_1=4$ and $d_2=3$ respectively, are given by
\[
\code_1 = \left\{ ( b_1 \; b_2 \; b_3 \; b_4 ) \; : \; b_1 + 2 \, b_2 + 2 \, b_3 + b_4 = 0 \right\}
\]
and
\[
\code_2 = \left\{ ( b_1 \; b_3 \; b_4 ) \; : \; 2 \, b_1 + b_3 + 2 \, b_4 = 0 \right\} \; .
\]

\section{Decoding as a Linear-Programming Problem} 
\label{sec:lp}
 
Assume that the codeword $\bar{\bldc} = (\bar{c}_1, \bar{c}_2, \cdots, \bar{c}_n) 
\in \code$ has been transmitted over a $q$-ary input
memoryless channel,
and a corrupted word $\bldy = (y_1, y_2, \cdots, y_n) \in \Sigma^n$ has been received. Here $\Sigma$ denotes the set of channel output symbols; we assume that this set either has finite cardinality, or is equal to $\mathbb{R}^l$ or $\mathbb{C}^l$ for some integer $l \ge 1$. In practice, this channel may represent the combination of modulator and physical channel.
We assume hereafter that all information words are equally probable, and so all codewords are 
transmitted with equal probability. 

It was suggested in~\cite{Feldman-thesis} to represent each symbol as a binary vector of 
length $|\rrrm|$, where the entries in the vector are indicators of a symbol 
taking on a particular value. Below, we show how this representation may lead to a generalization of the framework of~\cite{Feldman} to the case of nonbinary coding. This generalization is nontrivial since, while such a representation converts the nonbinary code into a binary code, this binary code is not linear and therefore the analysis in~\cite{Feldman-thesis},~\cite{Feldman} is not directly applicable. 

For use in the following derivation, we shall define the mapping 
\[
\bldxi \; : \; \rrr \longrightarrow \{ 0, 1 \}^{q-1} \subset \mathbb{R}^{q-1} \; , 
\]
by
\[
\bldxi (\alpha) = \bldx = ( x^{(\gamma)} )_{\gamma \in \rrrm } \; , 
\]
such that, for each $\gamma \in \rrrm$,
\[
x^{(\gamma)}=\left\{ \begin{array}{cc}
1 & \textrm{ if } \gamma = \alpha \\
0 & \textrm{ otherwise. }\end{array}\right. \;
\]
We note that the mapping $\bldxi$ is one-to-one, 
and its image is the set of binary vectors of length $q-1$ with Hamming weight 0 or 1. Building on this, we also define
\[
\bldXi \; : \; \rrr^n \longrightarrow \{ 0, 1 \}^{(q-1)n} \subset \mathbb{R}^{(q-1)n} \; , 
\]
according to
\[
\bldXi(\bldc) = ( \bldxi(c_1) \; | \; \bldxi(c_2) \; | \; \cdots \; | \; \bldxi(c_n) ) \; .
\]
We note that $\bldXi$ is also one-to-one.

Now, for vectors $\bldf \in \mathbb{R}^{(q-1)n}$, we adopt the notation
\[
\bldf = ( \bldf_1 \; | \; \bldf_2 \; | \; \cdots \; | \; \bldf_n ) \; , 
\]
where
\[
\forall i \in \cI , \; \bldf_i = ( f_i^{(\alpha)} )_{\alpha \in \rrrm} \; .
\]
Also, we may use this notation to write the inverse of $\bldXi$ as
\[
\bldXi^{-1} (\bldf) = ( \bldxi^{-1}(\bldf_1), \bldxi^{-1}(\bldf_2), \cdots, \bldxi^{-1}(\bldf_n) ) \; .
\]

We also define a function $\bldlambda: \Sigma \longrightarrow \left( {\mathbb R} \cup \{\pm\infty\} \right) ^ {q-1}$ by
\[
\boldsymbol{\lambda} = ( \lambda^{(\alpha)} )_{\alpha \in \rrrm}  \; , 
\]
where, for each $y \in \Sigma$, $\alpha \in \rrrm$,
\[
\lambda^{(\alpha)}(y) = \log \left( \frac{\Prob ( y | 0 )}{ \Prob ( y| \alpha ) } \right) \; , 
\]
and $p(y|c)$ denotes the channel output probability (density) conditioned on the channel input. 
We extend this to a map on $\Sigma^n$ 
by defining $\boldsymbol{\Lambda}(\boldsymbol{y})= (\boldsymbol{\lambda}(y_1) \;|\; \boldsymbol{\lambda}(y_2) \;|\; 
\ldots \;|\; \boldsymbol{\lambda}(y_n))$.

The codeword-error-rate-optimum receiver operates according to the \emph{maximum a posteriori} (MAP) decision rule:
\begin{eqnarray*}
\hat{\bldc} & = & \arg \max_{\bldcc \in \code} \Prob ( \; \bldc \; | \; \bldy \; ) \\
 & = & \arg \max_{\bldcc \in \code} \frac{\Prob ( \; \bldy \; | \; \bldc \; ) \Prob ( \; \bldc \; )}{\Prob ( \; \bldy \; )} \; .
\end{eqnarray*}
Here $\Prob\left(\cdot\right)$ denotes probability if $\Sigma$ has finite cardinality, and probability density if $\Sigma$ has infinite cardinality.

By assumption, the \emph{a priori} probability $\Prob ( \bldc )$ is uniform over codewords, and $\Prob ( \bldy )$ is independent of $\bldc$. Therefore, the decision rule reduces to maximum-likelihood (ML) decoding:
\begin{eqnarray*}
 \hat{\bldc} & = & \arg \max_{\bldcc \in \code} \Prob ( \; \bldy \; | \; \bldc \; ) \\
 & = & \arg \max_{\bldcc \in \code} \prod_{i=1}^n \Prob ( y_i | c_i ) \\
 & = & \arg \max_{\bldcc \in \code} \sum_{i=1}^n \log ( \Prob ( y_i | c_i ) ) \\
 & = & \arg \min_{\bldcc \in \code} \sum_{i=1}^n \log \left( \frac{\Prob ( y_i | 0 )}{\Prob ( y_i | c_i )} \right) \\
 & = & \arg \min_{\bldcc \in \code} \sum_{i=1}^n \boldsymbol{\lambda} (y_i) \bldxi (c_i) ^T \\
 & = & \arg \min_{\bldcc \in \code} \boldsymbol{\Lambda} (\bldy) \bldXi (\bldc) ^T \; ,
\end{eqnarray*}
where we have made use of the memoryless property of the channel, and of the fact that if 
$c_i = \alpha\in\rrrm$, then $\boldsymbol{\lambda} (y_i) \bldxi (c_i)^T = \lambda^{(\alpha)} (y_i)$. This is then equivalent to
\[
\hat{\bldc} = \bldXi^{-1} (\hat{\bldf})
\]
where
\begin{equation}
\hat{\bldf}= \arg \min_{\bldff \in \cK(\code)} \boldsymbol{\Lambda} (\bldy) \bldf^T \; , 
\label{eq:object-function}
\end{equation}
and $\cK(\code)$ represents the convex hull of all points $\bldf\in\mathbb{R}^{(q-1)n}$ which correspond to codewords, i.e.
\[ 
\cK(\code) = H_\mathrm{conv} \big\{ \bldXi{(\bldc)} \; : 
\; \bldc\in\code \big\} \; . 
\]
Therefore it is seen that the ML decoding problem reduces to the minimization of a linear objective function (or cost function) over a polytope in $\rr^{(q-1)n}$.
The number of variables and constraints for this linear program is exponential in $n$, and it is therefore too complex for practical implementation. 
To circumvent this problem, we formulate a relaxed LP problem, as shown next. 

The solution we seek for $\bldf$ (i.e. the desired LP output) is
\begin{equation}
\bldf = \bldXi(\bar{\bldc}) = ( \bldxi(\bar{c}_1) \; | \; \bldxi(\bar{c}_2)  \; | \; \ldots  \; | \; \bldxi(\bar{c}_n) ) \; .
\label{eq:f_desired_solution}
\end{equation}
Note that (\ref{eq:f_desired_solution}) implies that the solution we seek for each $\bldf_i$ ($i \in \cI$) is an indicator function which ``points'' to the $i$-th transmitted symbol $\bar{c}_i$, i.e.
\begin{eqnarray*}
\forall i \in \cI & : & f_i^{(\alpha)} = 
\left\{ \begin{array}{cl}
1 & \textrm{if } \alpha = \bar{c}_i \\
0 & \textrm{otherwise. }\end{array}\right.  
\end{eqnarray*}
We now introduce auxiliary variables whose constraints, along with those of the elements of $\bldf$,
will form the relaxed LP problem. We denote these auxiliary variables by
\[
w_{j,\bldbb} \; \mbox{ for } \;  j\in\cJ, \bldb \in \code_j \; ,
\]
and we denote the vector containing these variables as 
\[
\bldw = \big( \bldw_j \big)_{j \in \cJ} \mbox{ where } \; \bldw_j = \big( w_{j,\bldbb} \big)_{\bldbb \in \code_j} \quad \forall j \in \cJ \; .
\]
The solution we seek for these variables is 
\begin{eqnarray}
\forall j \in \cJ & : & w_{j,\bldbb} = 
\left\{ \begin{array}{cl}
1 & \textrm{if } \bldb = \bldx_j(\bar{\bldc}) \\
0 & \textrm{otherwise. }\end{array}\right. 
\label{eq:w_desired_solution}
\end{eqnarray}
Note that the solution we seek for each $\bldw_j$ ($j \in \cJ$) is an indicator function which ``points'' to the $j$-th transmitted local codeword $\bldx_j(\bar{\bldc})$. Based on~(\ref{eq:w_desired_solution}), we impose the constraints
\begin{eqnarray}
\forall j \in \cJ, \; \forall \bldb \in \code_j,  \quad  w_{j,\bldbb} \ge 0 \; ,
\label{eq:equation-polytope-3} 
\end{eqnarray} 
and
\begin{equation}
\forall j \in \cJ, \quad \sum_{\bldbb \in \code_j} w_{j,\bldbb} = 1 \; .
\label{eq:equation-polytope-4} 
\end{equation} 

Finally, we note that the solution we seek (given by the combination of~(\ref{eq:f_desired_solution}) and~(\ref{eq:w_desired_solution})) satisfies the further constraints 
\begin{eqnarray}
&  \forall j \in \cJ, \; \forall i \in \cI_j, \; \forall \alpha \in \rrrm, \nonumber \\
& f_i^{(\alpha)} = 
\sum_{\bldbb \in \code_j, \; b_i=\alpha} w_{j,\bldbb} \; .
\label{eq:equation-polytope-5} 
\end{eqnarray} 
It is interesting to note that from~(\ref{eq:equation-polytope-3}) and~(\ref{eq:equation-polytope-4}), each vector $\bldw_j$ (for $j \in \cJ$) may be interpreted as a probability distribution for the local codeword $\bldb \in \code_j$, in which case each $\bldf_i$ (for $i \in \cI$) has a natural interpretation (via (\ref{eq:equation-polytope-5})) as the corresponding probability distribution for the $i$-th coded symbol $c_i \in \rrr$. The following example illustrates the connection~(\ref{eq:equation-polytope-5}) between $\bldf$ and $\bldw$.

\begin{example}
Consider the example $[4,2]$ code over $\mathbb{Z}_3$ defined by the parity-check matrix~(\ref{eq:example_PCM}). The second row $\cH_2$ of the parity-check matrix corresponds to the parity-check equation 
\[
2 b_1 + b_3 + 2 b_4 = 0 \; 
\] 
over $\mathbb{Z}_3$. Here $\bldb = (b_1 \;\; b_3 \;\; b_4) \in \code_2$. Assume that the values of $w_{2, \bldbb}$ for $\bldb \in \code_2$ are as given in the following table.
\medskip
\begin{center}
\begin{tabular}{l@{\hspace{0.4cm}}p{0.9cm}|p{1.1cm}p{0.9cm}|p{0.9cm}@{\hspace{0.4cm}}l}
$b_1 b_3 b_4$ & $w_{2, \bldbb}$ & $b_1 b_3 b_4$ & $w_{2, \bldbb}$ & $b_1 b_3 b_4$ & $w_{2, \bldbb}$ \\
\hline
$\;\; 000$ & $0.01$ & $\;\; 102$ & $0.05$ & $\;\; 201$ & $0.15$\\
$\;\; 011$ & $0.04$ & $\;\; 110$ & $0.07$ & $\;\; 212$ & $0.32$\\
$\;\; 022$ & $0.05$ & $\;\; 121$ & $0.08$ & $\;\; 220$ & $0.23$
\end{tabular}
\end{center}
\medskip

Then, some of the values of $f_i^{(\alpha)}$ are as follows:
\begin{eqnarray*}
f_1^{(2)} & = & 0.15 + 0.32 + 0.23 = 0.7 \; ; \\
f_2^{(1)} & = & 0.04 + 0.07 + 0.32 = 0.43 \; ; \\
f_3^{(2)} & = & 0.05 + 0.05 + 0.32 = 0.42 \; .  
\end{eqnarray*}
\medskip
\end{example}

Constraints~(\ref{eq:equation-polytope-3})-(\ref{eq:equation-polytope-5}) may be interpreted as the statement that for all $j \in \cJ$, the vector $\hat{\bldf}_j = ( \bldf_i )_{i \in \cI_j}$ lies in the convex hull $\cK(\code_j)$. 
Constraints~(\ref{eq:equation-polytope-3})-(\ref{eq:equation-polytope-5}) form a polytope which we denote by $\cQ$. The minimization of
the objective function~(\ref{eq:object-function}) over $\cQ$ forms the relaxed LP decoding problem.
This LP is defined by $O(qn + q^d m)$ variables and $O(qn + q^d m)$ constraints, and
therefore, the number of variables and of constraints scales as approximately $q^d$. 

We note that the further constraints
\begin{equation}
\forall j \in \cJ, \; \forall \bldb \in \code_j,  \quad  w_{j,\bldbb} \le 1 \; ,
\label{eq:equation-polytope-0} 
\end{equation}  
\begin{equation}
\forall i \in \cI, \; \forall \alpha\in\rrrm, \quad 0 \le f_i^{(\alpha)} \le 1 \; .  
\label{eq:equation-polytope-1} 
\end{equation} 
and
\begin{equation}
\forall i \in \cI, \quad \sum_{\alpha\in\rrrm} f_i^{(\alpha)} \le 1 \; . 
\label{eq:equation-polytope-2} 
\end{equation} 
follow from the constraints~(\ref{eq:equation-polytope-3})-(\ref{eq:equation-polytope-5}), for any 
$(\bldf, \bldw) \in \cQ$.

Now we may define the decoding algorithm, which works as follows. 
The decoder solves the LP problem of minimizing the objective 
function~(\ref{eq:object-function}) subject to the constraints~(\ref{eq:equation-polytope-3})-(\ref{eq:equation-polytope-5}).  
If $\bldf \in \{0,1\}^{(q-1)n}$, the output is the codeword $\bldXi^{-1}(\bldf)$ (we shall prove in the next section that 
this output is indeed a codeword). This codeword may then be the correct one (we call this `correct decoding') or an incorrect one (we call this `incorrect decoding'). If $\bldf \notin \{0,1\}^{(q-1)n}$, the decoder reports a `decoding failure'. Note that in this paper, we say that the decoder makes a \emph{codeword error} when the decoder output is not equal to the transmitted codeword (this could correspond to a `decoding failure', or to an `incorrect decoding').

The time complexity of an LP solver depends on the number of variables and constraints in 
the LP problem. The \emph{simplex method} is a popular and practically efficient algorithm for solving LP problems. 
However, its worst-case time complexity has been shown to be exponential in the number of variables. There are other
known LP solvers, such as solvers that are based on interior-point methods~\cite[Chapter 11]{Boyd}, which have time complexity polynomial in the number of variables and constraints. 
For more detail the reader may also refer to~\cite{Schrijver}. We note, however, that the standard iterative decoding algorithms (such as the min-sum or sum-product algorithms) have time complexity which is linear in the block length of the code, and therefore significantly outperform the LP decoder in terms of efficiency.

%-------------------------------------------------------------------------------------- 

\section{Polytope Properties} 
\label{sec:pol-properties}

The analysis in this section is a direct generalization of the results in~\cite{Feldman}. 

\begin{definition}
An \emph{integral point} in a polytope is a point with all \emph{integer} coordinates.  
\end{definition}

\begin{proposition}
$\,$
\begin{itemize} 
\item[1)]
Let $(\bldf, \bldw) \in \cQ$, and $f_i^{(\alpha)} \in \{ 0,1 \}$ for every $i \in \cI$, $\alpha\in\rrrm$. 
Then $\bldXi^{-1}(\bldf) \in \code$.
\item[2)]
Conversely, for every codeword $\bldc \in \code$, there exists 
$\bldw$ such that $(\bldf, \bldw)$ is an integral point in $\cQ$ with 
$\bldf = \bldXi(\bldc)$.
\end{itemize}
\end{proposition}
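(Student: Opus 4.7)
The plan is to handle the two directions separately. For part 1, the strategy is first to show that the integrality of $\bldf$ forces each block $\bldf_i$ to lie in the image of $\bldxi$, so that $\bldXi^{-1}(\bldf)$ is a well-defined vector $\bar\bldc\in\rrr^n$, and then to show that every parity-check row is satisfied by $\bar\bldc$. The first step follows from the inequality $\sum_{\alpha\in\rrrm}f_i^{(\alpha)}\le 1$ noted after~(\ref{eq:equation-polytope-2}): once the $f_i^{(\alpha)}$ are $0/1$-valued, each block has at most one $1$, which is exactly the image condition of $\bldxi$. So each $i$ is assigned a unique symbol $\bar c_i\in\rrr$, with $\bar c_i=\alpha\in\rrrm$ when $f_i^{(\alpha)}=1$ and $\bar c_i=0$ when $\bldf_i=\bldzero$.

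For the parity-check verification, I would fix $j\in\cJ$ and exploit the constraints~(\ref{eq:equation-polytope-3})--(\ref{eq:equation-polytope-5}) to pin down $\bldw_j$. For each $i\in\cI_j$ and each $\alpha\in\rrrm$ with $\alpha\ne\bar c_i$, the relation $f_i^{(\alpha)}=\sum_{\bldbb\in\code_j,\,b_i=\alpha}w_{j,\bldbb}=0$ together with $w_{j,\bldbb}\ge 0$ forces $w_{j,\bldbb}=0$ for every $\bldb\in\code_j$ with $b_i=\alpha$. The remaining case $\bar c_i=0$ is handled indirectly: summing~(\ref{eq:equation-polytope-5}) over $\alpha\in\rrrm$ and subtracting from the normalization~(\ref{eq:equation-polytope-4}) yields $\sum_{\bldbb:\,b_i=0}w_{j,\bldbb}=1-\sum_{\alpha\in\rrrm}f_i^{(\alpha)}$, which again kills the weights on local codewords disagreeing with $\bar c_i$ in position $i$. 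Taking the intersection of these conditions over all $i\in\cI_j$, the only $\bldb\in\code_j$ permitted to carry nonzero weight is $\bldx_j(\bar\bldc)$. Combined with~(\ref{eq:equation-polytope-4}), this gives $w_{j,\bldx_j(\bar\bldcc)}=1$, which in particular forces $\bldx_j(\bar\bldc)\in\code_j$, i.e.~check $j$ is satisfied. Since $j$ was arbitrary, $\bar\bldc\in\code$ by the equivalence of~(\ref{eq:parity_check_satisfied}) and~(\ref{eq:word_in_SPC_code}).

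For part 2, the construction is direct: given $\bldc\in\code$, set $\bldf=\bldXi(\bldc)$ and, for each $j\in\cJ$, define $w_{j,\bldbb}=1$ if $\bldb=\bldx_j(\bldc)$ and $w_{j,\bldbb}=0$ otherwise. Since $\bldc\in\code$, the local projection $\bldx_j(\bldc)$ lies in $\code_j$, so $\bldw_j$ is indexed consistently. Constraints~(\ref{eq:equation-polytope-3}) and~(\ref{eq:equation-polytope-4}) are immediate, and~(\ref{eq:equation-polytope-5}) reduces to checking that $f_i^{(\alpha)}=1$ iff $c_i=\alpha$, which is the definition of $\bldxi$. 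All coordinates are in $\{0,1\}$, so the resulting point is integral.

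The main subtlety, and the only nontrivial step, is the treatment of the symbol $0$ in part 1: because $\bldxi(0)=\bldzero$, the zero symbol is not indicated directly by any $f_i^{(\alpha)}$, so one must recover information about it from the normalization~(\ref{eq:equation-polytope-4}) rather than from the defining equality~(\ref{eq:equation-polytope-5}). Once that bookkeeping is in place, the argument is a clean integrality propagation from $\bldf$ to $\bldw$ and then to the parity checks, and the converse is essentially a verification.
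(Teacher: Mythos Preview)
Your proposal is correct and follows essentially the same approach as the paper's proof: both arguments show that integrality of $\bldf$ forces, for each $j$, all the weight in $\bldw_j$ onto the single local word $\bldx_j(\bar\bldc)$, and then invoke the normalization~(\ref{eq:equation-polytope-4}) to conclude $\bldx_j(\bar\bldc)\in\code_j$; the converse construction is identical. One small wording point: what you label ``the remaining case $\bar c_i=0$'' is really the case $b_i=0$ (the value not directly indexed by any $f_i^{(\alpha)}$), and your derived identity $\sum_{\bldbb:\,b_i=0}w_{j,\bldbb}=1-\sum_{\alpha\in\rrrm}f_i^{(\alpha)}$ is exactly what is needed regardless of whether $\bar c_i$ is zero or not---when $\bar c_i\ne 0$ it vanishes and kills the $b_i=0$ weights, and when $\bar c_i=0$ the first step already handles all $b_i\in\rrrm$.
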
 
 
\begin{proof}  
\begin{enumerate}
\item
Suppose $(\bldf, \bldw) \in \cQ$, and 
$f_i^{(\alpha)} \in \{ 0,1 \}$ for every $i \in \cI$, $\alpha\in\rrrm$.

Let $\bldc = \bldXi^{-1}(\bldf)$; by~(\ref{eq:equation-polytope-2}), this is well defined. Now, fix some $j\in\cJ$ and define $\bldt = \bldx_j(\bldc)$. Note that from these definitions it follows that for any $i\in\cI$, $\alpha\in\rrrm$, $f_i^{(\alpha)} = 1$ if and only if $t_i=\alpha$. Now let $\bldr \in \code_j$, $\bldr \ne \bldt$. Since $\bldr$ and $\bldt$ are distinct, there must exist $\alpha\in\rrrm$ and $l\in\cI_j$ such that either $r_l=\alpha$ and $t_l \ne \alpha$, or $t_l=\alpha$ and $r_l \ne \alpha$. We examine these two cases separately.
\begin{itemize}
\item
If $r_l=\alpha$ and $t_l \ne \alpha$, then by~(\ref{eq:equation-polytope-5})
\[
f_l^{(\alpha)} = 0 = \sum_{\bldbb \in \code_j , \; b_l=\alpha} w_{j, \bldbb} \; .
\]
Therefore $w_{j,\bldbb} = 0$ for all $\bldb \in \code_j$ with $b_l=\alpha$,
and in particular $w_{j,\bldrr} = 0$. 
\item
If $t_l=\alpha$ and $r_l \ne \alpha$, then by~(\ref{eq:equation-polytope-4}) and~(\ref{eq:equation-polytope-5})
\begin{eqnarray*}
0 & = & 1 - f_{l}^{(\alpha)} \\
& = & \sum_{\bldbb \in \code_j} w_{j,\bldbb} - \sum_{\bldbb \in \code_j, \; b_l=\alpha} w_{j,\bldbb} \\
& = & \sum_{\bldbb \in \code_j, \; b_l \ne \alpha} w_{j,\bldbb} \; .
\end{eqnarray*}
Therefore $w_{j,\bldbb} = 0$ for all $\bldb \in \code_j$ with $b_l \ne \alpha$,
and in particular $w_{j,\bldrr} = 0$. 
\end{itemize}

It follows that $w_{j,\bldrr} = 0$ for all $\bldr \in \code_j$, $\bldr \neq
\bldt$. But by~(\ref{eq:equation-polytope-4}) this implies that $\bldt
\in \code_j$ (and that $w_{j,\bldtt} = 1$). Applying this argument for every $j\in\cJ$ implies
$\bldc \in \code$.
\item
For $\bldc \in \code$, we let $\bldf = \bldXi(\bldc)$. For each parity-check $j\in\cJ$, we let $\bldt = \bldx_j(\bldc) \in \code_j$ and then set
\begin{eqnarray*}
\forall j \in \cJ : && w_{j,\bldbb} = 
\left\{ \begin{array}{cl}
1 & \textrm{if } \bldb = \bldt \\
0 & \textrm{otherwise. }\end{array}\right.
\end{eqnarray*}
It is easily checked that the resulting point $(\bldf, \bldw)$ is integral and satisfies
constraints~(\ref{eq:equation-polytope-3})-(\ref{eq:equation-polytope-5}).
\end{enumerate}
\end{proof}

The following proposition assures the so-called \emph{ML certificate} property. 
\begin{proposition}
Suppose that the decoder outputs a codeword $\bldc \in \code$. Then, $\bldc$ 
is the maximum-likelihood codeword. 
\end{proposition}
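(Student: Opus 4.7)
The plan is to exploit the fact that the relaxed LP polytope $\cQ$ contains (after projection onto the $\bldf$-coordinates) every integer point corresponding to a codeword of $\code$, which was established in part~2 of the previous proposition. Consequently, the minimum of the LP objective over $\cQ$ is a lower bound on the minimum of the ML cost over $\code$, and if that LP minimum happens to be attained at an integral point that decodes to a codeword, equality must hold.

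Concretely, I would proceed as follows. Let $(\bldf^\star, \bldw^\star)$ be the LP-optimal pair and suppose $\bldc = \bldXi^{-1}(\bldf^\star) \in \code$, so that $\bldf^\star = \bldXi(\bldc)$. The LP decoder by definition achieves
\[
\boldsymbol{\Lambda}(\bldy)\,\bldXi(\bldc)^T \;=\; \min_{(\bldff,\bldww)\in\cQ} \boldsymbol{\Lambda}(\bldy)\,\bldf^T .
\]
Now pick an arbitrary codeword $\bldc' \in \code$. By part~2 of the preceding proposition, there exists $\bldw'$ such that $(\bldXi(\bldc'), \bldw')$ is an integral point of $\cQ$. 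Since $(\bldf^\star, \bldw^\star)$ minimizes the objective over $\cQ$, we obtain
\[
\boldsymbol{\Lambda}(\bldy)\,\bldXi(\bldc)^T \;\le\; \boldsymbol{\Lambda}(\bldy)\,\bldXi(\bldc')^T .
\]

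Finally, I would recall from the ML derivation preceding equation~(\ref{eq:object-function}) that for any $\bldc'' \in \code$ the ML cost equals $\boldsymbol{\Lambda}(\bldy)\,\bldXi(\bldc'')^T$, and therefore the inequality above exhibits $\bldc$ as a minimizer of the ML cost over $\code$, i.e., as an ML codeword. There is no real obstacle here; the argument is a one-line containment $\{\bldXi(\bldc'') : \bldc''\in\code\} \subseteq \mathrm{proj}_\bldff\,\cQ$ together with the optimality of the LP solution. The only point worth flagging in the write-up is that $\bldXi^{-1}(\bldf^\star)$ is genuinely a codeword (not merely a vector in $\rrr^n$), which is exactly what part~1 of the previous proposition guarantees once $\bldf^\star$ is integral.
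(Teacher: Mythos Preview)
Your argument is correct and is precisely the standard ML-certificate argument the paper has in mind; indeed, the paper does not spell out a proof at all but simply notes that it is straightforward and refers to the binary case in~\cite{Feldman}. Your write-up supplies exactly the omitted details, using part~2 of the preceding proposition (codewords embed in $\cQ$) together with LP optimality, so there is nothing to add or correct.
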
 

The proof of this proposition is straightforward. The reader can refer to a similar proof for 
the binary case in~\cite{Feldman}. 

%-------------------------------------------------------------------------------------- 
 
\section{Codeword-Independent Decoder Performance\label{sec:codeword_independence}} 

In this section, we state and prove a theorem on decoder performance, namely, that under a certain symmetry condition, the probability of codeword error is independent of the transmitted codeword. The proof generalizes the corresponding proof for the binary case which may be found in~\cite{Feldman-thesis,Feldman}.

{\bf Symmetry Condition.}  

For each $\beta\in\rrr$, there exists a bijection 
\[
\tau_{\beta} \; : \; \Sigma \longrightarrow \Sigma \; , 
\]
such that the channel output probability (density) conditioned on the channel input satisfies
\begin{equation}
p(y | \alpha) = p(\tau_{\beta}(y)|\alpha - \beta) \; ,
\label{eq:symmetry_condition}
\end{equation}
$\forall y\in \Sigma$, $\alpha \in \rrr$.
When $\Sigma$ is equal to $\mathbb{R}^l$ or $\mathbb{C}^l$ for $l \ge 1$, the mapping $\tau_{\beta}$ 
is assumed to be isometric with respect to Euclidean distance in $\Sigma$, for every $\beta\in\rrr$.

\vspace{3mm}
Note that the symmetry condition above is very similar to that introduced in \cite{Hof_Sason_Shamai} which guarantees codeword-independent performance under ML decoding. 
\begin{theorem}
Under the stated symmetry condition, the probability of codeword error is independent of the transmitted codeword.
\label{htrm:equl-prob} 
\end{theorem}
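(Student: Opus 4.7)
The plan is the standard symmetry argument, adapted from the binary case in~\cite{Feldman-thesis,Feldman}: for any $\bar{\bldc} \in \code$, I would build paired bijections --- one acting on the decoding polytope $\cQ$ and one acting on the channel output space $\Sigma^n$ --- that together carry the error event under transmission of $\bar{\bldc}$ onto the error event under transmission of $\bldzero$, while preserving the underlying probability measure. The polytope bijection $\Phi_{\bar{\bldcc}} : \cQ \to \cQ$ is defined by a shift in $\rrr$: for $(\bldf,\bldw) \in \cQ$ and each $i \in \cI$, extend $\bldf_i$ by $F_i^{(0)} := 1 - \sum_{\beta \in \rrrm} f_i^{(\beta)}$ and $F_i^{(\alpha)} := f_i^{(\alpha)}$ for $\alpha \in \rrrm$, and put $f'^{(\alpha)}_i := F_i^{(\alpha + \bar{c}_i)}$ for $\alpha \in \rrrm$; for each $j \in \cJ$ and $\bldb \in \code_j$, put $w'_{j,\bldbb} := w_{j,\bldbb + \bldx_j(\bar{\bldc})}$, which is well defined since $\code_j$ is linear. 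A direct verification of~(\ref{eq:equation-polytope-3})--(\ref{eq:equation-polytope-5}) shows that $\Phi_{\bar{\bldcc}}$ maps $\cQ$ bijectively onto itself, preserves integrality, and sends the point associated with the codeword $\bar{\bldc}$ to the point associated with $\bldzero$. The accompanying channel-output bijection is $\bldy \mapsto \bldy'$ with $y'_i := \tau_{\bar{c}_i}(y_i)$.

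The main obstacle is the cost-transformation identity that couples the two bijections. Using~(\ref{eq:symmetry_condition}) together with the convention $\lambda^{(0)}(\cdot) \equiv 0$, a short computation gives the pointwise relation
\[
\lambda^{(\alpha)}(y'_i) = \lambda^{(\alpha + \bar{c}_i)}(y_i) - \lambda^{(\bar{c}_i)}(y_i), \qquad \alpha \in \rrr.
\]
Plugging this into the LP objective, re-indexing the inner sum via $\gamma = \alpha + \bar{c}_i$, and using $\sum_{\gamma \in \rrr} F_i^{(\gamma)} = 1$, I would obtain
\[
\boldsymbol{\Lambda}(\bldy')(\bldf')^T = \boldsymbol{\Lambda}(\bldy)\bldf^T - \sum_{i \in \cI} \lambda^{(\bar{c}_i)}(y_i),
\]
whose second term is independent of $\bldf$. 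Consequently $\Phi_{\bar{\bldcc}}$ carries the set of LP optima for input $\bldy$ bijectively onto the set of LP optima for input $\bldy'$.

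Finally, the symmetry condition (and the isometry assumption in the continuous case, so that the Jacobian of $\tau_\beta$ is unity) gives $p(y'_i \mid 0) = p(y_i \mid \bar{c}_i)$, so the map $\bldy \mapsto \bldy'$ is measure preserving and transports the conditional distribution of $\bldy$ given $\bar{\bldc}$ onto the conditional distribution of $\bldy'$ given $\bldzero$. Since $\Phi_{\bar{\bldcc}}$ preserves both integrality and non-integrality of $\bldf$ and sends $\bldXi(\bar{\bldc})$ to $\bldXi(\bldzero)$, the decoder outputs $\bar{\bldc}$ on input $\bldy$ exactly when it outputs $\bldzero$ on input $\bldy'$, and declares a decoding failure on $\bldy$ exactly when it does so on $\bldy'$. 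Equating the probabilities of these events yields the claim.
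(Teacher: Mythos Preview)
Your proposal is correct and follows essentially the same approach as the paper: the same polytope bijection (shifting indices by $\bar{c}_i$, with the auxiliary variables shifted by $\bldx_j(\bar{\bldc})$), the same channel-output bijection $y_i \mapsto \tau_{\bar{c}_i}(y_i)$, the same cost-shift identity (the paper writes it as $\boldsymbol{\Lambda}(\bldy)\bldf^T - \boldsymbol{\Lambda}(\bldy)\bldXi(\bldc)^T = \boldsymbol{\Lambda}(\tilde{\bldy})\tilde{\bldf}^T$, which is your identity since $\boldsymbol{\Lambda}(\bldy)\bldXi(\bldc)^T = \sum_i \lambda^{(\bar{c}_i)}(y_i)$), and the same measure-preservation argument via the symmetry condition and isometry of $\tau_\beta$. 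Your use of the extension $F_i^{(0)}$ together with the convention $\lambda^{(0)}\equiv 0$ streamlines the bookkeeping where the paper instead does explicit case splits on $\alpha = -\beta$ versus $\alpha \neq -\beta$, but the underlying argument is identical.
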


\begin{proof}
We shall prove the theorem for the case where $\Sigma$ has infinite cardinality; the case of discrete $\Sigma$ may be handled similarly. Fix some codeword $\bldc \in \code$, $\bldc \ne \zeros$. We wish to prove that
\[
\mbox{Pr}(\mbox{Err} \; | \; \bldc) = \mbox{Pr}(\mbox{Err} \; | \; \zeros) \; , 
\]
where $\mbox{Pr}(\mbox{Err} \; | \; \bldc)$ denotes the probability of codeword error given that the codeword $\bldc$ was transmitted. 

Now
\[
\mbox{Pr}(\mbox{Err} \; | \; \bldc) = \mbox{Pr}(\bldy \in B(\bldc) \; | \; \bldc) \; , 
\]
where
\begin{equation*}
\begin{split}
B(\bldc) = \{ \bldy \in \Sigma^n \; : \; \exists (\bldf, \bldw) & \in \cQ, \bldf \ne \bldXi(\bldc) \\
 & \mbox{with } \boldsymbol{\Lambda}(\bldy) \bldf ^T \le \boldsymbol{\Lambda} (\bldy) \bldXi(\bldc) ^T \} \; .
\end{split}
\end{equation*}
Here $B(\bldc)$ is the set of all received words which may cause codeword error, given that $\bldc$ was transmitted. Recall that the elements of $\boldsymbol{\Lambda} (\bldy)$ are given by
\begin{equation}
\lambda^{(\alpha)} (y_i) = \log \left( \frac{\Prob ( y_i | 0 )}{ \Prob ( y_i | \alpha ) } \right) \; ,
\label{eq:lambda_def}
\end{equation}
for $i\in\cI$, $\alpha\in\rrrm$. Also
\[
\mbox{Pr}(\mbox{Err} \; | \; \zeros) = \mbox{Pr}(\bldy \in B(\zeros) \; | \; \zeros)
\]
where
\begin{equation*}
\begin{split}
B(\zeros) = \{ \tilde{\bldy} \in \Sigma^n \; : \; \exists (\tilde{\bldf}, \tilde{\bldw}) & \in \cQ, \tilde{\bldf} \ne \bldXi(\zeros) \\
 & \mbox{with } \boldsymbol{\Lambda}(\tilde{\bldy}) \tilde{\bldf} ^T \le \boldsymbol{\Lambda}(\tilde{\bldy}) \bldXi(\zeros) ^T \} \; .
\end{split}
\end{equation*}

So we write
\begin{equation}
\mbox{Pr}(\mbox{Err} \; | \; \bldc) = \int_{\bldyy \in B(\bldcc)} \Prob ( \; \bldy \; | \; \bldc \; ) \; d\bldy
\label{eq:integral_1}
\end{equation}
and
\begin{equation}
\mbox{Pr}(\mbox{Err} \; | \; \zeros) = \int_{\tilde{\bldyy} \in B(\zeross)} \Prob ( \; \tilde{\bldy} \; | \; \zeros \; ) \; d\tilde{\bldy} \; .
\label{eq:integral_2}
\end{equation}
Now, setting $\alpha=\beta$ in the symmetry condition~(\ref{eq:symmetry_condition}) yields
\begin{equation}
p(y|\beta) = p(\tau_{\beta}(y)|0)
\label{eq:equality_in_symmetry_condition}
\end{equation}
for any $y \in \Sigma$, $\beta \in \rrr$.

We now define $\bldG : \Sigma^n \longrightarrow \Sigma^n$ and $\tilde{\bldy}$ as follows. 
\[
\tilde{\bldy} = \bldG(\bldy) \quad \mbox{ s.t. } \quad \forall i \in \cI: \;  \tilde{y}_i = \tau_{\beta}(y_i)
\mbox{ where } \beta = c_i \; . 
\]
We note that $\bldG$ is a bijection from the set $\Sigma^n$ to itself, and that if $\bldy, \bldz \in \Sigma^n$ and 
$\beta = c_i$ then 
\[
\| y_i - z_i \| ^2 = \| \tau_{\beta}(y_i) - \tau_{\beta}(z_i) \| ^2
\] 
and so
\[
\| \bldG(\bldy) - \bldG(\bldz) \| ^2 = \| \bldy - \bldz \| ^2
\] 
i.e. $\bldG$ is isometric with respect to Euclidean distance in $\Sigma^n$.

We prove that the integral~(\ref{eq:integral_1}) may be transformed to~(\ref{eq:integral_2}) via the substitution $\tilde{\bldy} = \bldG(\bldy)$. First, we have
\begin{eqnarray*}
\Prob ( \; \bldy \; | \; \bldc \; ) & = & \prod_{i\in\cI} \Prob ( y_i | c_i ) \\
 & = & \prod_{\beta\in\rrr} \prod_{i\in\cI , c_i=\beta} \Prob ( y_i | \beta ) \\
 & = & \prod_{\beta\in\rrr} \prod_{i\in\cI, c_i=\beta} \Prob ( \tau_{\beta}(y_i) | 0 ) \\
 & = & \prod_{\beta\in\rrr} \prod_{i\in\cI, c_i=\beta} \Prob ( \tilde{y}_i | 0 ) \\
 & = & \prod_{i\in\cI} \Prob ( \tilde{y}_i | 0 ) \\
 & = & \Prob ( \; \tilde{\bldy} \; | \; \zeros \; ) \; .
\end{eqnarray*}
Since $\bldG$ is isometric with respect to Euclidean distance in $\Sigma^n$, it follows that the Jacobian determinant of the transformation is equal to unity. Therefore, to complete the proof, we need only show that 
\[
\bldy \in B(\bldc) \mbox{ if and only if } \tilde{\bldy} \in B(\zeros) \; .
\]

We begin by relating the elements of $\boldsymbol{\Lambda} (\bldy)$ to the elements of 
$\boldsymbol{\Lambda}(\tilde{\bldy})$. 
Let $i\in\cI$, $\alpha\in\rrrm$. Suppose $c_i=\beta \in \rrr$. We then have
\begin{eqnarray*}
\lambda^{(\alpha)} (y_i) & = & \log \left( \frac{\Prob ( y_i | 0 )}{ \Prob ( y_i | \alpha ) } \right) \\
 & = & \log \left( \frac{\Prob ( \tau_{\beta}(y_i) | -\beta )}{ \Prob ( \tau_{\beta}(y_i) | \alpha-\beta ) } \right) \\
 & = & \log \left( \frac{\Prob ( \tilde{y}_i | -\beta )}{ \Prob ( \tilde{y}_i | \alpha-\beta ) } \right) \; .
\end{eqnarray*}
This yields
\[
\lambda^{(\alpha)}(y_i)  = \left\{ \begin{array}{ccc}
\lambda^{(\alpha)}(\tilde{y_i}) & \textrm{ if } \beta=0 \\
-\lambda^{(-\alpha)}(\tilde{y_i}) & \textrm{ if } \alpha=\beta \\
\lambda^{(\alpha-\beta)} (\tilde{y_i}) - \lambda^{(-\beta)} (\tilde{y_i}) 
& \textrm{ otherwise. }\end{array}\right.
\]

Next, for any point $(\bldf,\bldw)\in\cQ$ we define a new point $(\tilde{\bldf},\tilde{\bldw})$ as follows. For $\beta = c_i$ and all $i\in\cI$, $\alpha \in \rrrm$, 
\begin{equation}
\tilde{f}_i^{(\alpha)} = \left\{ \begin{array}{ccc}
1 - \sum_{\gamma\in\rrrm} f_i^{(\gamma)} & \textrm{ if } \alpha=-\beta \\
f_i^{(\alpha+\beta)} & \textrm{ otherwise. }\end{array}\right. 
\label{eq:f_to_ftilde}
\end{equation}
For all $j\in\cJ$, $\bldr \in \code_j$ we define
\[
\tilde{w}_{j,\bldrr} = w_{j,\bldbb}
\]
where
\[
\bldb = \bldr + \bldx_j(\bldc) \; .
\]

Next we prove that for every $(\bldf,\bldw) \in \cQ$, the new point $(\tilde{\bldf},\tilde{\bldw})$ lies in $\cQ$ and thus is a feasible solution for the LP. Constraints~(\ref{eq:equation-polytope-3}) and~(\ref{eq:equation-polytope-4}) obviously hold from the definition of $\tilde{\bldw}$. To verify~(\ref{eq:equation-polytope-5}), we let $j\in\cJ$, $i\in\cI_j$ and $\alpha \in \rrrm$. We also let $\beta = c_i$. We now check two cases: 
\begin{itemize}
\item
If $\alpha=-\beta$,
\begin{eqnarray*}
\tilde{f}_i^{(\alpha)} & = & 1 - \sum_{\gamma\in\rrrm} f_i^{(\gamma)} \\
 & = & \sum_{\bldbb \in \code_j} w_{j,\bldbb} - \sum_{\gamma\in\rrrm} \sum_{\bldbb \in \code_j, \; b_i=\gamma} w_{j,\bldbb} \\
 & = & \sum_{\bldbb \in \code_j, \; b_i=0} w_{j,\bldbb} \\
 & = & \sum_{\bldrr \in \code_j, \; r_i=\alpha} \tilde{w}_{j,\bldrr} \; .
\end{eqnarray*}
\item
If $\alpha \ne -\beta$,
\begin{eqnarray*}
\tilde{f}_i^{(\alpha)} = f_i^{(\alpha+\beta)} & = & \sum_{\bldbb \in \code_j, \; b_i=\alpha+\beta} w_{j,\bldbb} \\
 & = & \sum_{\bldrr \in \code_j, \; r_i=\alpha} \tilde{w}_{j,\bldrr} \; .
\end{eqnarray*}
\end{itemize}
Therefore $(\tilde{\bldf},\tilde{\bldw}) \in \cQ$, i.e. $(\tilde{\bldf},\tilde{\bldw})$ is a feasible solution for the LP. We write $(\tilde{\bldf},\tilde{\bldw}) = \bldL(\bldf,\bldw)$. We also note that the mapping $\bldL$ is a bijection from $\cQ$ to itself; this is easily shown by verifying the inverse
\begin{equation}
f_i^{(\alpha)} = \left\{ \begin{array}{ccc}
1 - \sum_{\gamma\in\rrrm} \tilde{f}_i^{(\gamma)} & \textrm{ if } \alpha=\beta \\
\tilde{f}_i^{(\alpha-\beta)} & \textrm{ otherwise }\end{array}\right. 
\label{eq:ftilde_to_f}
\end{equation}
for all $i\in\cI$, $\alpha \in \rrrm$, and 
\[
w_{j,\bldbb} = \tilde{w}_{j,\bldrr} 
\]
where
\[
\bldr = \bldb - \bldx_j(\bldc)
\]
for all $j\in\cJ$, $\bldb \in \code_j$.

We now prove that for every $(\bldf,\bldw) \in \cQ$, $(\tilde{\bldf},\tilde{\bldw}) = \bldL(\bldf,\bldw)$ satisfies
\begin{equation}
\boldsymbol{\Lambda} (\bldy) \bldf ^T - \boldsymbol{\Lambda}(\bldy) \bldXi(\bldc) ^T 
= \boldsymbol{\Lambda}(\tilde{\bldy}) \tilde{\bldf} ^T - \boldsymbol{\Lambda}(\tilde{\bldy}) \bldXi(\zeros) ^T \; .
\label{eq:relative_cost_fn_1}
\end{equation}
We achieve this by proving
\begin{equation}
\boldsymbol{\lambda} (y_i) \bldf_i ^T - \boldsymbol{\lambda} (y_i) \bldxi(c_i) ^T = \boldsymbol{\lambda}(\tilde{y_i}) \tilde{\bldf}_i ^T - \boldsymbol{\lambda} (\tilde{y_i}) \bldxi(0) ^T  
\label{eq:relative_cost_fn_2}
\end{equation}
for every $i\in \cI$. We may then obtain~(\ref{eq:relative_cost_fn_1}) by summing~(\ref{eq:relative_cost_fn_2}) over $i\in\cI$. Let $\beta = c_i$. We consider two cases:
\begin{itemize}
\item
If $\beta=0$,~(\ref{eq:relative_cost_fn_2}) becomes
\[
\boldsymbol{\lambda} (y_i) \bldf_i ^T = \boldsymbol{\lambda}(\tilde{y_i}) \tilde{\bldf}_i ^T 
\]
which holds since $\lambda^{(\alpha)}(\tilde{y_i}) = \lambda^{(\alpha)}(y_i)$ and $\tilde{f}_i^{(\alpha)} = f_i^{(\alpha)}$ for all $\alpha\in\rrrm$ in this case.
\item
If $\beta \ne 0$,
\begin{equation*}
\begin{split}
& \boldsymbol{\lambda} (y_i) \bldf_i ^T - \boldsymbol{\lambda} (y_i) \bldxi(c_i) ^T \\
& = \sum_{\gamma\in\rrrm} \lambda^{(\gamma)} (y_i) f_i^{(\gamma)} - \lambda^{(\beta)} (y_i) \\
 & = \sum_{\substack{\gamma\in\rrrm \\ \gamma \ne \beta}} \left( \lambda^{(\gamma-\beta)} (\tilde{y_i}) 
 - \lambda^{(-\beta)} (\tilde{y_i}) \right)  f_i^{(\gamma)} - \lambda^{(-\beta)} (\tilde{y_i}) f_i^{(\beta)} 
 + \lambda^{(-\beta)} (\tilde{y_i}) \\
 & = \sum_{\substack{\alpha\in\rrrm \\ \alpha \ne -\beta}} \lambda^{(\alpha)} (\tilde{y_i}) f_i^{(\alpha+\beta)} 
 + \lambda^{(-\beta)} (\tilde{y_i}) \left( 1 - \sum_{\gamma\in\rrrm} f_i^{(\gamma)} \right) \\
 & = \sum_{\alpha\in\rrrm} \lambda^{(\alpha)} (\tilde{y_i}) \tilde{f}_i^{(\alpha)} \\
 & = \boldsymbol{\lambda} (\tilde{y_i}) \tilde{\bldf}_i ^T - \boldsymbol{\lambda} (\tilde{y_i}) \bldxi(0)^T
\end{split}
\end{equation*}
\end{itemize}
where we have made use of the substitution $\alpha = \gamma - \beta$ in the third line. Therefore~(\ref{eq:relative_cost_fn_2}) holds, proving~(\ref{eq:relative_cost_fn_1}).

Finally, we note that it is easy to show, using~(\ref{eq:f_to_ftilde}) and~(\ref{eq:ftilde_to_f}), that $\bldf = \bldXi(\bldc)$ if and only if $\tilde{\bldf} = \bldXi(\zeros)$.
Putting together these results, we may make the following statement. Suppose we are given $\bldy, \tilde{\bldy} \in \Sigma^n$ with $\tilde{\bldy} = \bldG(\bldy)$. Then the point $(\bldf,\bldw) \in \cQ$ satisfies $\bldf \ne \bldXi(\bldc)$ and $\boldsymbol{\Lambda}(\bldy) \bldf ^T \le \boldsymbol{\Lambda} (\bldy) \bldXi(\bldc) ^T$ if and only if the point $(\tilde{\bldf},\tilde{\bldw}) = \bldL(\bldf,\bldw) \in \cQ$ satisfies $\tilde{\bldf} \ne \bldXi(\zeros)$ and $\boldsymbol{\Lambda}(\tilde{\bldy}) \tilde{\bldf} ^T \le \boldsymbol{\Lambda}(\tilde{\bldy}) \bldXi(\zeros) ^T$. This statement, along with the fact that both $\bldG$ and $\bldL$ are bijective, proves that 
\[
\bldy \in B(\bldc) \mbox{ if and only if } \tilde{\bldy} \in B(\zeros) \; .
\]
\end{proof}

We next provide, with details, some examples of modulator-channel combinations for which the symmetry conditions hold.

\begin{example} 
\emph{Discrete memoryless $q$-ary symmetric chan\-nel.}
Here we denote the ring elements by $\rrr = \{ a_0, a_1, \cdots, a_{q-1} \}$. Also $\Sigma = \{ s_0, s_1, \cdots, s_{q-1} \}$, 
where the channel output probability conditioned on the channel input satisfies, 
for each $t,k\in\left\{0,1,\cdots, q-1 \right\}$,
\[
\Prob ( s_t | a_k ) = \left\{ \begin{array}{cc}
(1-p) & \textrm{ if } t = k \\
p/(q-1) & \textrm{ otherwise } 
\end{array}\right. \; , 
\]
where $p$ represents the probability of transmission error.
Here we may define the mapping $\tau_\beta$ for each $\beta\in\rrr$ according to 
\[
\tau_\beta(s_t) = s_{\ell} \textrm{ where } a_{\ell} = a_t - \beta 
\]
for all $t\in\left\{0,1,\cdots q-1 \right\}$. It is easy to check that these mappings are bijective and satisfy the symmetry condition.
\end{example}

\begin{example}
\emph{Orthogonal modulation over AWGN.}
Here $\Sigma = \mathbb{R}^q$, and denoting the ring elements by $\rrr = \{ a_0, a_1, \cdots, a_{q-1} \}$, the modulation mapping may be written without loss of generality as
\[
\cM \; : \; \rrr \longrightarrow \mathbb{R}^q \; , 
\]
such that, for each $k = 0, 1, \cdots, q-1$,
\[
\cM (a_k) = \bldx = ( x^{(0)}, x^{(1)}, \cdots, x^{(q-1)} ) \; , 
\]
where
\[
x^{(t)}=\left\{ \begin{array}{cc}
1 & \textrm{ if } t = k \\ 
0 & \textrm{ otherwise. }\end{array}\right.
\]
Here we may define the mapping $\tau_\beta$ for each $\beta\in\rrr$ according to (where $\bldy  = ( y^{(0)}, y^{(1)}, \cdots, y^{(q-1)} ) \in \mathbb{R}^q$, $\bldz = ( z^{(0)}, z^{(1)}, \cdots, z^{(q-1)} ) \in \mathbb{R}^q)$ 
\[ 
\tau_\beta(\bldy) = \bldz
\]
such that for each $l\in\left\{0,1,\cdots, q-1 \right\}$, 
\[
z^{(\ell)} = y^{(k)} \textrm{ where } a_k = a_l + \beta.
\]
It is easily checked that these mappings are bijective and isometric, and satisfy the symmetry condition.
\end{example}

\begin{example}
\emph{$q$-ary PSK modulation over AWGN.\label{ex:qary_PSK_AWGN}}

Here $\Sigma = \mathbb{C}$, and again denoting the ring elements by $\rrr = \{ a_0, a_1, \cdots, a_{q-1} \}$, the modulation mapping may be written without loss of generality as
\[
M \; : \; \rrr \mapsto \mathbb{C} 
\]
such that
\begin{equation}
M (a_k) = \exp \left( \frac{\imath 2 \pi k}{q} \right) 
\label{eq:modulation_mapping}
\end{equation}
for $k = 0, 1, \cdots, q-1$ (here $\imath=\sqrt{-1}$). Here~(\ref{eq:equality_in_symmetry_condition}), together with the rotational symmetry of the $q$-ary PSK constellation, motivates us to define, for every $\beta= a_k \in\rrr$,
\begin{equation}
\tau_\beta(x) = \exp \left( \frac{-\imath 2 \pi k}{q} \right) \cdot x \qquad \forall x \in \mathbb{C} 
\label{eq:tau_defn_PSK}
\end{equation}
Next, we also impose the condition that $\rrr$ under addition is a cyclic group. To see why we impose this condition, let $\alpha = a_k \in\rrr$ and $\beta = a_l \in\rrr$. By the symmetry condition we must have
\[
p(y_i|\alpha+\beta) = p(\tau_{\alpha+\beta}(y_i)|0) 
\]
and also
\begin{eqnarray*}
p(y_i|\alpha+\beta) & = & p(\tau_{\beta}(y_i)|\alpha) \\
& = & p(\tau_{\alpha}(\tau_{\beta}(y_i))|0) \; .
\end{eqnarray*}
In order to equate these two expressions, we impose the condition $\tau_{\alpha+\beta}(x) = \tau_{\alpha}(\tau_{\beta}(x))$ for all $x \in \mathbb{C}$, $\alpha,\beta \in\rrr$. Letting $\alpha+\beta = a_p\in\rrr$, and using~(\ref{eq:tau_defn_PSK}) yields
\[
\exp \left( \frac{- \imath 2 \pi k}{q} \right) \cdot \exp \left( \frac{- \imath 2 \pi l}{q} \right) = \exp \left( \frac{- \imath 2 \pi p}{q} \right)
\] 
and thus $p\equiv (k+l) \mod q$. 

Therefore, we must have 
\begin{equation}
a_k + a_l = a_{(k+l) \!\!\!\! \mod q}
\label{eq:cyclic_group}
\end{equation} 
for all $a_k, a_l \in \rrr$. This implies that $\rrr$, under addition, is a cyclic group.

It is easy to check that the condition that $\rrr$ under addition is cyclic, encapsulated by~(\ref{eq:cyclic_group}), along with the modulation mapping (\ref{eq:modulation_mapping}), satisfies the symmetry condition, where the appropriate mappings $\tau_{\beta}$ are given by~(\ref{eq:tau_defn_PSK}). This means that codeword-independent performance is guaranteed for such systems using nonbinary codes with PSK modulation. This applies to AWGN, flat fading wireless channels, and OFDM systems transmitting over frequency selective channels with sufficiently long cyclic prefix.

\end{example}

\section{Linear Programming Pseudocodewords} 
\label{sec:lp-pseudocodewords}
\begin{definition}
A \emph{linear-programming pseudocodeword} (LP pseudocodeword) of the code $\code$, with parity-check matrix $\cH$, is a pair $(\bldh, \bldz)$ where $\bldh \in \mathbb{R}^{(q-1)n}$ and 
\[
\bldz = \big( \; z_{j,\bldbb} \; \big)_{j \in \cJ, \bldbb \in \code_j} \; ,  
\]
where $z_{j,\bldbb}$ is a nonnegative integer for all $j\in\cJ$, $\bldb \in \code_j$, such that the following constraints are satisfied:
\begin{eqnarray}
&  \forall j \in \cJ, \; \forall i \in \cI_j, \; \forall \alpha\in\rrrm, \nonumber \\
& h_i^{(\alpha)} = 
\sum_{\bldbb \in \code_j, \; b_i=\alpha} z_{j,\bldbb} \; ,
\label{eq:LP_PCW_1} 
\end{eqnarray} 
and
\begin{equation}
\forall j \in \cJ, \quad \sum_{\bldbb \in \code_j} z_{j,\bldbb} = M \; ,
\label{eq:LP_PCW_2} 
\end{equation} 
where $M$ is a nonnegative integer independent of $j$.
\end{definition}
It follows from~(\ref{eq:LP_PCW_1}) that $h_i^{(\alpha)}$ is a nonnegative integer for all $i\in\cI$, $\alpha\in\rrrm$. We note that the further constraints
\begin{equation}
\forall j \in \cJ, \; \forall \bldb \in \code_j,  \quad  z_{j,\bldbb} \le M \; ,
\label{eq:LP_PCW_3} 
\end{equation}  
\begin{equation}
\forall i \in \cI, \; \forall \alpha\in\rrrm, \quad 0 \le h_i^{(\alpha)} \le M \; ,
\label{eq:LP_PCW_4} 
\end{equation} 
and
\begin{equation}
\forall i \in \cI, \quad \sum_{\alpha\in\rrrm} h_i^{(\alpha)} \le M \; ,
\label{eq:LP_PCW_5} 
\end{equation} 
follow from the constraints~(\ref{eq:LP_PCW_1}) and~(\ref{eq:LP_PCW_2}). 

For each $i\in\cI$, we also define
\begin{equation}
h_i^{(0)} = M - \sum_{\alpha\in\rrrm} h_i^{(\alpha)} \; .  
\label{eq:h_sum_equals_M}
\end{equation}
By~(\ref{eq:LP_PCW_5}), $h_i^{(0)}$ is a nonnegative integer for all $i\in\cI$. Now, for any $j\in\cJ$, $i\in\cI_j$ we have 
\begin{eqnarray*}
h_i^{(0)} & = & M - \sum_{\alpha\in\rrrm} h_i^{(\alpha)} \\
 & = & \sum_{\bldbb \in \code_j} z_{j,\bldbb} - \sum_{\alpha\in\rrrm} \sum_{\bldbb \in \code_j, b_i=\alpha} z_{j,\bldbb} \\
 & = & \sum_{\bldbb \in \code_j, b_i=0} z_{j,\bldbb} 
\end{eqnarray*}
where we have used~(\ref{eq:LP_PCW_1}) and~(\ref{eq:LP_PCW_2}).

Corresponding to the LP pseudocodeword $(\bldh, \bldz)$ defined above, we define the \emph{normalized LP 
pseudocodeword} as the vector obtained by scaling of $(\bldh, \bldz)$ by a factor $1/M$. 
We also define the $n \times q$ \emph{LP pseudocodeword matrix} 
\[
\mathsf{H} = \Big( h_i^{(\alpha)} \Big)_{i \in \cI; \, \alpha\in\rrr}  \; .
\] 
The \emph{normalized LP pseudocodeword matrix} is defined as $(1/M) \cdot \mathsf{H}$.

Note that if we interpret $\{ z_{j,\bldbb} / M \}$ (for each $j \in \cJ$) as a probability distribution for the local codeword $\bldb \in \code_j$, then the $i$-th row of the normalized LP pseudocodeword matrix (for $i \in \cI$) can be interpreted as the corresponding probability distribution for the $i$-th coded symbol $c_i \in \rrr$. This idea of interpretating pseudocodewords as probability distributions was used in \cite{FKKR} for the binary case.  
\begin{example}
As an illustration, we provide an LP pseudocodeword for the example $[4,2]$ code over $\mathbb{Z}_3$ defined by the parity-check matrix~(\ref{eq:example_PCM}). The reader may check that 
\begin{equation}
(h^{(1)}_1, h^{(1)}_2, h^{(1)}_3, h^{(1)}_4) = (2 \; 2 \; 2 \; 2)
\label{eq:ex_LP_PCW_1}
\end{equation}
and 
\begin{equation}
(h^{(2)}_1, h^{(2)}_2, h^{(2)}_3, h^{(2)}_4) = (2 \; 2 \; 0 \; 0)
\label{eq:ex_LP_PCW_2}
\end{equation}
together with  
\begin{equation}
z_{1,\bldbb} = \left\{ \begin{array}{cc}
2 & \textrm{ if } \bldb = ( 2 \; 1 \; 1 \; 0 ) \\
2 & \textrm{ if } \bldb = ( 1 \; 2 \; 0 \; 1 ) \\
0 & \textrm{ otherwise, }\end{array}\right. 
\label{eq:ex_LP_PCW_3}
\end{equation}
and
\begin{equation}
z_{2,\bldbb} = \left\{ \begin{array}{cc}
2 & \textrm{ if } \bldb = ( 2 \; 0 \; 1 ) \\
2 & \textrm{ if } \bldb = ( 1 \; 1 \; 0 ) \\
0 & \textrm{ otherwise, }\end{array}\right.
\label{eq:ex_LP_PCW_4}
\end{equation}
satisfy~(\ref{eq:LP_PCW_1}) and~(\ref{eq:LP_PCW_2}), where $M=4$ in~(\ref{eq:LP_PCW_2}). 
We also obtain from~(\ref{eq:h_sum_equals_M})
\begin{equation*}
(h^{(0)}_1, h^{(0)}_2, h^{(0)}_3, h^{(0)}_4) = (0 \; 0 \; 2 \; 2) \; . 
\end{equation*}
Therefore~(\ref{eq:ex_LP_PCW_1})-(\ref{eq:ex_LP_PCW_4}) define an LP pseudocodeword, with pseudocodeword matrix  
\begin{equation}
\mathsf{H} = \left( \begin{array}{ccc}
0 & 2 & 2 \\
0 & 2 & 2 \\
2 & 2 & 0 \\
2 & 2 & 0 
\end{array} \right) \; .
\label{eq:ex_LP_PCW_matrix}
\end{equation}
The corresponding normalized LP pseudocodeword matrix is then given by 
\begin{equation}
\frac{1}{4} \cdot \mathsf{H} = \left( \begin{array}{ccc}
0 & \half & \half \\
0 & \half & \half \\
\half & \half & 0 \\
\half & \half & 0 
\end{array} \right) \; .
\label{eq:ex_LP_NPCW_matrix}
\end{equation}
Here the probabilistic interpretation of this normalized LP pseudocodeword matrix corresponds to an equiprobable distribution of symbols from $\{ 1,2 \}$ for the first two symbols in the codeword, and an equiprobable distribution of symbols from $\{ 0,1 \}$ for the last two symbols in the codeword.
\end{example}

\begin{theorem}
Assume that the all-zero codeword was transmitted.
\begin{enumerate}
\item
If the LP decoder makes a codeword error, then there exists some LP
pseudocodeword $(\bldh, \bldz)$, $\bldh \neq \zeros$, such that $\boldsymbol{\Lambda} (\bldy) \bldh ^T \le 0$.
\item
If there exists some LP pseudocodeword $(\bldh, \bldz)$, $\bldh \neq \zeros$, such that $\boldsymbol{\Lambda} (\bldy)
\bldh ^T < 0$, then the LP decoder makes a codeword error.  
\end{enumerate}
\end{theorem}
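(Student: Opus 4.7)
The plan is to exploit the one-to-one correspondence between LP pseudocodewords and rational points of the polytope $\cQ$: comparing (\ref{eq:LP_PCW_1})--(\ref{eq:LP_PCW_2}) with (\ref{eq:equation-polytope-3})--(\ref{eq:equation-polytope-5}) shows that $(\bldh,\bldz)$ is an LP pseudocodeword with parameter $M$ if and only if $(\bldf,\bldw) := (\bldh,\bldz)/M$ is a point of $\cQ$ whose coordinates are rationals with common denominator dividing $M$. I also use, throughout, the observation that since $\bldxi(0)=\zeros$ we have $\bldXi(\zeros)=\zeros$, so the objective at the all-zero codeword is $\boldsymbol{\Lambda}(\bldy)\bldXi(\zeros)^T = 0$.

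Part 2 is the easy direction. Given the hypothesized pseudocodeword, set $(\bldf,\bldw) := (\bldh,\bldz)/M \in \cQ$; its cost is $(1/M)\boldsymbol{\Lambda}(\bldy)\bldh^T < 0$, strictly below the cost of $\bldXi(\zeros)$. Hence the LP minimum is strictly negative, so the decoder's optimum $\bldf^*$ cannot equal $\zeros = \bldXi(\zeros)$; the decoder therefore either returns a nonzero codeword or declares a failure, and in either case a codeword error occurs.

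For part 1, assume the decoder makes a codeword error. Since $\bldXi(\zeros)$ is feasible with objective $0$, the LP minimum is at most $0$. The decoder's output $(\bldf^*,\bldw^*)$ is an optimum of $\cQ$ with $\bldf^* \ne \zeros$: otherwise $\bldf^* = \bldXi(\zeros)$ would be integral and the decoder would output the all-zero codeword, contradicting the assumption of error. Because $\cQ$ is bounded (all coordinates lie in $[0,1]$ by (\ref{eq:equation-polytope-0})--(\ref{eq:equation-polytope-2})), the optimal face is a bounded polytope, and $(\bldf^*,\bldw^*)$ is a convex combination of its vertices. Since $\bldf^* \ne \zeros$, at least one such vertex $(\bldf^\dagger,\bldw^\dagger)$ must satisfy $\bldf^\dagger \ne \zeros$; this vertex is also a vertex of $\cQ$, and hence has rational coordinates because $\cQ$ is defined by constraints with integer coefficients. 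Letting $M$ be a positive integer common denominator, $(\bldh,\bldz) := M(\bldf^\dagger,\bldw^\dagger)$ is an integer vector satisfying the LP pseudocodeword constraints, with $\bldh \ne \zeros$ and $\boldsymbol{\Lambda}(\bldy)\bldh^T = M\,\boldsymbol{\Lambda}(\bldy)(\bldf^\dagger)^T \le 0$.

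The main technical point is the extraction in part 1 of a nonzero \emph{vertex} of the optimal face, since rationality (and therefore the clearing of denominators that produces the integer vector $(\bldh,\bldz)$) is guaranteed only for vertices, not for an arbitrary optimum returned by the LP solver. The convex-combination argument handles this, leveraging the fact that if $\bldXi(\zeros)$ were the unique optimal vertex the decoder would return it and would not err. The remaining ingredients---rationality of vertices of a rational polytope and the scaling identification of pseudocodewords with rational polytope points---are standard.
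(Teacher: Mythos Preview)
Your proof is correct and follows essentially the same scale-up/scale-down argument as the paper. The only difference is in part~1: the paper simply asserts that the LP optimum $(\bldf,\bldw)$ has rational coordinates ``since the LP has rational coefficients'' and scales it directly, whereas you take the extra step of passing to a vertex $(\bldf^\dagger,\bldw^\dagger)$ of the optimal face with $\bldf^\dagger\neq\zeros$ before clearing denominators. Your version is slightly more careful, since rationality of an arbitrary LP optimum is not guaranteed unless one works at a vertex; the paper's statement implicitly relies on the convention that LP solvers return basic feasible solutions.
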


\begin{proof} 
The proof follows the lines of its counterpart in~\cite{Feldman}. 
\begin{enumerate}
\item
Let $(\bldf, \bldw)$ be the point in $\cQ$ which minimizes $\boldsymbol{\Lambda} (\bldy) \bldf^T$. Suppose there is a codeword error; then $\bldf \neq \bldzero$, and we must have $\boldsymbol{\Lambda} (\bldy) \bldf^T \le 0$.

Next, we construct the LP pseudocodeword $(\bldh, \bldz)$ as follows. Since the LP has rational coefficients, all elements of the vectors $\bldf$ and $\bldw$ must be rational. Let $M$ denote their lowest common denominator; since $\bldf \neq \bldzero$ we may have $M>0$. Now set $h_i^{(\alpha)} = M \cdot f_i^{(\alpha)}$ for all $i \in \cI$, $\alpha\in\rrrm$ and set $z_{j,\bldbb} = M  \cdot w_{j,\bldbb}$ for all $j \in \cJ$ and $\bldb \in \code_j$.

By~(\ref{eq:equation-polytope-3})-(\ref{eq:equation-polytope-5}), $(\bldh, \bldz)$ is an LP pseudocodeword and $\bldh \neq \bldzero$ since $\bldf \neq \bldzero$. Also $\boldsymbol{\Lambda}(\bldy) \bldf^T \le 0$ implies $\boldsymbol{\Lambda} (\bldy) \bldh^T \le 0$.
\item
Now, suppose that an LP pseudocodeword $(\bldh, \bldz)$ with $\bldh \neq \bldzero$ satisfies 
$\boldsymbol{\Lambda} (\bldy) \bldh^T < 0$. Since $\bldh \neq \bldzero$ we have $M>0$ in~(\ref{eq:LP_PCW_2}). Now, set $f_i^{(\alpha)} = h_i^{(\alpha)} / M $ for all $i \in \cI$, $\alpha\in\rrrm$, and set $w_{j,\bldbb} = z_{j,\bldbb} / M$ for all $j \in \cJ$ and $\bldb \in \code_j$. It is straightforward to check that $(\bldf, \bldw)$ satisfies all the constraints of the polytope 
$\cQ$. Also, $\bldh \neq \zeros$ implies $\bldf \neq \zeros$. 
Finally, $\boldsymbol{\Lambda} (\bldy) \bldh^T < 0$ implies $\boldsymbol{\Lambda} (\bldy) \bldf^T < 0$. Therefore, the LP decoder will make a codeword error. 
\end{enumerate}
\end{proof}

\section{Equivalence Between Pseudocodeword Concepts}
\label{sec:eq-pseudocodewords}
\subsection{Tanner Graphs and Graph-Cover Pseudocodewords}
The Tanner graph of a linear code $\code$ over $\rrr$ is an equivalent characterization of the code's parity-check matrix $\cH$. The Tanner graph $\graph = (\cV, \cE)$ has vertex set $\cV = \{u_1, u_2, \cdots, u_n \} \cup  
\{v_1, v_2, \cdots, v_m \}$, and there is an edge between $u_i$ and $v_j$ if and only if $\cH_{j,i} \neq 0$. This edge is labelled with the value $\cH_{j,i}$. 
We denote by $\cN(v)$ the set of neighbors of a vertex $v\in\cV$.

For any word $\bldc  = (c_1, c_2, \cdots, c_n) \in \rrr^n$, the Tanner graph allows an equivalent graphical statement of the condition $c\in\code_j$ for each $j\in\cJ$, as follows. 
%and hence for stating the condition $c\in\code$.  
The variable vertex $u_i$ is labelled with the value $c_i$ for each $i \in \cI$. Equation~(\ref{eq:parity_check_satisfied}) (or~(\ref{eq:word_in_SPC_code})) is then equivalent to the condition that for vertex $v_j$, the sum, over all vertices in $\cN(v_j)$, of the vertex labels multiplied by the corresponding edge labels is zero. This graphical means of checking whether a parity-check is satisfied by $\bldc \in\rrr^n$ will be useful when defining graph-cover pseudocodewords later in this section.

To illustrate this concept, Figure~\ref{cap:Tanner_graph} shows the Tanner graph for the codeword 
$\bldc = ( 1 \; 0 \; 2 \; 1 )$ of the example $[4,2]$ code over $\mathbb{Z}_3$ defined by the parity-check matrix~(\ref{eq:example_PCM}). In Figure \ref{cap:Tanner_graph}, edge labels are shown in square brackets, and vertex labels in round brackets. The reader may check that for each parity-check $j=1,2$, the sum, over all vertices in $\cN(v_j)$, of the vertex labels multiplied by the corresponding edge labels is zero.

\begin{figure}
\begin{center}\includegraphics[%
  width=0.5\columnwidth, keepaspectratio]{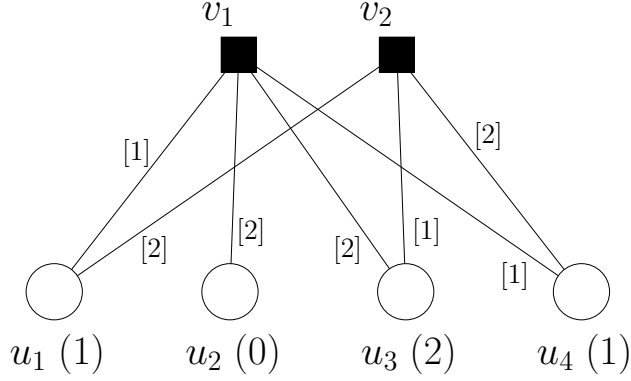}\end{center}
\caption{Tanner graph for the example $[4,2]$ code over $\mathbb{Z}_3$. Edge labels are shown in square brackets, and vertex labels in round brackets. For each parity-check $j$, the sum, over all vertices in $\cN(v_j)$, of the vertex labels multiplied by the corresponding edge labels is zero; therefore all parity-checks are satisfied.\label{cap:Tanner_graph}}
\end{figure}

We next define what is meant by a finite cover of a Tanner graph.

\begin{definition}
(\cite{KV-characterization})
A graph $\tilde{\graph} = (\tilde{\cV}, \tilde{\cE})$ is a \emph{finite cover} of the Tanner graph $\graph = (\cV, \cE)$ 
if there exists a mapping $\Pi: \tilde{\cV} \longrightarrow \cV$ which is a graph homomorphism
($\Pi$ takes adjacent vertices of $\tilde{\graph}$ to adjacent vertices of $\graph$), such that 
for every vertex $v \in \graph$ and every $\tilde{v} \in \Pi^{-1}(v)$, the neighborhood $\cN(\tilde{v})$ 
of $\tilde{v}$ (including edge labels) is mapped bijectively to $\cN(v)$. 
\end{definition}

\begin{definition}
(\cite{KV-characterization})
A cover of the graph $\graph$ is said to have degree $M$, where $M$ is a positive integer, if $|\Pi^{-1}(v)| = M$
for every vertex $v \in \cV$. We refer to such a cover graph as an $\mit{M}$\emph{-cover} of $\graph$.  
\end{definition}

Fix some positive integer $M$. Let $\tilde{\graph} = (\tilde{\cV}, \tilde{\cE})$ be an $M$-cover   
of the Tanner graph $\graph = (\cV, \cE)$ representing 
the code $\code$ with parity-check matrix $\cH$. 
The vertices in the set $\Pi^{-1} (u_i)$ are called \emph{copies} of $u_i$ and are denoted $\{ u_{i,1}, u_{i,2}, \cdots, u_{i,M} \}$, where $i\in\cI$. Similarly, the vertices in the set $\Pi^{-1} (v_j)$ are called \emph{copies} of $v_j$ and are denoted $\{ v_{j,1}, v_{j,2}, \cdots, v_{j,M} \}$, where $j\in\cJ$. 

Less formally, given a code $\code$ with parity-check matrix $\cH$ and corresponding Tanner graph $\graph$, an $M$-cover of $\graph$ is a graph whose vertex set consists of $M$ copies of $u_i$ and $M$ copies of $v_j$, such that for each $j\in\cJ$, $i\in\cI_j$, the $M$ copies of $u_i$ and the $M$ copies of $v_j$ are connected in an arbitrary one-to-one fashion, with edges labelled by the value $H_{j,i}$.

%We associate the value $p_{i,\ell} \in \rrr$ with the vertex $u_{i,\ell}$ in $\tilde{\graph}$ ($i\in\cI$, $\ell = 1, 2, \cdots, M$).
For any $M\ge 1$, a \emph{graph-cover pseudocodeword} is a labelling of vertices of the $M$-cover graph with values from $\rrr$ such that all parity-checks are satisfied. We denote the label of $u_{i,l}$ by $p_{i,l}$ for each $i\in\cI$, $\ell = 1, 2, \cdots, M$, and we may then write the graph-cover pseudocodeword in vector form as 
\begin{equation*}\bldp = ( p_{1,1}, p_{1,2}, \cdots, p_{1,M}, p_{2,1}, p_{2,2}, \cdots, p_{2,M}, \cdots, p_{n,1}, p_{n,2}, \cdots, p_{n, M} ) \; .
\end{equation*}
It is easily seen that $\bldp$ belongs to a linear code $\tilde{\code}$ of length $Mn$ over $\rrr$, 
defined by an $Mm \times Mn$ parity-check matrix $\tilde{\cH}$. To construct $\tilde{\cH}$, for
$1 \le i^*,j^* \le M$ and $i \in \cI$, $j \in \cJ$, we let  
$i' = (i-1) M + i^*, j' = (j-1) M + j^*$, and so
\[
\tilde{\cH}_{j',i'} = \left\{ \begin{array}{cl}
\cH_{j,i} & \mbox{if } u_{i,i^*} \in \cN(v_{j,j^*}) \\
0 & \mbox{otherwise} 
\end{array} \right. \; .
\]
It may be seen that $\tilde{\graph}$ is the Tanner graph of the code $\tilde{\code}$ corresponding to the parity-check matrix $\tilde{\cH}$.

We also define the $n \times q$ \emph{graph-cover pseudocodeword matrix} 
\[
\cP = \Big( m_i^{(\alpha)} \Big)_{i \in \cI; \, \alpha\in\rrr} \; ,  
\] 
where 
\[
m_i^{(\alpha)} = \left| \{ \ell \in \{ 1, 2, \cdots, M \} \; : \; p_{i,\ell} = \alpha \} \right| \ge 0 \; , 
\]
for $i\in\cI$, $\alpha\in\rrr$, i.e. $m_i^{(\alpha)}$ is equal to the number of copies of $u_i$ which are labelled with $\alpha$, for each $i\in\cI$, $\alpha\in\rrr$. The \emph{normalized graph-cover pseudocodeword matrix}
is defined as $(1/M) \cdot \cP$. This matrix representation is similar to that defined in~\cite{Kelley-Sridhara-ISIT-2006}. 
Note that the $i$-th row of the normalized graph-cover pseudocodeword matrix (for $i \in \cI$) can be viewed as a probability distribution for the $i$-th coded symbol $c_i \in \rrr$, in a similar manner to the case of the normalized LP pseudocodeword matrix.

Another representation, which we shall use in Section \ref{sec:cascaded_polytope}, is the \emph{graph-cover pseudocodeword vector} $\bldm = (\bldm_i)_{i \in \cI} $ where $\bldm_i = (m_i^{(\alpha)})_{\alpha\in\rrrm}$ for each $i \in \cI$. Correspondingly, the \emph{normalized graph-cover pseudocodeword vector} is given by $(1/M) \cdot \bldm \in \mathbb{R}^{(q-1)n}$.

It is easily seen that for any $\bldc\in\code$, the labelling of $u_{i,l}$ by the value $c_i$ for all $i\in\cI$, $\ell = 1, 2, \cdots, M$ trivially yields a pseudocodeword for all $M$-covers of $\graph$, $M\ge 1$. However, non-trivial pseudocodewords exist in general. 
 
\begin{example}
To illustrate these concepts, a graph-cover pseudocodeword in shown in Figure~\ref{cap:cover_graph} for the example $[4,2]$ code over $\mathbb{Z}_3$ defined by the parity-check matrix~(\ref{eq:example_PCM}). Here the degree of the cover graph is $M=4$, and we have 
%\[
%\bldp = ( 1 \; 1 \; 2 \; 2 \; | \;  1 \; 2 \; 2 \; 1 \; | \; 0 \; 0 \; 1 \; 1 \; | \; 0 \; 1 \; 1 \; 0) \; ,
%\]
\[
\bldp = ( 1 \; 1 \; 2 \; 2 \; | \;  1 \; 1 \; 2 \; 2 \; | \; 0 \; 0 \; 1 \; 1 \; | \; 0 \; 0 \; 1 \; 1) \; ,
\]
and the parity-check matrix of the code $\tilde{\code}$ is given by  
\begin{eqnarray*}
\tilde{\cH} = 
\left( \begin{array}{c|c|c|c}
0 \; 0 \; 1 \; 0 & 2 \; 0 \; 0 \; 0 & 0 \; 0 \; 2 \; 0 & 1 \; 0 \; 0 \; 0  \\
0 \; 0 \; 0 \; 1 & 0 \; 2 \; 0 \; 0 & 0 \; 0 \; 0 \; 2 & 0 \; 1 \; 0 \; 0  \\
1 \; 0 \; 0 \; 0 & 0 \; 0 \; 2 \; 0 & 2 \; 0 \; 0 \; 0 & 0 \; 0 \; 1 \; 0  \\
0 \; 1 \; 0 \; 0 & 0 \; 0 \; 0 \; 2 & 0 \; 2 \; 0 \; 0 & 0 \; 0 \; 0 \; 1  \\
\hline
0 \; 0 \; 2 \; 0 & 0 \; 0 \; 0 \; 0 & 1 \; 0 \; 0 \; 0 & 0 \; 0 \; 2 \; 0  \\
0 \; 0 \; 0 \; 2 & 0 \; 0 \; 0 \; 0 & 0 \; 1 \; 0 \; 0 & 0 \; 0 \; 0 \; 2  \\
2 \; 0 \; 0 \; 0 & 0 \; 0 \; 0 \; 0 & 0 \; 0 \; 1 \; 0 & 2 \; 0 \; 0 \; 0  \\
0 \; 2 \; 0 \; 0 & 0 \; 0 \; 0 \; 0 & 0 \; 0 \; 0 \; 1 & 0 \; 2 \; 0 \; 0  
\end{array} \right)
\end{eqnarray*}
Also, the graph-cover pseudocodeword matrix corresponding to $\bldp$ is 
\begin{equation}
\cP = \left( \begin{array}{ccc}
0 & 2 & 2 \\
0 & 2 & 2 \\
2 & 2 & 0 \\
2 & 2 & 0 
\end{array} \right) \; ,
\label{eq:ex_GC_PCW_matrix}
\end{equation}
and the normalized graph-cover pseudocodeword matrix is 
\[
\frac{1}{4} \cdot \cP \; . 
\] 
The graph-cover pseudocodeword vector corresponding to $\bldp$ is 
\begin{equation*}
\bldm = \left( \; 2 \; 2 \; | \; 2 \; 2 \; | \; 2 \; 0 \; | \; 2 \; 0 \; \right) \; ,  
\end{equation*}
and the normalized graph-cover pseudocodeword vector is 
\[
\frac{1}{4} \cdot \bldm \; . 
\] 
\begin{figure*}
\begin{center}\includegraphics[%
  width=1.0\columnwidth,
  keepaspectratio]{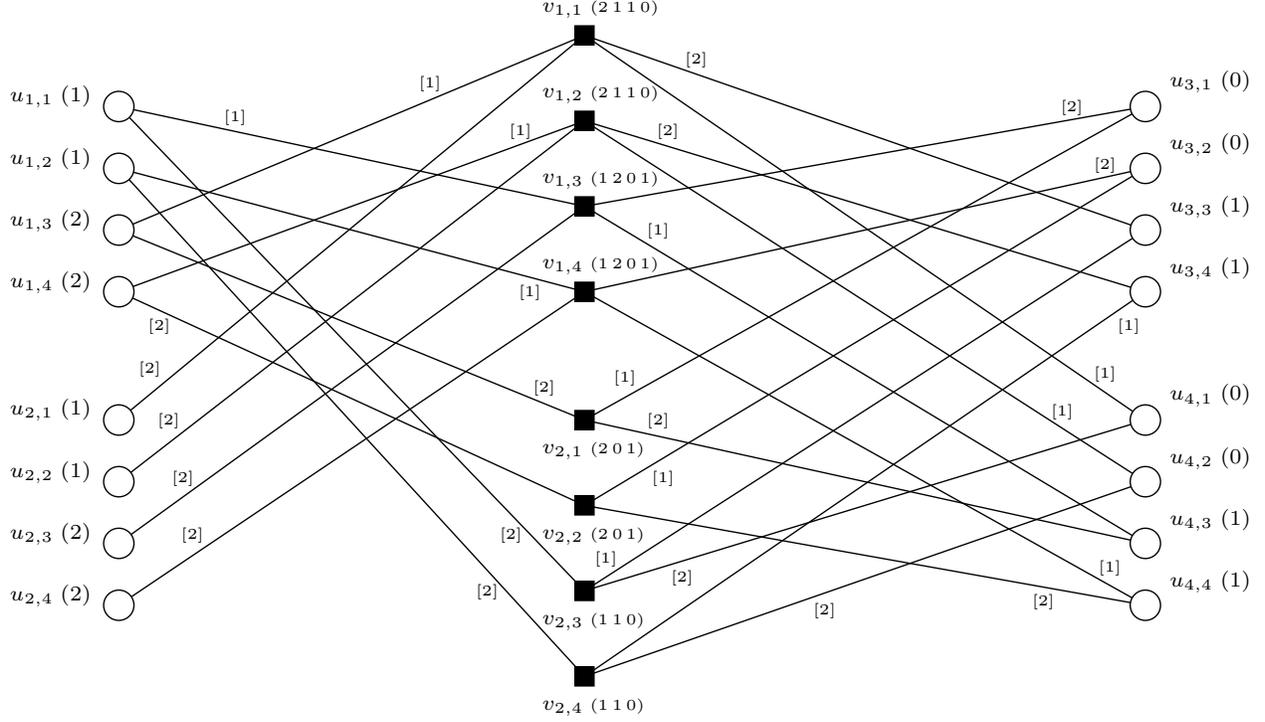}
  \end{center}
\caption{Cover graph of degree $4$ and corresponding graph-cover pseudocodeword for the example $[4,2]$ code over $\mathbb{Z}_3$ with parity-check matrix given by~(\ref{eq:example_PCM}). Edge labels are shown in square brackets, and vertex labels in round brackets. This graph-cover pseudocodeword corresponds to the LP pseudocodeword described by~(\ref{eq:ex_LP_PCW_1})-(\ref{eq:ex_LP_PCW_4}) via the correspondence described in the proof of Theorem \ref{thm:PCW_equivalence}.  \label{cap:cover_graph}}
\end{figure*}
\end{example}

\subsection{Equivalence between LP Pseudocodewords and Graph-Cover Pseudocodewords}
In this section, we show the equivalence between the set of LP pseudocodewords and the set of
graph-cover pseudocodewords. The result is summarized in the following theorem. 

\begin{theorem}
\label{thm:PCW_equivalence}
Let $\code$ be a linear code over the ring $\rrr$ with parity-check matrix $\cH$ and corresponding Tanner graph $\cG$. Then, there exists an LP pseudocodeword $(\bldh, \bldz)$ with pseudocodeword matrix $\mathsf{H}$ if and only if there 
exists a graph-cover pseudocodeword for some $M$-cover of $\cG$ with the same pseudocodeword matrix. 
\end{theorem}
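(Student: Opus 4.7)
My plan is to prove both directions by explicit construction, using the intuition that the integer $z_{j,\bldbb}$ should count how many check-copies $v_{j,\ell}$ "see" the local codeword $\bldb$, and $h_i^{(\alpha)}$ should count how many variable-copies $u_{i,\ell}$ carry the label $\alpha$.

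For the direction graph-cover $\Rightarrow$ LP, suppose we have a graph-cover pseudocodeword $\bldp$ on an $M$-cover $\tilde{\cG}$. For each $j \in \cJ$ and each check-copy $v_{j,\ell}$, the cover property gives a bijection from $\cN(v_{j,\ell})$ to $\cN(v_j)$, so reading off the labels of the neighbors in the order induced by this bijection produces a vector $\bldb^{(j,\ell)} \in \code_j$ (the parity check at $v_{j,\ell}$ is satisfied). Define
\[
z_{j,\bldbb} = \bigl|\{\ell \in \{1,\ldots,M\} : \bldb^{(j,\ell)} = \bldb\}\bigr|
\]
and set $h_i^{(\alpha)} = m_i^{(\alpha)}$ for $\alpha \in \rrrm$. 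Then (\ref{eq:LP_PCW_2}) is immediate by partitioning the $M$ check-copies, while (\ref{eq:LP_PCW_1}) follows by double-counting: for each $(i,j)$ with $i \in \cI_j$, the cover property says each $u_{i,\ell}$ is adjacent to exactly one $v_{j,\ell'}$, so the number of copies of $u_i$ with label $\alpha$ equals the number of check-copies $v_{j,\ell'}$ with $i$-th local-codeword coordinate equal to $\alpha$, which is $\sum_{\bldbb \in \code_j,\,b_i = \alpha} z_{j,\bldbb}$.

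For the direction LP $\Rightarrow$ graph-cover, take an LP pseudocodeword $(\bldh,\bldz)$ with parameter $M$. First, create $M$ copies of each $u_i$ and of each $v_j$. Assign to the check-copies of $v_j$ local codewords $\bldb \in \code_j$ so that $\bldb$ is used on exactly $z_{j,\bldbb}$ copies; this is possible by~(\ref{eq:LP_PCW_2}). Label each copy $u_{i,\ell}$ with a ring element so that $\alpha$ is used exactly $h_i^{(\alpha)}$ times for $\alpha \in \rrrm$ and $0$ is used exactly $h_i^{(0)}$ times, where $h_i^{(0)}$ is defined by~(\ref{eq:h_sum_equals_M}). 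The key step is constructing the edges: for each base edge $(u_i, v_j)$ (i.e.\ each $i \in \cI_j$), I build a perfect matching between the $M$ copies of $u_i$ and the $M$ copies of $v_j$ that is \emph{label-consistent} — whenever $u_{i,\ell}$ is joined to $v_{j,\ell'}$, the label of $u_{i,\ell}$ agrees with the $i$-th coordinate of the local codeword assigned to $v_{j,\ell'}$. Such a matching exists because, for every $\alpha \in \rrr$, the number of $u_i$-copies carrying label $\alpha$ equals $h_i^{(\alpha)}$, while the number of $v_j$-copies whose assigned local codeword has $\alpha$ in position $i$ equals $\sum_{\bldbb \in \code_j,\,b_i = \alpha} z_{j,\bldbb}$, and these two counts agree by~(\ref{eq:LP_PCW_1}) (extended to $\alpha = 0$ via the identity derived from (\ref{eq:h_sum_equals_M}) in the text). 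Match within each label class arbitrarily and take the union. Label each $u_{i,\ell}$ edge with $\cH_{j,i}$.

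It remains to verify that the resulting graph $\tilde{\cG}$ with labelling $\bldp$ has the required properties. By construction each check-copy $v_{j,\ell}$ is incident to exactly one copy of each $u_i$ for $i \in \cI_j$, so neighborhoods map bijectively and $\tilde{\cG}$ is a legitimate $M$-cover of $\cG$. The parity check at each $v_{j,\ell}$ is satisfied because its assigned vector lies in $\code_j$ and, by label-consistency of the matching, that assigned vector really is the multiset of labels of $v_{j,\ell}$'s neighbors. Finally, the pseudocodeword matrix of $\bldp$ coincides with $\mathsf{H}$ by choice of labels. The main obstacle is purely combinatorial: ensuring the $|\cE|$ independent matchings can be carried out consistently with the prescribed label counts. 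This is exactly where constraint~(\ref{eq:LP_PCW_1}) does its work — it is a per-edge Hall condition that degenerates to matching equal-size bipartite classes, so no genuine obstruction arises.
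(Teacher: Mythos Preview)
Your proposal is correct and follows essentially the same approach as the paper's proof: in both directions you use the same explicit construction, labeling variable-copies by the counts $h_i^{(\alpha)}$, assigning local codewords to check-copies by the counts $z_{j,\bldbb}$, and building the edges of the cover by matching within each label class using the equality $|T_i^{(\alpha)}| = h_i^{(\alpha)} = \sum_{\bldbb \in \code_j,\,b_i=\alpha} z_{j,\bldbb} = |R_{i,j}^{(\alpha)}|$ guaranteed by~(\ref{eq:LP_PCW_1}). Your write-up is slightly more explicit than the paper's about why the resulting graph is a legitimate $M$-cover, but the argument is the same.
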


\begin{proof} 
\begin{enumerate}
\item
Let $(\bldh, \bldz)$ be an LP pseudocodeword of $\code$, and   
let $\graph = (\cV, \cE)$ be the Tanner graph associated with the parity-check matrix $\cH$. 
We construct an $M$-cover $\tilde{\graph} = (\tilde{\cV}, \tilde{\cE})$, where $M = \sum_{\bldbb \in \code_j} z_{j,\bldbb}$, and corresponding graph-cover pseudocodeword, as follows. We begin with the vertex set, which consists of $M$ copies of $u_i$, $i\in\cI$, and $M$ copies of $v_j$, $j\in\cJ$. Then we proceed as follows:

\begin{itemize}
\item
Label $h_i^{(\alpha)}$ copies of $u_i$ with the value $\alpha$, for each $i \in \cI$, $\alpha \in \rrr$. By~(\ref{eq:h_sum_equals_M}), all copies of $u_i$ are labelled. 

\item
Label $z_{j,\bldbb}$ copies of $v_j$ with the value $\bldb$, for every $j \in \cJ$, $\bldb \in \code_j$. By~(\ref{eq:LP_PCW_2}), all copies of $v_j$ are labelled.

\item
Next, let $T_i^{(\alpha)}$ denote the set of copies of $u_i$ labelled with the value $\alpha$, for $i\in\cI$, $\alpha\in\rrr$. Also, for all $i\in\cI$, $j\in\cJ$, $\alpha\in\rrr$, let $R_{i,j}^{(\alpha)}$ denote the set of copies of $v_j$ whose label satisfies $b_i=\alpha$. The vertices in $T_i^{(\alpha)}$ and the vertices in $R_{i,j}^{(\alpha)}$ are then connected by edges in an arbitrary one-to-one fashion, for every $j\in\cJ$, $i\in\cI_j$, $\alpha\in\rrr$. All of these edges are labelled with the value $H_{j,i}$.

First, we note that this is possible because
\begin{eqnarray*}
| T_i^{(\alpha)} | & = & h_i^{(\alpha)} \\
& = & \sum_{\bldbb \in \code_j, \; b_i=\alpha} z_{j,\bldbb} \\
& = & | R_{i,j}^{(\alpha)} | 
\end{eqnarray*}
for every $j\in\cJ$, $i\in\cI_j$, $\alpha\in\rrr$. Here we have used~(\ref{eq:LP_PCW_1})). 

Second, we note that all checks are satisfied by this labelling. For $j\in\cJ$, consider any copy of $v_j$ with label $\bldb$. By construction of the graph, the sum, over all vertices in $\cN(v_j)$, of the vertex labels multiplied by the corresponding edge labels is 
\[
\sum_{i\in\cI_j} b_i \cdot \cH_{j,i} \; ,
\]
which is zero because $\bldb \in \code_j$.
Therefore, this vertex labelling yields a graph-cover pseudocodeword of the code $\code$ with parity-check matrix $\cH$.  
\end{itemize}
\item
Now suppose that there exists a graph-cover pseudocodeword corresponding to some 
$M$-cover of the Tanner graph $\cG$ of $\code$. Then, 
\begin{itemize}
\item
{\bf Step 1:} for every $i \in \cI$, and for every $\alpha \in \rrrm$, 
we define $h_i^{(\alpha)}$ to be the number of copies of $u_i$ labelled with the value $\alpha$.
\item
{\bf Step 2:} for every copy of $v_j$, $j\in\cJ$, label the copy with the word $\bldb$, where $b_i$ is equal to the label on the neighbouring copy of $u_i$, $i\in\cI_j$. Then, for every $j\in\cJ$, $\bldb\in\code_j$, we define $z_{j,\bldbb}$ to be the number of copies of $v_j$ labelled with the word $\bldb$. 
\end{itemize}

Step 2 ensures that $z_{j,\bldbb}$ are nonnegative integers for all $j \in \cJ$ and $\bldb \in \code_j$, and that~(\ref{eq:LP_PCW_2}) holds. Also, to show that~(\ref{eq:LP_PCW_1}) holds, we reason as follows. The right-hand side of~(\ref{eq:LP_PCW_1}) counts the number of copies of $v_j$ whose labels $\bldb$ satisfy $b_i=\alpha$. By step 2, this is equal to the number of copies of $u_i$ labelled with $\alpha$, which by step 1 is equal to the left-hand side of~(\ref{eq:LP_PCW_1}). Therefore, $(\bldh, \bldz)$ is an LP pseudocodeword of the code $\code$ with parity-check matrix $\cH$.  

\end{enumerate}
\end{proof}

As an illustration of the correspondences described in this proof, consider the example $[4,2]$ code over $\mathbb{Z}_3$ defined by the parity-check matrix~(\ref{eq:example_PCM}). First, note that the LP pseudocodeword of~(\ref{eq:ex_LP_PCW_1})-(\ref{eq:ex_LP_PCW_4}) and the graph-cover pseudocodeword of Figure~\ref{cap:cover_graph} have the same pseudocodeword matrix, via~(\ref{eq:ex_LP_PCW_matrix}) and~(\ref{eq:ex_GC_PCW_matrix}). Indeed, the reader may check that each pseudocodeword may be derived from the other using the correspondences described in the proof of Theorem \ref{thm:PCW_equivalence}.

The next corollary follows immediately from Theorem~\ref{thm:PCW_equivalence}. 
\begin{corollary}
\label{thm:PCW_vector_equivalence}
Let $\code$ be a linear code over the ring $\rrr$ with parity-check matrix $\cH$ and corresponding Tanner graph $\cG$. Then, there exists a (normalized) LP pseudocodeword $(\bldh, \bldz)$ if and only if there 
exists a graph-cover pseudocodeword for some $M$-cover of $\cG$ with (normalized) graph-cover pseudocodeword vector $\bldh$. 
\end{corollary}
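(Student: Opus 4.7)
The plan is to reduce this to Theorem~\ref{thm:PCW_equivalence}, which already provides the equivalence in terms of the full $n\times q$ pseudocodeword \emph{matrices}. The corollary only strips away the $\alpha=0$ column, so essentially all that needs to be verified is that this column is redundant: knowing the vector part (columns indexed by $\rrrm$) together with the scaling factor $M$ determines the matrix uniquely.

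Concretely, I would proceed as follows. Given an LP pseudocodeword $(\bldh,\bldz)$ with scaling parameter $M$ defined via~(\ref{eq:LP_PCW_2}), the associated matrix $\mathsf{H}=(h_i^{(\alpha)})_{i\in\cI,\alpha\in\rrr}$ has its $\alpha\in\rrrm$ entries read off directly from $\bldh$, and its $\alpha=0$ column determined by the identity $h_i^{(0)}=M-\sum_{\alpha\in\rrrm}h_i^{(\alpha)}$ from~(\ref{eq:h_sum_equals_M}). Analogously, given a graph-cover pseudocodeword $\bldp$ on an $M$-cover, the vector $\bldm=(\bldm_i)_{i\in\cI}$ with $\bldm_i=(m_i^{(\alpha)})_{\alpha\in\rrrm}$ determines the matrix $\cP=(m_i^{(\alpha)})_{i\in\cI,\alpha\in\rrr}$ because $\sum_{\alpha\in\rrr}m_i^{(\alpha)}=M$ (each copy of $u_i$ carries exactly one label) forces $m_i^{(0)}=M-\sum_{\alpha\in\rrrm}m_i^{(\alpha)}$.

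Once this bookkeeping is in place, the corollary follows in two directions. For the forward direction, start with an LP pseudocodeword $(\bldh,\bldz)$; invoke Theorem~\ref{thm:PCW_equivalence} to produce an $M$-cover and a graph-cover pseudocodeword whose matrix equals $\mathsf{H}$; restricting to the $\alpha\in\rrrm$ columns yields $\bldm=\bldh$, so the graph-cover pseudocodeword vector equals $\bldh$ as required. Conversely, given a graph-cover pseudocodeword on some $M$-cover with vector $\bldh$, its matrix $\cP$ is determined as above, and Theorem~\ref{thm:PCW_equivalence} supplies an LP pseudocodeword $(\bldh',\bldz)$ with $\mathsf{H}=\cP$; projecting onto the $\alpha\in\rrrm$ columns gives $\bldh'=\bldh$. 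The normalized version is then immediate by dividing both $\bldh$ (or $\bldm$) and the scaling parameter $M$ by $M$ throughout.

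There is essentially no obstacle here since the theorem has done all the substantive work; the only thing to be careful about is making the identification between matrix and vector representations fully explicit, and noting that the value of $M$ is preserved across the correspondence of Theorem~\ref{thm:PCW_equivalence} (it equals $\sum_{\bldbb\in\code_j}z_{j,\bldbb}$ on the LP side and the cover degree on the graph side), so that normalization commutes with the equivalence.
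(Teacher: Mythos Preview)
Your proposal is correct and matches the paper's approach: the paper simply states that the corollary follows immediately from Theorem~\ref{thm:PCW_equivalence}, and your argument spells out exactly the redundancy of the $\alpha=0$ column (via~(\ref{eq:h_sum_equals_M}) and the analogous row-sum identity for $\cP$) that makes this immediate. If anything, you have been more explicit than the paper about why the value of $M$ is preserved across the correspondence, which is the only point worth checking.
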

Note that this corollary contains two different equivalences, one for normalized objects and the other for non-normalized ones. 

\section{Alternative Polytope Representation}
\label{sec:alternative_polytope}

In this section, we present an alternative polytope for use with linear-programming decoding. This polytope may be regarded as a generalization of the ``high-density polytope'' defined in \cite{Feldman}. As we show in this section, the new polytope may under some circumstances yield a complexity advantage over the polytope 
of Section~\ref{sec:lp}. In the sequel, we will analyze the properties of this polytope.  

First, we introduce some convenient notation and definitions. Recall that the ring $\rrr$ contains $q-1$ non-zero elements; correspondingly, for vectors $\bldk\in\nn^{q-1}$, we adopt the notation
\[
\bldk = (k_{\alpha})_{\alpha\in\rrrm}
\]
Now, for any $j\in\cJ$, we define the mapping
\begin{eqnarray*}
\bldkappa_j \; : \; \code_j & \longrightarrow & \nn^{q-1} \; , \\
\bldb & \mapsto & \bldkappa_j(\bldb)
\end{eqnarray*}
defined by 
\[
( \bldkappa_j (\bldb) )_\alpha = \left| \left\{ i \in \cI_j \; : \; b_i \cdot \cH_{j,i} = \alpha \right\} 
\right|
\]
for all $\alpha \in \rrrm$. We may then characterize the image of $\bldkappa_j$, which we denote by $\cT_j$, as 
\[
\cT_j = \left\{ \bldk \in \nn^{q-1} \; : \; \sum_{\alpha \in \rrrm} \alpha \cdot k_\alpha = 0 \mbox{ and } 
\sum_{\alpha \in \rrrm} k_\alpha \le d_j \right\} \; ,   
\]
for each $j\in\cJ$, where, for any $k\in\nn$, $\alpha\in\rrr$, 
\[
\alpha \cdot k = \left\{ \begin{array}{cc}
0 & \textrm{ if } k = 0 \\
\alpha + \cdots + \alpha & \textrm{ if } k > 0 \textrm{ (} k \textrm{ terms in sum)}\end{array}\right. 
\]
Note that $\bldkappa_j$ is not a bijection, in general. We say that a local codeword $\bldb \in \code_j$ is $\bldk$-constrained over $\code_j$ if $\bldkappa_j (\bldb) = \bldk$. 

Next, for any index set $\Gamma\subseteq\cI$, we introduce the following definitions. Let $N = \left| \Gamma \right|$. We define the single-parity-check-code, over vectors indexed by $\Gamma$, by
\begin{equation}
\code_\Gamma = \left\{ \blda = (a_i)_{i\in\Gamma} \in \rrr^N \; : \; \sum_{i\in\Gamma} a_i = 0 \right\} \; . 
\end{equation}
Also define a mapping $\bldkappa_\Gamma \; : \; \code_\Gamma \longrightarrow \nn^{q-1}$ by
\[
\left(\bldkappa_\Gamma(\blda)\right)_\alpha = \left| \left\{ i\in\Gamma \; : \; a_i = \alpha \right\} \right| \; ,  
\] and define, for $\bldk\in\cT_j$, 
\[
\code_\Gamma^{(\bldkk)} = \{ \blda \in \code_\Gamma \; : \; \bldkappa_\Gamma(\blda) = \bldk \} \; .
\]

Below, we define a new polytope for decoding. Recall that 
$\bldy = (y_1, y_2, \cdots, y_n) \in \Sigma^n$ stands for the received (corrupted) word. 
In the sequel, we make use of the following variables:
\begin{itemize}
\item
For all $i \in \cI$ and all $\alpha \in \rrrm$, we have a variable $f_i^{(\alpha)}$. This variable is an indicator of the event $y_i= \alpha$. 
\item
For all $j \in \cJ$ and $\bldk \in \cT_j$, we have a variable $\sigma_{j,\bldkk}$. 
Similarly to its counterpart in~\cite{Feldman}, this variable indicates the contribution to parity-check $j$ of 
$\bldk$-constrained local codewords over $\code_j$.  
\item
For all $j \in \cJ$, $i\in\cI_j$, $\bldk \in \cT_j$, $\alpha \in \rrrm$, we have 
a variable $z^{(\alpha)}_{i,j,\bldkk}$. 
This variable indicates the portion of $f_i^{(\alpha)}$ assigned to $\bldk$-constrained local codewords over $\code_j$.
\end{itemize}

Motivated by these variable definitions, for all $j \in \cJ$ we impose the following set of constraints: 
\begin{equation}
\forall i \in \cI_j, \forall \alpha \in \rrrm,  
\qquad f_i^{(\alpha)} = \sum_{\bldkk \in \cT_j} z_{i,j,\bldkk}^{(\alpha)} \; . 
\label{eq:LP_1}
\end{equation}
\begin{equation}
\sum_{\bldkk \in \cT_j} \sigma_{j,\bldkk} = 1 \; .
\label{eq:LP_2}
\end{equation}
\begin{equation}
\forall \bldk \in \cT_j, \forall \alpha \in \rrrm,   
\qquad \sum_{i \in \cI_j, \; \beta\in\rrrm, \; \beta\cH_{j,i}=\alpha} z_{i,j,\bldkk}^{(\beta)} = k_\alpha \cdot \sigma_{j,\bldkk} \; . 
\label{eq:LP_3}
\end{equation}
\begin{equation}
\forall i \in \cI_j, \forall \bldk \in \cT_j, \forall \alpha \in \rrrm,
\qquad z^{(\alpha)}_{i,j,\bldkk} \ge 0 \; . 
\label{eq:LP_4}
\end{equation}
\begin{equation}
\forall i \in \cI_j, \forall \bldk \in \cT_j,
\qquad \sum_{\alpha \in \rrrm} \; \sum_{\beta\in\rrrm, \; \beta\cH_{j,i}=\alpha} z^{(\beta)}_{i,j,\bldkk} \le \sigma_{j,\bldkk} \; . 
\label{eq:LP_5}
\end{equation}
We note that the further constraints
\begin{equation}
\forall i \in \cI, \forall \alpha \in \rrrm,  
\qquad 0 \le f_i^{(\alpha)} \le 1 \; ,
\label{eq:LP_6}
\end{equation}
\begin{equation}
\forall j \in \cJ, \forall \bldk \in \cT_j,  
\qquad 0 \le \sigma_{j,\bldkk} \le 1 \; , 
\label{eq:LP_7} 
\end{equation}
and
\begin{equation}
\forall j \in \cJ, \forall i \in \cI_j, \forall \bldk \in \cT_j, \forall \alpha \in \rrrm,
\qquad z_{i,j,\bldkk}^{(\alpha)} \le \sigma_{j,\bldkk} \; ,
\label{eq:LP_8} 
\end{equation}
follow from constraints~(\ref{eq:LP_1})-(\ref{eq:LP_5}).
We denote by $\cU$ the polytope formed by
constraints~(\ref{eq:LP_1})-(\ref{eq:LP_5}).  

Let $T = \max_{j \in \cJ} |\cT_j|$.  
Then, upper bounds on the number of variables and constraints in this LP are given by $n(q-1) + m(d(q-1)+1) T$ and 
$m(d(q-1)+1) + m((d+1)(q-1)+d) T$, respectively. Since $T \le {d+q-1 \choose d}$, the number of variables
and constraints are $O(m q \cdot d^{q})$, which, for many families of codes, 
is significantly lower than the corresponding complexity for polytope $\cQ$. 

For notational simplicity in proofs in this section, it is convenient to define a new set of variables as follows:
\begin{equation}
\forall j \in \cJ, \forall i \in \cI_j, \forall \bldk \in \cT_j, \forall \alpha \in \rrrm, 
\qquad \tau^{(\alpha)}_{i,j,\bldkk} = \sum_{\beta\in\rrrm, \; \beta\cH_{j,i}=\alpha} z^{(\beta)}_{i,j,\bldkk} \; . 
\label{eq:tau_definition}
\end{equation}
Then constraints~(\ref{eq:LP_3}) and~(\ref{eq:LP_5}) may be rewritten as
\begin{equation}
\forall j \in \cJ, \bldk \in \cT_j, \forall \alpha \in \rrrm,    
\qquad \sum_{i \in \cI_j} \tau_{i,j,\bldkk}^{(\alpha)} = k_\alpha \cdot \sigma_{j,\bldkk} \; . 
\label{eq:LP_3a}
\end{equation}
and
\begin{equation}
\forall j \in \cJ, \forall i \in \cI_j, \forall \bldk \in \cT_j,
\qquad 0 \le \sum_{\alpha \in \rrrm} \; \tau^{(\alpha)}_{i,j,\bldkk} \le \sigma_{j,\bldkk} \; . 
\label{eq:LP_5a}
\end{equation}
Note that the variables $\bldtau$ do not form part of the LP description, and therefore do not contribute to its complexity. However these variables will provide a convenient notational shorthand for proving results in this section.

We will prove that optimizing the cost function~(\ref{eq:object-function}) over this new polytope is equivalent to optimizing over $\cQ$. First, we state the following proposition, which will be necessary to prove this result.

\begin{proposition}
Let $M\in\nn$ and $\bldk\in\nn^{q-1}$. Also let $\Gamma\subseteq\cI$. Assume that for each $\alpha\in\rrrm$, we have a set of nonnegative integers $\cX^{(\alpha)} = \{ x^{(\alpha)}_i \; : \; i\in\Gamma \}$ and that together these satisfy the constraints 
\begin{equation}
\sum_{i\in\Gamma} x^{(\alpha)}_i = k_\alpha M 
\label{eq:lemma-14-req-1}
\end{equation}
for all $\alpha \in \rrrm$, 
and 
\begin{equation}
\sum_{\alpha \in \rrrm} x^{(\alpha)}_i \le M
\label{eq:lemma-14-req-2}
\end{equation} 
for all $i\in\Gamma$. 

Then, there exist nonnegative integers 
$\left\{ w_\bldaa \; : \; \blda \in \code_\Gamma^{(\bldkk)} \right\}$
such that
\begin{enumerate}
\item
\begin{equation}
\sum_{\bldaa \in \code_\Gamma^{(\bldkk)}} w_\bldaa = M \; .  
\label{eq:lemma14-claim-1}
\end{equation}
\item
For all $\alpha \in \rrrm$, $i\in\Gamma$, 
\begin{equation}
x^{(\alpha)}_i = 
\sum_{\bldaa \in \code_\Gamma^{(\bldkk)}, \; a_i = \alpha} w_\bldaa \; . 
\label{eq:lemma14-claim-2}
\end{equation}
\end{enumerate}
\label{prop:lemma-14}
\end{proposition}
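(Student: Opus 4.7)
The proof plan is by induction on $M$. The base case $M = 0$ is immediate: (\ref{eq:lemma-14-req-1}) forces every $x_i^{(\alpha)}$ to equal zero, whereupon taking all $w_\bldaa = 0$ satisfies both conclusions. For the inductive step, assuming the result at multiplicity $M-1$, I aim to extract a single codeword $\blda^\star \in \code_\Gamma^{(\bldkk)}$ with $x_i^{(a^\star_i)} \ge 1$ for every $i \in \Gamma$ (interpreting $x_i^{(0)}$ as defined below). Granted such $\blda^\star$, I set $\tilde{x}_i^{(\alpha)} = x_i^{(\alpha)} - \bldone\{a^\star_i = \alpha\}$ for $i \in \Gamma$, $\alpha \in \rrrm$, apply the inductive hypothesis to $\{\tilde{x}_i^{(\alpha)}\}$ at multiplicity $M-1$ to obtain weights $\tilde{w}_\bldaa$, and declare $w_\bldaa = \tilde{w}_\bldaa + \bldone\{\blda = \blda^\star\}$; one then checks directly that (\ref{eq:lemma14-claim-1}) and (\ref{eq:lemma14-claim-2}) hold.

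The technical core is producing $\blda^\star$, which I plan to do via a Hall-type matching argument. First, extend the data by defining $x_i^{(0)} = M - \sum_{\alpha \in \rrrm} x_i^{(\alpha)}$, which is a nonnegative integer by (\ref{eq:lemma-14-req-2}); setting $k_0 = |\Gamma| - \sum_{\alpha \in \rrrm} k_\alpha$, the identity $\sum_{i \in \Gamma} x_i^{(\alpha)} = k_\alpha M$ then holds for every $\alpha \in \rrr$, and $\sum_{\alpha \in \rrr} x_i^{(\alpha)} = M$ for every $i \in \Gamma$. Construct a bipartite graph $G$ whose left vertex set is $\Gamma$ and whose right vertex set is a multiset in which each $\alpha \in \rrr$ appears with multiplicity $k_\alpha$, so both sides have cardinality $|\Gamma|$, and place an edge between $i$ and each copy of $\alpha$ precisely when $x_i^{(\alpha)} \ge 1$.

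To verify Hall's condition in $G$, fix $S \subseteq \Gamma$ and let $T = \bigcup_{i \in S}\{\alpha \in \rrr : x_i^{(\alpha)} \ge 1\}$, so $|N(S)| = \sum_{\alpha \in T} k_\alpha$. For each $i \in S$ the support of $x_i^{(\cdot)}$ lies in $T$, hence $\sum_{\alpha \in T} x_i^{(\alpha)} = M$. A double-count then gives $M|S| = \sum_{i \in S}\sum_{\alpha \in T} x_i^{(\alpha)} \le \sum_{\alpha \in T}\sum_{i \in \Gamma} x_i^{(\alpha)} = M |N(S)|$, so $|S| \le |N(S)|$. A perfect matching yields $\phi: \Gamma \to \rrr$ with $|\phi^{-1}(\alpha)| = k_\alpha$ for every $\alpha$; setting $a^\star_i = \phi(i)$ gives $\bldkappa_\Gamma(\blda^\star) = \bldk$, and since $\sum_{i \in \Gamma} a^\star_i = \sum_{\alpha \in \rrrm} \alpha \cdot k_\alpha = 0$ (the implicit consistency condition that makes $\code_\Gamma^{(\bldkk)}$ nonempty, i.e.\ $\bldk$ lies in a set of the form $\cT_j$), we conclude $\blda^\star \in \code_\Gamma^{(\bldkk)}$.

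The step I expect to require most care is the Hall check together with its dependence on the extension to $\alpha = 0$: including $x_i^{(0)}$ on the right-hand side is essential so that the matched slot for $i$ may be $0$, and so that the bookkeeping $\sum_{\alpha \in \rrrm}\tilde{x}_i^{(\alpha)} \le M - 1$ survives the decrement even when $a^\star_i = 0$ (which is exactly when the matching forces $x_i^{(0)} \ge 1$, i.e.\ $\sum_{\alpha \in \rrrm} x_i^{(\alpha)} \le M - 1$). Once the graph is set up correctly, the remaining work — propagating the induction and assembling $w_\bldaa$ — is routine.
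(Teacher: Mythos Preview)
Your proposal is correct, and shares with the paper the overall induction-on-$M$ strategy of peeling off one codeword $\blda^\star \in \code_\Gamma^{(\bldkk)}$ at a time. The technical core differs, however. The paper builds a flow network with source $s$, sink $t$, and intermediate layers $\rrrm$ and $\Gamma$, shows via a min-cut argument that the max flow equals $\sum_{\alpha\in\rrrm} k_\alpha$, and reads off $\blda^\star$ from an integral max flow; this handles your condition (i). To ensure the analogue of your condition on ``critical'' indices $i$ (those with $\sum_{\alpha\in\rrrm} x_i^{(\alpha)} = M$, which must receive a nonzero label), the paper then performs an additional flow-rerouting argument split into two cases. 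Your trick of adjoining the symbol $0$ with multiplicity $k_0 = |\Gamma| - \sum_{\alpha\in\rrrm} k_\alpha$ on the right side of a bipartite graph, and applying Hall's theorem once, is genuinely neater: the critical-vertex condition falls out automatically because such an $i$ has $x_i^{(0)} = 0$ and hence no edge to any copy of $0$, so the matching cannot assign it $0$. The flow and matching viewpoints are of course equivalent in bipartite graphs, but your augmentation collapses the paper's two-stage argument into a single Hall check; the price is only the mild side observation that $k_0 \ge 0$, which follows from summing~(\ref{eq:lemma-14-req-1}) over $\alpha$ and comparing with~(\ref{eq:lemma-14-req-2}).
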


The proof of this proposition appears in the Appendix. 
We now prove the main result.

\begin{theorem}
The set $\bar{\cU} = \{ \bldf : \exists \; \bldsigma, \bldz \mbox{ s.t. } (\bldf, \bldsigma, \bldz) \in \cU \}$
is equal to the set $\bar{\cQ} = 
\{ \bldf : \exists \; \bldw \mbox{ s.t. } (\bldf, \bldw) \in \cQ \}$. 
Therefore, optimizing the linear cost function~(\ref{eq:object-function}) over $\cU$ is equivalent to optimizing over $\cQ$. 
\end{theorem}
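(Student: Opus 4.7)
The plan is to establish the two inclusions $\bar{\cQ} \subseteq \bar{\cU}$ and $\bar{\cU} \subseteq \bar{\cQ}$ by explicit constructions of the auxiliary variables in each direction; the equivalence of the two LPs then follows immediately because the objective function $\boldsymbol{\Lambda}(\bldy)\bldf^T$ depends only on $\bldf$.

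For the inclusion $\bar{\cQ} \subseteq \bar{\cU}$, given any $(\bldf,\bldw)\in\cQ$ I would aggregate the local-codeword weights according to their compositions: for each $j\in\cJ$, $\bldk\in\cT_j$, $i\in\cI_j$, $\beta\in\rrrm$, set
\[
\sigma_{j,\bldkk} \,=\, \sum_{\bldbb\in\code_j,\,\bldkappa_j(\bldbb)=\bldkk} w_{j,\bldbb},\qquad
z^{(\beta)}_{i,j,\bldkk} \,=\, \sum_{\bldbb\in\code_j,\,\bldkappa_j(\bldbb)=\bldkk,\,b_i=\beta} w_{j,\bldbb}.
\]
Nonnegativity~(\ref{eq:LP_4}), the partition identities~(\ref{eq:LP_1}),~(\ref{eq:LP_2}), and the bound~(\ref{eq:LP_5}) are immediate from~(\ref{eq:equation-polytope-3})--(\ref{eq:equation-polytope-5}). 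Constraint~(\ref{eq:LP_3}) follows by swapping the order of summation: the double sum $\sum_{i\in\cI_j,\,\beta\cH_{j,i}=\alpha}z^{(\beta)}_{i,j,\bldkk}$ counts each $\bldb$ in the fibre $\bldkappa_j^{-1}(\bldk)$ exactly $(\bldkappa_j(\bldb))_\alpha = k_\alpha$ times, producing $k_\alpha\sigma_{j,\bldkk}$.

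For the reverse inclusion $\bar{\cU}\subseteq\bar{\cQ}$, given $(\bldf,\bldsigma,\bldz)\in\cU$ I would express the desired $\bldw$ as a mixture indexed by $\bldk$,
\[
w_{j,\bldbb} \,=\, \sum_{\bldkk\in\cT_j,\,\sigma_{j,\bldkk}>0} \sigma_{j,\bldkk}\,\hat{w}^{(\bldkk)}_{j,\bldbb},
\]
where for each $(j,\bldk)$ the weights $\{\hat{w}^{(\bldkk)}_{j,\bldbb}\}$ form a probability distribution supported on $\{\bldb\in\code_j:\bldkappa_j(\bldb)=\bldk\}$ whose coordinate marginals satisfy $\sum_{\bldbb:\,b_i=\beta}\hat{w}^{(\bldkk)}_{j,\bldbb}=z^{(\beta)}_{i,j,\bldkk}/\sigma_{j,\bldkk}$. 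With this in hand, the polytope constraints~(\ref{eq:equation-polytope-3})--(\ref{eq:equation-polytope-5}) for $\cQ$ reduce to a one-line verification: summation over $\bldkk$ of the marginals, combined with~(\ref{eq:LP_1}), yields $\sum_{\bldbb:\,b_i=\beta}w_{j,\bldbb}=f_i^{(\beta)}$. Existence of $\hat{w}^{(\bldkk)}$ will be reduced to Proposition~\ref{prop:lemma-14}: after clearing a common denominator $M$, the inputs $x^{(\alpha)}_i := M\,\tau^{(\alpha)}_{i,j,\bldkk}/\sigma_{j,\bldkk}$ with $\Gamma=\cI_j$ are nonnegative integers, and the hypotheses~(\ref{eq:lemma-14-req-1}),~(\ref{eq:lemma-14-req-2}) are exactly~(\ref{eq:LP_3a}) rescaled and~(\ref{eq:LP_5a}) rescaled. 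The proposition supplies a distribution at the ``$\blda$-level'' on the SPC code $\code_{\cI_j}$, which I then refine to a $\bldb$-level distribution on $\code_j$ via the shadow map $\bldb\mapsto\blda:=(b_i\cH_{j,i})_i$, by splitting the weight on each $a_i=\alpha$ among the $\beta\in\rrrm$ with $\beta\cH_{j,i}=\alpha$ in proportions $z^{(\beta)}_{i,j,\bldkk}/\tau^{(\alpha)}_{i,j,\bldkk}$; rescaling by $1/M$ and then by $\sigma_{j,\bldkk}$ returns the rational weights. The real-valued case follows by the density of rational feasible points together with closedness of the two rational polyhedra.

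The main obstacle is the two-level structure of the variables $z^{(\beta)}$ versus the aggregated $\tau^{(\alpha)}$ in the presence of a ring that may admit zero divisors. When $\cH_{j,i}$ is not a unit the shadow map $\bldb\mapsto(b_i\cH_{j,i})_i$ fails to be bijective onto $\code_{\cI_j}$, so the refinement of the $\blda$-level decomposition produced by Proposition~\ref{prop:lemma-14} into a $\bldb$-level decomposition must be performed carefully---in particular to account for coordinates where $a_i=0$ but $b_i\neq 0$ is realised via a relation $b_i\cH_{j,i}=0$, and to confirm that the marginal counts at the $\beta$-level are genuinely preserved. This bookkeeping, together with the rational/real reduction, is the delicate step; Proposition~\ref{prop:lemma-14} is tailored precisely to it.
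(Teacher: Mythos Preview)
Your proposal is correct and follows essentially the same route as the paper. The forward inclusion is identical (aggregate $w_{j,\bldbb}$ by composition $\bldkappa_j(\bldb)=\bldk$); for the reverse inclusion both arguments rescale $\tau^{(\alpha)}_{i,j,\bldkk}/\sigma_{j,\bldkk}$ to integers, invoke Proposition~\ref{prop:lemma-14} with $\Gamma=\cI_j$ to obtain an $\blda$-level decomposition on $\code_\Gamma^{(\bldkk)}$, and then refine coordinate-wise to the $\bldb$-level by distributing mass among the preimages $\{\beta:\beta\cH_{j,i}=\alpha\}$ in the proportions $z^{(\beta)}_{i,j,\bldkk}/\tau^{(\alpha)}_{i,j,\bldkk}$---exactly the paper's construction of the words $\bldb^{(1)},\dots,\bldb^{(w_{\blda})}$.

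One small difference worth noting: the paper restricts to \emph{vertices} of $\cU$ (hence rational) and tacitly relies on convexity of $\bar{\cQ}$ to conclude the full set inclusion, whereas you handle general points via density of rationals and closedness of the projected polyhedra. Both closures are valid; your formulation is arguably cleaner for stating the set equality $\bar{\cU}=\bar{\cQ}$ as opposed to merely the LP-optimality equivalence. The zero-divisor bookkeeping you flag (coordinates with $b_i\neq 0$ but $b_i\cH_{j,i}=0$) is indeed the point requiring care, and it is handled in the paper in the same somewhat implicit manner you describe.
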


\begin{proof}
\begin{enumerate}
\item
Suppose, $(\bldf, \bldw) \in \cQ$. For all $j \in \cJ, \bldk \in \cT_j$, we define
\[
\sigma_{j,\bldkk} = \sum_{\bldbb \in \code_j, \; \bldsubkappa_j(\bldbb) = \bldkk} w_{j,\bldbb} \; ,  
\]
and for all $j \in \cJ, \; i \in \cI_j, \; \bldk \in \cT_j$, $\alpha \in \rrrm$, we define
\[
z^{(\alpha)}_{i,j,\bldkk} = 
\sum_{\bldbb \in \code_j, \; \bldsubkappa_j(\bldbb) = \bldkk, \; b_i = \alpha} w_{j,\bldbb} \; ,  
\]

It is straightforward to check that constraints~(\ref{eq:LP_4}) and~(\ref{eq:LP_5}) are satisfied by these definitions. 

For every $j \in \cJ, \; i \in \cI_j, \; \alpha \in \rrrm$, 
we have by (\ref{eq:equation-polytope-5})
\begin{eqnarray*}
f^{(\alpha)}_i & = & \sum_{\bldbb \in \code_j, \; b_i = \alpha} w_{j,\bldbb} \\
    & = & \sum_{\bldkk \in \cT_j} \quad \sum_{\bldbb \in \code_j, \; \bldsubkappa_j(\bldbb) = \bldkk, \; b_i = \alpha} w_{j,\bldbb} \\
    & = & \sum_{\bldkk \in \cT_j} z^{(\alpha)}_{i,j,\bldkk} \; , 
\end{eqnarray*}
and thus constraint~(\ref{eq:LP_1}) is satisfied. 

Next, for every $j \in \cJ$, 
we have by~(\ref{eq:equation-polytope-4})
\begin{eqnarray*}
1 & = & \sum_{\bldbb \in \code_j} w_{j,\bldbb} \\
  & = & \sum_{\bldkk \in \cT_j} \quad \sum_{\bldbb \in \code_j, \bldsubkappa_j(\bldbb) = \bldkk} w_{j,\bldbb} \\
  & = & \sum_{\bldkk \in \cT_j} \sigma_{j, \bldkk} \; , 
\end{eqnarray*}
and thus constraint~(\ref{eq:LP_2}) is satisfied. 

Finally, for every $j \in \cJ, \; \bldk \in \cT_j, \; \alpha \in \rrrm$, 
\begin{eqnarray*}
&& \hspace{-7ex} \sum_{i \in \cI_j, \; \beta\in\rrrm, \; \beta\cH_{j,i}=\alpha} z_{i,j,\bldkk}^{(\beta)} \\
& = & \sum_{i \in \cI_j, \; \beta\in\rrrm, \; \beta\cH_{j,i}=\alpha} \quad \sum_{\bldbb \in \code_j, \; \bldsubkappa_j(\bldbb) = \bldkk, \; b_i = \beta} w_{j,\bldbb} \\
  & = & \sum_{\bldbb \in \code_j, \; \bldsubkappa_j(\bldbb) = \bldkk} \quad 
        \sum_{i \in \cI_j, \; b_i \cH_{j,i} = \alpha} w_{j,\bldbb} \\
  & = & \sum_{\bldbb \in \code_j, \; \bldsubkappa_j(\bldbb) = \bldkk} k_\alpha \cdot w_{j, \bldbb} \\
  & = & k_\alpha \cdot \sigma_{j,\bldkk} \; .
\end{eqnarray*}
Thus, constraint~(\ref{eq:LP_3}) is also satisfied. 
This completes the proof of the first part of the theorem.
\item
Now assume $(\bldf, \bldsigma, \bldz)$ is a vertex of the polytope $\cU$, and so all variables are rational, as are the variables $\bldtau$. 
Next, fix some $j \in \cJ, \bldk \in \cT_j$, and consider the sets
\[
\cX_0^{(\alpha)}= \left\{ \frac{\tau^{(\alpha)}_{i,j,\bldkk}}{\sigma_{j,\bldkk}} 
\; : \; i\in\cI_j \right\} \; . 
\] 
for $\alpha\in\rrrm$. By constraint~(\ref{eq:LP_5a}), for each $\alpha\in\rrrm$, all the values in the set $\cX_0^{(\alpha)}$ are rational numbers between 0 and 1. Let $\mu$ be the lowest common denominator of all the numbers in all the sets $\cX_0^{(\alpha)}$, $\alpha\in\rrrm$. Let 
\[
\cX^{(\alpha)}= \left\{ \mu \cdot \frac{\tau^{(\alpha)}_{i,j,\bldkk}}{\sigma_{j,\bldkk}} 
\; : \; i\in\cI_j \right\} \; ,
\] 
for each $\alpha\in\rrrm$. The sets $\cX^{(\alpha)}$ consist of integers between 0 and $\mu$. By constraint~(\ref{eq:LP_3a}), 
we must have that for every $\alpha \in \rrrm$, the sum of the elements in $\cX^{(\alpha)}$ is equal to $k_\alpha \mu$. 
By constraint~(\ref{eq:LP_5a}), we have 
\[
\sum_{\alpha \in \rrrm} \mu \cdot \frac{\tau^{(\alpha)}_{i,j,\bldkk}}{\sigma_{j,\bldkk}} \le \mu \; 
\]
for all $i \in \cI_j$. 

We now apply the result of Proposition~\ref{prop:lemma-14} with $\Gamma = \cI_j$, $M=\mu$ and with the sets $\cX^{(\alpha)}$ defined as above (here $N=d_j$). Set the variables $\{ w_\bldaa \; : \; \blda \in \code_\Gamma^{(\bldkk)} \}$ according to Proposition~\ref{prop:lemma-14}. 

Next, for $\bldk\in\cT_j$, we show how to define the variables 
$\{ w'_\bldbb \; : \; \bldb \in \code_j, \; \bldkappa_j(\bldb) = \bldk \}$. 
Initially, we set $w'_\bldbb = 0$ for all  $\bldb \in \code_j, \; \bldkappa_j(\bldb) = \bldk$. 
Observe that the values $\mu \cdot z^{(\beta)}_{i,j,\bldkk} /\sigma_{j,\bldkk}$ are 
nonnegative integers for every $i \in \cI, \; j \in \cJ, \; \bldk \in \cT_j, \; \beta \in \rrrm$.

For every $\blda \in \code_\Gamma^{(\bldkk)}$, 
we define $w_\bldaa$ words $\bldb^{(1)},$ $\bldb^{(1)}, \cdots, \bldb^{(w_\bldaa)} \in \code_j$. 
Assume some ordering on the elements $\beta \in \rrrm$ satisfying $\beta \cH_{j,i} = a_i$,
namely $\beta_1, \beta_2, \cdots, \beta_{\ell_0}$ for some positive integer $\ell_0$.
For $i \in \cI_j$, $\bldb_{i}^{(\ell)}$ ($\ell =1 ,2, \cdots, w_\bldaa$) is defined as follows: 
$\bldb_{i}^{(\ell)}$ is equal to $\beta_1$ for the first $\mu \cdot z^{(\beta_1)}_{i,j,\bldkk} /\sigma_{j,\bldkk}$ 
words $\bldb^{(1)}, \bldb^{(2)}, \cdots, \bldb^{(w_\bldaa)}$; $\bldb_{i}^{(\ell)}$ is equal to $\beta_2$ for the next 
$\mu \cdot z^{(\beta_2)}_{i,j,\bldkk} /\sigma_{j,\bldkk}$
words, and so on. 
For every $\bldb \in \code_j$ we define 
\[
w'_\bldbb = \left| \left\{ i \in \{ 1, 2, \cdots, w_\bldaa \} \; : \; \bldb^{(i)} = \bldb \right\} \right| \; . 
\] 

Finally, for every $\bldb \in \code_j, \bldkappa_j(\bldb) = \bldk$, we define
\[
w_{j,\bldbb} = \frac{\sigma_{j, \bldkk}}{\mu} \cdot w'_\bldbb \; . 
\]
 
Using Proposition~\ref{prop:lemma-14}, 
\[
\sum_{\bldaa \in \code_\Gamma^{(\bldkk)}, \; a_i = \alpha} w_\bldaa  = 
\mu \cdot \frac{\tau^{(\alpha)}_{i,j,\bldkk}}{\sigma_{j,\bldkk}} 
= \sum_{\beta \; : \; \beta \cH_{j,i} = \alpha} \mu \cdot \frac{z^{(\beta)}_{i,j,\bldkk}}{\sigma_{j,\bldkk}} \; , 
\]
and so all $\bldb^{(1)}, \bldb^{(2)}, \cdots, \bldb^{(w_\bldaa)}$ (for all $\blda \in \code_\Gamma^{(\bldkk)}$) are well-defined. 
It is also straightforward to see that $\bldb^{(\ell)} \in \code_j$ for $\ell  = 1, 2, \cdots, w_\bldaa$. 
Next, we check that the newly-defined $w_{j,\bldbb}$ 
satisfy~(\ref{eq:equation-polytope-3})-(\ref{eq:equation-polytope-5}) for every $j \in \cJ, \; \bldb \in \code_j$. 

It is easy to see that $w_{j,\bldbb} \ge 0$; therefore~(\ref{eq:equation-polytope-3}) holds. 
%Let $\bldk = \bldkappa_j(\bldb)$.   
%Then, we obtain that 
%\[
%w_{j,\bldbb} = \frac{\sigma_{j, \bldkk}}{\mu} \cdot w'_\bldbb \le \frac{\sigma_{j, \bldkk}}{\mu} \cdot w_\bldaa 
%\le 1 \; , 
%\]
%and therefore (2) ??? holds.  
By Proposition~\ref{prop:lemma-14} we obtain
\[
\sigma_{j,\bldkk} = \sum_{\bldbb \in \code_j, \; \bldsubkappa_j(\bldbb) = \bldkk} w_{j,\bldbb} \; ,  
\]
for all $j \in \cJ, \bldk \in \cT_j$, and
\[
\tau^{(\alpha)}_{i,j,\bldkk} = 
\sum_{\bldbb \in \code_j, \; \bldsubkappa_j(\bldbb) = \bldkk, \; b_i \cH_{j,i} = \alpha} w_{j,\bldbb} \; ,
\]
for all $j \in \cJ, \; i \in \cI_j, \; \bldk \in \cT_j, \; \alpha \in \rrrm$. 
Let $\beta \cH_{j,i} = \alpha$. Since
\[
\tau^{(\alpha)}_{i,j,\bldkk} = \sum_{\beta \; : \; \beta \cH_{j,i} = \alpha} z_{i,j,\bldkk}^{(\beta)} \; , 
\]
by the definition of $w_{j,\bldbb}$ it follows that  
\begin{eqnarray*}
\sum_{\bldbb \in \code_j, \; \bldsubkappa(\bldbb) = \bldkk, \; b_i = \beta} w_{j, \bldbb} & = & \frac{z_{i,j,\bldkk}^{(\beta)}}{\tau^{(\alpha)}_{i,j,\bldkk}} \quad \cdot \quad 
\sum_{\bldbb \in \code_j, \; \bldsubkappa(\bldbb) = \bldkk, \; b_i \cH_{j,i} = \alpha} w_{j, \bldbb} \\
& = & z_{i,j,\bldkk}^{(\beta)} \; , 
\end{eqnarray*}
where the first equality is due to the definition of the words $\bldb^{(\ell)}$, $\ell =1 ,2, \cdots, w_\bldaa$. 

By constraint~(\ref{eq:LP_2}) we have, for all $j \in \cJ$,
\begin{eqnarray*}
1 & = & \sum_{\bldkk \in \cT_j} \sigma_{j,\bldkk} \\ 
  & = & \sum_{\bldkk \in \cT_j} \quad \sum_{\bldbb \in \code_j, \; \bldsubkappa_j(\bldbb) = \bldkk} w_{j,\bldbb} \\
  & = & \sum_{\bldbb \in \code_j} w_{j,\bldbb} \; ,
\end{eqnarray*}
thus satisfying~(\ref{eq:equation-polytope-4}).

Finally, by constraint~(\ref{eq:LP_1}) we obtain, for all $j \in \cJ, i \in \cI_j, \beta \in \rrrm$,
\begin{eqnarray*}
f_i^{(\beta)} & = & \sum_{\bldkk \in \cT_j} z_{i,j,\bldkk}^{(\beta)} \\
   & = & \sum_{\bldkk \in \cT_j} \quad \sum_{\bldbb \in \code_j, \; \bldsubkappa_j(\bldbb) = \bldkk, \; b_i = \beta} w_{j,\bldbb} \\   
   & = & \sum_{\bldbb \in \code_j, \; b_i = \beta} w_{j,\bldbb} \; ,
\end{eqnarray*}
thus satisfying~(\ref{eq:equation-polytope-5}). 
\end{enumerate}
\end{proof}

%------------------------------------------------------------------------
%-----------------------------------------------------------------------
% Chertkov-Stepanov and Feldman et al. -- ISIT 2007 - like stuff

\section{Cascaded Polytope Representation}
\label{sec:cascaded_polytope}

In this section we show that the ``cascaded polytope" representation described in~\cite{Chertkov},~\cite{Feldman-Yang} and~\cite{Feldman-Yang2} can be extended to nonbinary codes in a straightforward manner. Below, we elaborate on the details.

For $j \in \cJ$, consider the $j$-th row $\cH_j$ of the parity-check matrix 
$\cH$ over $\rrr$, and recall that 
\[
\code_j = \left\{ (b_i)_{i \in \cI_j} \; : \; \sum_{i \in \cI_j} b_i \cdot \cH_{j,i} = 0 \right\} \; .
\]
Assume that $\cI_j = \{ i_1, i_2, \cdots, i_{d_j} \}$ and denote $\cL_j = \{ 1, 2, \cdots, d_j-3 \}$.  
We introduce new variables 
\[
\bldchi^j = (\chi^j_i)_{i \in \cL_j} \; , 
\]
and denote 
\[
\bldchi = ( \bldchi^j )_{j \in \cJ} \; .  
\]

We define a new linear code $\code^{(\chi)}_j$ of length $2d_j-3$ by the
$(d_j - 2) \times (2d_j - 3)$ 
parity-check matrix $\cF_j$ associated with the following set of parity-check equations over $\rrr$:
\begin{enumerate}
\item
\begin{equation}
b_{i_1} \cH_{j,i_1} + b_{i_2} \cH_{j,i_2} + \chi^j_1 = 0 \; . 
\label{eq:kai-1}
\end{equation}
\item
For every $\ell = 1, 2, \cdots, d_j-4$, 
\begin{equation}
- \chi^j_\ell + b_{i_{\ell+2}} \cH_{j,i_{\ell+2}} + \chi^j_{\ell+1} = 0 \; . 
\label{eq:kai-2}
\end{equation}
\item
\begin{equation}
- \chi^j_{d_j-3} + b_{i_{d_j-1}} \cH_{j,i_{d_j-1}} + b_{i_{d_j}} \cH_{j,i_{d_j}} = 0 \; . 
\label{eq:kai-3}
\end{equation}
\end{enumerate}
We also define a linear code $\code^{(\chi)}$ of length $n + \sum_{j \in \cJ} (d_j-3)$ defined by the
$(\sum_{j \in \cJ} (d_j - 2)) \times (n + \sum_{j \in \cJ} (d_j-3))$ 
parity-check matrix $\cF$ associated with all the sets of parity-check equations~(\ref{eq:kai-1})-(\ref{eq:kai-3}) 
(for all $j \in \cJ$). We adopt the notation $\tilde{\bldb} = (\bldb \; | \;  \bldchi^j)$ for codewords of $\code^{(\chi)}_j$, and $\tilde{\bldc} = (\bldc \; | \;  \bldchi)$ for codewords of $\code^{(\chi)}$.

\begin{example}
Figure~\ref{cap:cascade-present} presents an example of the Tanner graph of a local code 
\[
\code_j = \left\{ (b_{i_1} \; b_{i_2} \; b_{i_3} \; b_{i_4} \; b_{i_5} \; b_{i_6}) \; : \; b_{i_1} + 2 b_{i_2} + 2 b_{i_3} + b_{i_4} + b_{i_5} + 2 b_{i_6} = 0 \right\} 
\]
of length $d_j = 6$ over $\rrr = \mathbb{Z}_3$, and the Tanner graph of the corresponding code $\code_j^{(\chi)}$ of length $9$ (three extra variables were added). The degree of every parity-check vertex in the Tanner graph of $\code_j^{(\chi)}$ is at most $3$. 
\begin{figure}
\begin{center}
\includegraphics[%
  width=0.7\columnwidth,
  keepaspectratio]{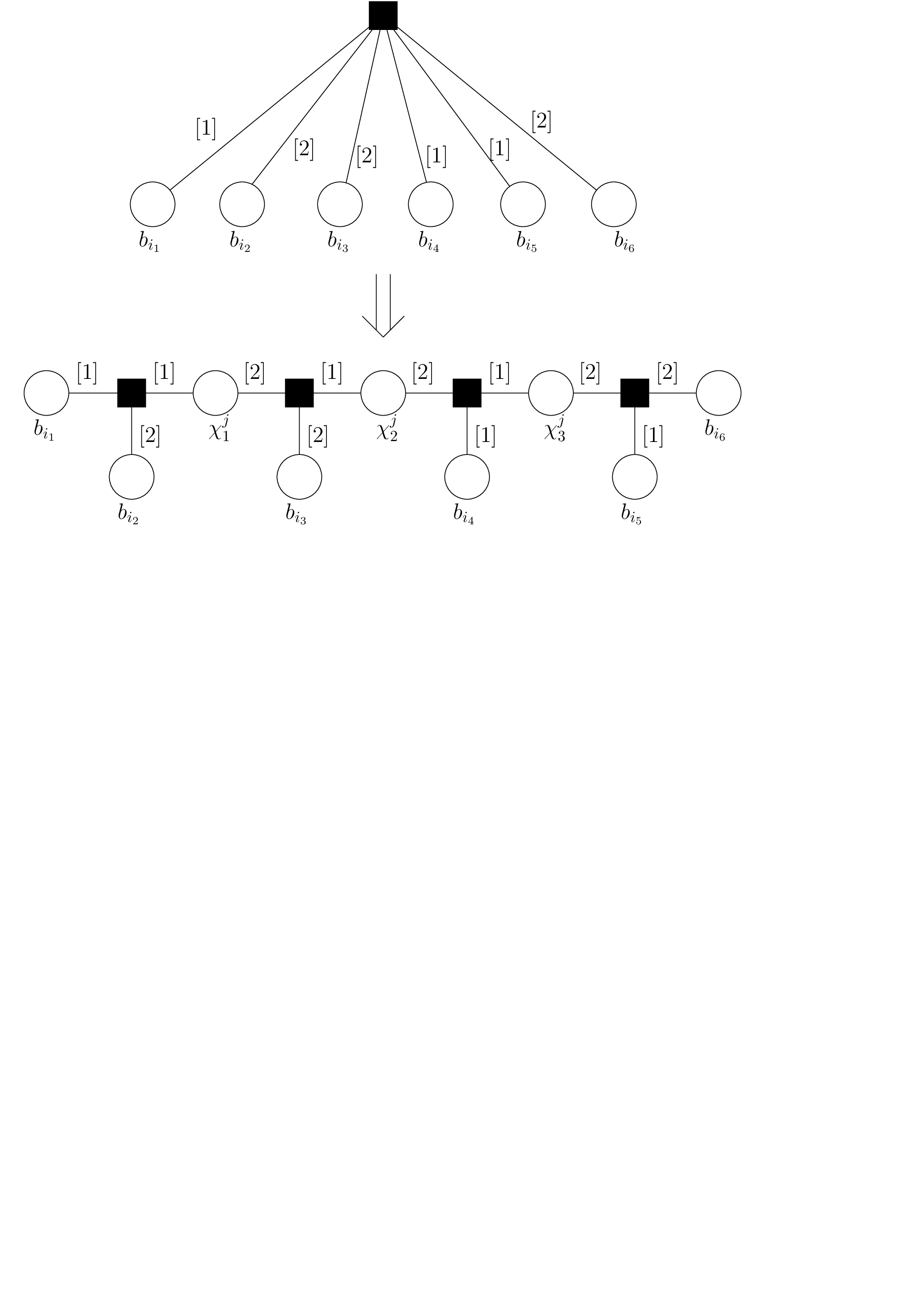}
\end{center}
\caption{\label{cap:cascade-present} Example of the Tanner graph of a local code $\code_j = \left\{ (b_{i_1} \; b_{i_2} \; b_{i_3} \; b_{i_4} \; b_{i_5} \; b_{i_6}) \; : \; b_{i_1} + 2 b_{i_2} + 2 b_{i_3} + b_{i_4} + b_{i_5} + 2 b_{i_6} = 0 \right\}$ of length $d_j = 6$ over $\rrr = \mathbb{Z}_3$, and its transformation into the Tanner graph of the corresponding code $\code_j^{(\chi)}$. Note that the degree of each parity-check vertex in the transformed graph is equal to $3$. }
\end{figure}
\end{example}

The following theorem relates the codes $\code_j$ and $\code^{(\chi)}_j$. 

\begin{theorem}
The vector $\bldb = (b_i)_{i \in \cI_j} \in \rrr^{d_j}$ is a codeword of $\code_j$ if and only if there exists a vector 
$\bldchi^j \in \rrr^{d_j-3}$ such that $(\bldb \; | \;  \bldchi^j) \in \code^{(\chi)}_j$. 
\label{thrm:code-equivalence}
\end{theorem}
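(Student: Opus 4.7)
The plan is to prove the equivalence by two direct arguments, one constructive for the forward direction and one telescoping for the reverse direction.

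For the ``only if'' direction, suppose $\bldb \in \code_j$, so $\sum_{i \in \cI_j} b_i \cH_{j,i} = 0$. I would define the auxiliary variables $\bldchi^j = (\chi^j_1, \chi^j_2, \ldots, \chi^j_{d_j-3})$ recursively by first setting
\[
\chi^j_1 = -\bigl( b_{i_1} \cH_{j,i_1} + b_{i_2} \cH_{j,i_2} \bigr),
\]
which makes (\ref{eq:kai-1}) hold, and then, for $\ell = 1, 2, \ldots, d_j - 4$, setting
\[
\chi^j_{\ell+1} = \chi^j_\ell - b_{i_{\ell+2}} \cH_{j,i_{\ell+2}},
\]
which makes each instance of (\ref{eq:kai-2}) hold. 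An easy induction on $\ell$ gives the closed form
\[
\chi^j_\ell = -\sum_{k=1}^{\ell+1} b_{i_k} \cH_{j,i_k}
\]
for $\ell = 1, \ldots, d_j - 3$. In particular, $\chi^j_{d_j-3} = -\sum_{k=1}^{d_j-2} b_{i_k} \cH_{j,i_k}$, so
\[
-\chi^j_{d_j-3} + b_{i_{d_j-1}} \cH_{j,i_{d_j-1}} + b_{i_{d_j}} \cH_{j,i_{d_j}} = \sum_{i \in \cI_j} b_i \cH_{j,i} = 0,
\]
which is exactly (\ref{eq:kai-3}). Hence $(\bldb \,|\, \bldchi^j) \in \code^{(\chi)}_j$.

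For the ``if'' direction, suppose $(\bldb \,|\, \bldchi^j) \in \code^{(\chi)}_j$. I would simply add together equation (\ref{eq:kai-1}), all $d_j - 4$ equations (\ref{eq:kai-2}), and equation (\ref{eq:kai-3}). Each auxiliary variable $\chi^j_\ell$ appears with a $+$ sign in one equation and a $-$ sign in another, so all $\chi^j_\ell$ cancel telescopically. What remains is
\[
\sum_{i \in \cI_j} b_i \cH_{j,i} = 0,
\]
so $\bldb \in \code_j$.

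There is no real obstacle here; the only thing to be careful about is the indexing (the sum in the forward direction runs over exactly the right subset of $\cI_j$ at each stage, and the telescoping in the reverse direction really does eliminate every $\chi^j_\ell$). Because the argument is this elementary, I would keep the writeup short and let the recursion / telescoping do the work.
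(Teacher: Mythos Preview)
Your proposal is correct and essentially identical to the paper's own proof: the paper defines the $\chi^j_\ell$ by the same recursion for the forward direction (and also observes that (\ref{eq:kai-3}) follows by subtraction rather than via an explicit closed form, which amounts to the same thing), and for the reverse direction it likewise sums the equations (\ref{eq:kai-1})--(\ref{eq:kai-3}) so that the auxiliary variables cancel.
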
 

\begin{proof}
\begin{enumerate}
\item
Assume $\bldb = (b_i)_{i \in \cI_j} \in \code_j$. Define 
\begin{equation}
\hspace{-1ex}
\chi^j_\ell = \left\{ 
\begin{array}{cl} 
- b_{i_1} \cH_{j,i_1} - b_{i_2} \cH_{j,i_2} & \hspace{-1ex} \mbox{if } \ell = 1 \\
\chi^j_{\ell-1} - b_{i_{\ell+1}} \cH_{j,i_{\ell+1}} & \hspace{-1ex} 
\mbox{if } 2 \le \ell \le {d_j}-3 
\end{array} \right. \hspace{-2ex}
\label{eq:chi_as_fn_of_b}
\end{equation}
Then, obviously,~(\ref{eq:kai-1}) holds, and~(\ref{eq:kai-2}) holds for all $1 \le \ell \le {d_j}-4$. 
Finally,~(\ref{eq:kai-3}) follows from subtraction 
of~(\ref{eq:kai-1}) and~(\ref{eq:kai-2}) 
(for each $1 \le \ell \le {d_j}-4$) from the equation $\sum_{i \in \cI_j} b_i \cdot \cH_{j,i} = 0$. 
Therefore, $(\bldb \; | \;  \bldchi^j) \in \code^{(\chi)}_j$, as required. 
\item
Now, assume that $\bldb = (b_i)_{i \in \cI_j}$ is such that $(\bldb \; | \;  \bldchi^j) \in \code^{(\chi)}_j$ for some $\bldchi^j \in \rrr^{d_j-3}$, and thus~(\ref{eq:kai-1})--(\ref{eq:kai-3}) hold (in particular, (\ref{eq:kai-2}) holds
for all $1 \le \ell \le {d_j}-4$). 
We sum all the equalities in~(\ref{eq:kai-1})--(\ref{eq:kai-3}) and obtain that  
$\sum_{i \in \cI_j} b_i \cdot \cH_{j,i} = 0$. Therefore, $\bldb \in \code_j$. 
\end{enumerate}
\end{proof}

Note that from this theorem we may see that for every $\bldb \in \code_j$, there exists a unique 
$\bldchi^j = \bldchi^j(\bldb)$ such that $\tilde{\bldb} = (\bldb \; | \;  \bldchi^j) \in \code^{(\chi)}_j$, via (\ref{eq:chi_as_fn_of_b}); we may therefore use the notation $\tilde{\bldb}(\bldb) = (\bldb \; | \;  \bldchi^j(\bldb))$ to denote this unique completion, where $\bldchi^j(\bldb) = ( \chi_i^j(\bldb) )_{i \in \cL_j}$.

It follows from Theorem~\ref{thrm:code-equivalence} that 
the set of parity-check equations~(\ref{eq:kai-1})--(\ref{eq:kai-3}) for all $j \in \cJ$ 
equivalently describes the code $\code$. This description has at most $n + m \cdot (d-3)$ variables 
and $m \cdot (d-2)$ parity-check equations. However, the number of variables participating in
every parity-check equation is at most $3$. Therefore, the total number 
of variables and of constraints in the corresponding LP problem (defined by 
constraints~(\ref{eq:equation-polytope-3})-(\ref{eq:equation-polytope-5}) applied to the parity-check matrix $\cF$)  
is bounded from above by 
\[
(n + m (d-3))(q-1) + m(d-2) \cdot q^2  
\] 
and 
\[
m (d-2) (q^2 + 3q -2)  \; , 
\] 
respectively. 

In the sequel, we make use of some new notations, which we define next. 
First of all, with each parity-check equation prescribed by the matrix $\cF$, we associate a pair 
of indices $(j, \ell)$, $j \in \cJ$, $\ell = 1, 2, \cdots, d_j - 2$, where $j$ indicates the 
corresponding parity-check equation in $\cH$, and  $\ell$ indicates the serial number of the parity-check equation in the set of 
equations~(\ref{eq:kai-1})--(\ref{eq:kai-3}) corresponding to the $j$-th row of $\cH$. Denote by $\cI_{j,\ell} \subseteq \cI_j$ and $\cL_{j,\ell} \subseteq \cL_j$ 
the sets of indices $i$ of variables $b_i$ and $\chi_i^j$, respectively, 
corresponding to the non-zero entries in row $(j,\ell)$ of $\cF$. Then, each row of $\cF$ defines a single parity-check code $\code_{j,\ell}^{(\chi)}$. For any $\bldg \in \code_{j,\ell}^{(\chi)}$, we adopt the notation $\bldg = ( \bldg^b \; | \;  \bldg^{\chi} )$ where 
\[
\bldg^b = ( g_i^b )_{i \in \cI_{j,\ell}} \; ; \; \bldg^{\chi} = ( g_i^{\chi} )_{i \in \cL_{j,\ell}} \; .
\]

We denote by $\cS$ the polytope corresponding to the LP relaxation~(\ref{eq:equation-polytope-3})-(\ref{eq:equation-polytope-5}) for the code 
$\code^{(\chi)}$ with the parity-check matrix $\cF$. Recall that codewords of $\code^{(\chi)}$ are denoted $\tilde{\bldc} = (\bldc \; | \;  \bldchi)$. It is natural to represent points in $\cS$ as $((\bldf, \bldh), \bldz)$, 
where $\bldf = (f_i^{(\alpha)})_{i \in \cI, \; \alpha \in \rrrm}$ 
and $\bldh = (h_{j,i}^{(\alpha)})_{j \in \cJ, \; i \in \cL_j, \; \alpha \in \rrrm}$
are vectors of indicators corresponding to the entries $c_i$ $(i \in \cI)$ in $\bldc$ and 
$\chi^j_i$ $(j \in \cJ, \; i \in \cL_j)$ in $\bldchi$, respectively.
Here 
\[
\bldz = (z_{j, \ell, \bldgg})_{j \in \cJ, \; \ell = 1, 2, \cdots, d_j - 2, \; \bldgg \in \code^{(\chi)}_{j,\ell}}
\]
is a vector of weights associated with each parity-check equation 
$(j, \ell)$ and each codeword $\bldg \in \code^{(\chi)}_{j,\ell}$.

Similarly, for each $j \in \cJ$ we denote by $\cS_j$ the polytope corresponding to the LP relaxation~(\ref{eq:equation-polytope-3})-(\ref{eq:equation-polytope-5}) for the code $\code_j^{(\chi)}$, 
defined by the parity-check matrix $\cF_j$. Recall that codewords of $\code^{(\chi)}_j$ are denoted $\tilde{\bldb} = (\bldb \; | \;  \bldchi^j)$.
Then, it is also natural to represent points in $\cS_j$ as $((\hat{\bldf}_j, \hat{\bldh}_j), \hat{\bldz}_j)$, 
where $\hat{\bldf}_j = (f_i^{(\alpha)})_{i \in \cI_j, \; \alpha \in \rrrm}$ 
and $\hat{\bldh}_j = (h_{j,i}^{(\alpha)})_{i \in \cL_j, \; \alpha \in \rrrm}$ are vectors of indicators corresponding to the entries $b_i$ $(i \in \cI_j)$ in $\bldb$ and ${\chi^j_i}$ $(i \in \cL_j)$ in $\bldchi^j$, respectively.
Moreover, 
\[
\hat{\bldz}_j = (z_{j, \ell, \bldgg})_{\ell = 1, 2, \cdots, d_j - 2, \; \bldgg \in \code^{(\chi)}_{j,\ell}}
\]
is a vector of weights associated with each parity-check equation 
$(j, \ell)$ and each codeword $\bldg \in \code^{(\chi)}_{j,\ell}$. 

For each $j \in \cJ$, define the mapping $\bldXi_j$ analogously to the mapping $\bldXi$ with respect to the dimensionality of the code $\code_j^{(\chi)}$, namely 
\[
\bldXi_j \; : \; \rrr^{2d_j-3} \longrightarrow \{ 0, 1 \}^{(q-1)(2d_j-3)} \subset \mathbb{R}^{(q-1)(2d_j-3)} \; , 
\]
such that for $\tilde{\bldb} = (\bldb \; | \;  \bldchi^j) \in \code^{(\chi)}_j$,
\begin{equation*}
\bldXi_j(\tilde{\bldb}) = ( \bldxi(b_{i_1}) \; | \; \bldxi(b_{i_2}) \; | \; \cdots \; | \; \bldxi(b_{i_{d_j}}) 
\; | \; \bldxi(\chi^j_{1}) \; | \; \bldxi(\chi^j_{2}) \; | \; \cdots \; | \; \bldxi(\chi^j_{d_j-3}) ) \; .
\end{equation*}

The next lemma is similar to one of the claims of Proposition~10 in~\cite{KV-IEEE-IT}.  
\begin{lemma}
Let $\code$ be a code of length $n$ over $\rrr$ with parity-check matrix $\cH$, 
and let $\cQ(\cH)$ be the corresponding polytope of the LP relaxation, i.e. the set of points $(\bldf, \bldw)$ satisfying~(\ref{eq:equation-polytope-3})-(\ref{eq:equation-polytope-5}). Let $\bar{\cQ}(\cH)$ denote the projection of $\cQ$ onto the $\bldf$ variables, i.e. 
\[
\bar{\cQ}(\cH) = \{ \bldf : \exists \; \bldw \mbox{ s.t. } (\bldf, \bldw) \in \cQ \}
\]
Denote by $\ppp$ the set of \emph{normalized} graph-cover pseudocodeword vectors associated with $\cH$. 
Then, $\bar{\cQ}(\cH) = \overline{\ppp}$, where $\overline{\ppp}$ is the closure of ${\ppp}$
under the usual (Euclidean) metric in $\rr^{(q-1)n}$. 
\label{lemma:eq-polyts} 
\end{lemma}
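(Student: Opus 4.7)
\medskip
\noindent\textbf{Proof plan.} The plan is to prove the two inclusions $\overline{\ppp}\subseteq\bar{\cQ}(\cH)$ and $\bar{\cQ}(\cH)\subseteq\overline{\ppp}$ separately, in both cases using Corollary~\ref{thm:PCW_vector_equivalence} as the bridge between normalized graph-cover pseudocodeword vectors and feasible points of $\cQ$.

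\medskip
\noindent\textbf{Step 1 (the easy inclusion, $\overline{\ppp}\subseteq\bar{\cQ}(\cH)$).} Given any $\bldh\in\ppp$, by Corollary~\ref{thm:PCW_vector_equivalence} there is an LP pseudocodeword $(\bldh',\bldz)$ whose normalized pseudocodeword vector is $\bldh$. Dividing all coordinates of $(\bldh',\bldz)$ by $M$ produces a point $(\bldf,\bldw)$ whose coordinates are rational and which, by comparing constraints~(\ref{eq:LP_PCW_1})--(\ref{eq:LP_PCW_2}) against~(\ref{eq:equation-polytope-3})--(\ref{eq:equation-polytope-5}), lies in $\cQ$ with $\bldf=\bldh$. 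Hence $\bldh\in\bar{\cQ}(\cH)$, so $\ppp\subseteq\bar{\cQ}(\cH)$. Because $\cQ$ is a bounded polytope (by~(\ref{eq:equation-polytope-0})--(\ref{eq:equation-polytope-1})) its projection $\bar{\cQ}(\cH)$ is compact, hence closed, and we obtain $\overline{\ppp}\subseteq\bar{\cQ}(\cH)$.

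\medskip
\noindent\textbf{Step 2 (the reverse inclusion, $\bar{\cQ}(\cH)\subseteq\overline{\ppp}$).} Fix $\bldf\in\bar{\cQ}(\cH)$ and choose $\bldw$ with $(\bldf,\bldw)\in\cQ$. The polytope $\cQ$ is cut out by a finite system of linear (in)equalities with integer coefficients (see~(\ref{eq:equation-polytope-3})--(\ref{eq:equation-polytope-5})), so the rational points of $\cQ$ are dense in $\cQ$. Explicitly, since every vertex of $\cQ$ has rational coordinates, we may write $(\bldf,\bldw)$ as a convex combination of rational vertices, and then approximate the convex coefficients by rationals to obtain a sequence $(\bldf^{(k)},\bldw^{(k)})\in\cQ\cap\qq^{\dim\cQ}$ with $\bldf^{(k)}\to\bldf$. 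Clearing denominators in the $k$-th approximation yields an integer-valued pair $(\bldh^{(k)},\bldz^{(k)}):=M_k\cdot(\bldf^{(k)},\bldw^{(k)})$; a direct comparison of constraints shows that $(\bldh^{(k)},\bldz^{(k)})$ is an LP pseudocodeword with normalization factor $M_k$. By Corollary~\ref{thm:PCW_vector_equivalence}, its normalized graph-cover pseudocodeword vector, namely $\bldf^{(k)}$, lies in $\ppp$. Thus $\bldf=\lim_k \bldf^{(k)}\in\overline{\ppp}$.

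\medskip
\noindent\textbf{Main obstacle.} Each individual piece is straightforward; the delicate part is the rational density claim in Step~2. One must be careful that the approximating points actually lie \emph{in} $\cQ$ (not merely close to it) so that the clearing-of-denominators trick yields a genuine LP pseudocodeword. Using the convex-combination-of-rational-vertices argument is the cleanest way to guarantee feasibility is preserved, since the vertices of the rational polytope $\cQ$ are themselves rational and the rationals are dense in the simplex of convex coefficients. Once this density is in place, Corollary~\ref{thm:PCW_vector_equivalence} does all the remaining work of producing actual $M$-covers realizing each approximation.
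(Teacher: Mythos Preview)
Your argument is correct and follows essentially the same two-inclusion strategy as the paper, with one notable simplification in Step~2. The paper proves the stronger intermediate claim that \emph{every rational point of the projection} $\bar{\cQ}(\cH)$ already lies in $\ppp$; since a rational $\bldf\in\bar{\cQ}(\cH)$ need not obviously lift to a rational $(\bldf,\bldw)\in\cQ$, the paper invokes Carath\'eodory's theorem to write each $\hat{\bldf}_j$ as a convex combination of at most $(q-1)n+1$ local-codeword images, and then Cramer's rule to force the convex coefficients to be rational, from which an LP pseudocodeword is assembled. You bypass this by approximating directly inside $\cQ$ (where the vertices are rational), so both $\bldf^{(k)}$ and $\bldw^{(k)}$ are rational for free and the denominator-clearing step is immediate. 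Your route is shorter and suffices for the lemma as stated; the paper's route yields the additional fact $\bar{\cQ}(\cH)\cap\qq^{(q-1)n}\subseteq\ppp$, which is not used elsewhere but is of independent interest.
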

\begin{proof}
Generally, the proof is similar to the proof of the relevant parts of Proposition~10 in~\cite{KV-IEEE-IT}. It is largely based on the equivalence between the set of graph-cover pseudocodewords and the set of LP pseudocodewords (Theorem~\ref{thm:PCW_equivalence} and Corollary~\ref{thm:PCW_vector_equivalence}). 
We avoid many technical details, and mention only the main ideas. The proof consists of proving two main claims.
\begin{enumerate}
\item
$\ppp \subseteq \bar{\cQ}(\cH)$. \\
Given any normalized graph-cover pseudocodeword vector $\bldf \in \ppp$, 
by Corollary~\ref{thm:PCW_vector_equivalence} there must exist 
$\bldw$ with $(\bldf, \bldw) \in \cQ(\cH)$. Therefore $\bldf \in \bar{\cQ}(\cH)$.
\item
If a point in $\bar{\cQ}(\cH)$ has all rational entries, then it must also be in $\ppp$. \\
The proof follows the lines of the proof of Lemma~56 in~\cite{KV-IEEE-IT}. Let $(\bldf, \bldw) \in \cQ(\cH)$ be a point such that all entries in $\bldf$ are rational. Then for all $j \in \cJ$, the vector $\hat{\bldf}_j = ( \bldf_i )_{i \in \cI_j}$ lies in the convex hull $\cK(\code_j)$. For convenience in what follows, denote the index set $\Psi = \{ 1,2, \cdots , (q-1)n+1 \}$. 
Using Carath\'{e}odory's Theorem~\cite[p. 10]{Barvinok}, for all $j \in \cJ$ we may write $\bldf = \bldmu^{(j)} \bldP^{(j)}$ where $\bldmu^{(j)} = (\mu_i^{(j)})_{i \in \Psi}$ is a row vector of length $|\Psi|$ whose elements sum to unity, and $\bldP^{(j)}$ is a $|\Psi| \times (|\Psi|-1)$ matrix such that for each $i \in \Psi$, the $i$-th row of $\bldP^{(j)}$, denoted $\bldp^{(j)}_i$, satisfies $\bldp^{(j)}_i = \bldXi(\bldc)$ for some $\bldc \in \rrr^n$ with $\bldx_j(\bldc) \in \code_j$. 
Therefore,  
\[
(\bldf \; 1) = \bldmu^{(j)} (\bldP^{(j)} \; \ones) \; , 
\]
where $\ones$ denotes a vector of length $|\Psi|$ all of whose entries are equal to $1$, is a $|\Psi| \times |\Psi|$ system; therefore by Cr\'{a}mer's rule the solution for $\bldmu^{(j)}$ has all rational entries (this argument applies for every $j \in\cJ$). Let $M$ denote a common denominator of all variables in vectors $\bldmu^{(j)}$, for $j \in \cJ$. Define $h_i^{(\alpha)} = M f_i^{(\alpha)} \in \mathbb{R}$ for each $i \in \cI$, $\alpha\in\rrrm$ (it is easy to see that these variables must be nonnegative integers). Also define $\delta_i^{(j)} = M \mu_i^{(j)}$ for each $i \in \Psi$, $j \in \cJ$, and $\blddelta^{(j)} = (\delta_i^{(j)})_{i \in \Psi}$. We then have
\begin{equation}
\bldh = \blddelta^{(j)} \bldP^{(j)} \; .
\label{eq:h_delta_relation}
\end{equation}
Next define, for all $j \in \cJ$, $\bldb \in \code_j$, 
\[
z_{j,\bldbb} = \sum_{i \in \Psi \; : \; \bldppp^{(j)}_i = \bldXiXi(\bldcc), \; \bldxx_j(\bldcc) = \bldbb} \delta_i^{(j)} \; . 
\]
By comparing appropriate entries in the vector equation (\ref{eq:h_delta_relation}), we obtain
\begin{eqnarray*}
&  \forall j \in \cJ, \; \forall i \in \cI_j, \; \forall \alpha\in\rrrm, \nonumber \\
& h_i^{(\alpha)} = 
\sum_{\bldbb \in \code_j, \; b_i=\alpha} z_{j,\bldbb} \; ,
\end{eqnarray*} 
and so $(\bldh, \bldz)$ is an LP pseudocodeword (the preceding equation yields (\ref{eq:LP_PCW_1}), and (\ref{eq:LP_PCW_2}) follows from the fact that the sum of the entries in $\blddelta^{(j)}$ is equal to $M$ for all $j \in\cJ$, these entries being nonnegative integers). So the construction of Theorem~\ref{thm:PCW_equivalence}, part (1), yields a corresponding graph-cover pseudocodeword with graph-cover pseudocodeword vector $\bldh$. Therefore the corresponding normalized graph-cover pseudocodeword vector is $\bldf$, and so we must have $\bldf \in \ppp$. 
\end{enumerate}
The claim of the lemma follows. 
\end{proof}

The following proposition is a counterpart of Lemma~28 in~\cite{KV-IEEE-IT}. 
\begin{proposition}
Let $\code$ be a code of length $n$ over $\rrr$ with parity-check matrix $\cH$. 
Assume that the Tanner graph represented by $\cH$ is a tree. 
Then, the projected polytope $\bar{\cQ}(\cH)$ of the corresponding LP relaxation problem is equal to $\cK(\code)$. 
\label{prop:eq-kappa}
\end{proposition}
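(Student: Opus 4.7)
The plan is to combine Lemma~\ref{lemma:eq-polyts} with the elementary graph-theoretic fact that every finite cover of a tree is a disjoint union of copies of that tree.

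First, the inclusion $\cK(\code) \subseteq \bar{\cQ}(\cH)$ is immediate. By the first proposition of Section~\ref{sec:pol-properties}, every codeword $\bldc \in \code$ lifts to an integral point $(\bldXi(\bldc), \bldw) \in \cQ$, and so $\bldXi(\bldc) \in \bar{\cQ}(\cH)$ for every $\bldc \in \code$. Since $\bar{\cQ}(\cH)$ is convex (as the image under a linear projection of the polytope $\cQ$), we immediately obtain $\cK(\code) = H_\mathrm{conv}\{\bldXi(\bldc) : \bldc \in \code\} \subseteq \bar{\cQ}(\cH)$.

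The content of the proposition is therefore the reverse inclusion $\bar{\cQ}(\cH) \subseteq \cK(\code)$, and here I would argue as follows. Suppose $\cG$ has $n+m$ vertices and (since it is a tree) $n+m-1$ edges. Any $M$-cover $\tilde{\cG}$ then has $M(n+m)$ vertices and $M(n+m-1)$ edges, which forces it to be a forest with exactly $M$ connected components; each such component is a connected cover of $\cG$, and by the standard fact that the universal cover of a tree is itself, each component is graph-isomorphic to $\cG$ (with edge labels preserved). Consequently, any graph-cover pseudocodeword $\bldp$ of $\tilde{\cG}$ restricts, on each of these $M$ components, to a valid labelling of $\cG$, i.e.\ to a codeword of $\code$. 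Calling these codewords $\bldc^{(1)}, \ldots, \bldc^{(M)}$, the definitions yield
\[
\bldm = \sum_{\ell=1}^{M} \bldXi(\bldc^{(\ell)})\,,
\]
so the normalized graph-cover pseudocodeword vector $(1/M)\bldm$ is an explicit convex combination of codeword indicator vectors and therefore lies in $\cK(\code)$. This proves that the set $\ppp$ of normalized graph-cover pseudocodeword vectors is contained in $\cK(\code)$.

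To finish, I invoke Lemma~\ref{lemma:eq-polyts} to write $\bar{\cQ}(\cH) = \overline{\ppp}$, and observe that $\cK(\code)$, being the convex hull of a finite set in $\mathbb{R}^{(q-1)n}$, is closed. Therefore $\overline{\ppp} \subseteq \overline{\cK(\code)} = \cK(\code)$, which combined with the easy direction gives $\bar{\cQ}(\cH) = \cK(\code)$. The only non-routine step is the decomposition of a cover of a tree into disjoint tree components; every other step reduces to already-established machinery (Lemma~\ref{lemma:eq-polyts} and the integrality proposition of Section~\ref{sec:pol-properties}) together with elementary convex geometry.
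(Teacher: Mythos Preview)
Your proof is correct and follows essentially the same line as the paper's: both arguments observe that an $M$-cover of a tree decomposes into $M$ disjoint labelled copies of the tree, deduce that the set $\ppp$ of normalized graph-cover pseudocodeword vectors lies inside $\cK(\code)$, and then invoke Lemma~\ref{lemma:eq-polyts} together with the closedness of $\cK(\code)$ to conclude. The only cosmetic difference is that the paper packages both inclusions into the single identification $\ppp = \cK(\code) \cap \qq^{(q-1)n}$ before taking closures, whereas you treat the easy inclusion $\cK(\code) \subseteq \bar{\cQ}(\cH)$ separately via the integrality proposition; the substance is the same.
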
 
\begin{proof}
The proof follows the lines of the proof of Lemma~28 in~\cite{KV-IEEE-IT}. 
Let $\graph$ be the labeled Tanner graph of the code $\code$ corresponding to $\cH$. 
Let $\tilde{\graph}$ be an $M$-cover of $\graph$ for some positive integer $M$. 
Since $\graph$ is a tree, $\tilde{\graph}$ is a collection of $M$ labeled trees 
which are copies of $\graph$. Let $\tilde{\code}$ be a code defined by the
parity-check matrix corresponding to this $\tilde{\graph}$. We obtain that 
\begin{equation*}
\tilde{\code} = \Big\{ \bldx \in \rrr^{Mn} \; : \; (x_{1,m}, x_{2, m}, \cdots, x_{n,m}) \in \code \mbox{ for all } m = 1, 2, \cdots, M \Big\} \; . 
\end{equation*}
Then, it is easy to see that the set of \emph{normalized} graph-cover pseudocodeword vectors of $\cH$, $\ppp$, 
is equal to $\cK(\code) \cap \qq^{(q-1)n}$. 

To this end, we apply Lemma~\ref{lemma:eq-polyts} to see that
\[
\bar{\cQ}(\cH) = \overline{\ppp} = \overline{\cK(\code) \cap \qq^{(q-1)n} } = \cK(\code) \; , 
\]
as required. 
\end{proof}

By taking $\code = \code_j^{(\chi)}$ and $\cH = \cF_j$ so that $\cQ(\cH) = \cS_j$ (for $j \in \cJ$), we immediately obtain the following corollary: 
\begin{corollary} 
For $j \in \cJ$, let 
\[
\bar{\cS_j} = \{ (\hat{\bldf}_j, \hat{\bldh}_j) : \exists \; \hat{\bldz}_j \mbox{ s.t. } ((\hat{\bldf}_j, \hat{\bldh}_j), \hat{\bldz}_j) \in \cS_j \}
\]
Then $\bar{\cS_j} = \cK(\code_j^{(\chi)})$. 
\label{corollary:eq-K-S_j}
\end{corollary}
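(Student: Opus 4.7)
The plan is to recognize this corollary as a direct application of Proposition~\ref{prop:eq-kappa} to the code $\code_j^{(\chi)}$ with parity-check matrix $\cF_j$. Since Proposition~\ref{prop:eq-kappa} asserts that whenever the Tanner graph corresponding to a parity-check matrix is a tree, the projected LP relaxation polytope equals the convex hull $\cK$ of the codewords, the whole task reduces to verifying that the Tanner graph $\graph(\cF_j)$ associated with the parity-check matrix $\cF_j$ is indeed a tree. Once this is established, applying Proposition~\ref{prop:eq-kappa} with $\code = \code_j^{(\chi)}$ and $\cH = \cF_j$ (so that $\cQ(\cH) = \cS_j$ and hence $\bar{\cQ}(\cH) = \bar{\cS_j}$) immediately yields $\bar{\cS_j} = \cK(\code_j^{(\chi)})$.

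The bulk of the proof is therefore to check the tree property by inspection of the defining parity-check equations~(\ref{eq:kai-1})--(\ref{eq:kai-3}). I would enumerate the $d_j - 2$ check vertices indexed by $\ell = 1, 2, \ldots, d_j - 2$, and observe how each variable participates: the variables $b_{i_1}$ and $b_{i_2}$ appear only in check $1$; each variable $b_{i_{\ell+2}}$ for $1 \le \ell \le d_j - 4$ appears only in check $\ell + 1$; and the variables $b_{i_{d_j-1}}$ and $b_{i_{d_j}}$ appear only in check $d_j-2$. The auxiliary variable $\chi^j_\ell$ for $1 \le \ell \le d_j - 3$ appears in exactly two checks, namely check $\ell$ and check $\ell+1$. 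Consequently, the $b$-variables form degree-one leaves, while the $\chi$-variables form a simple path linking successive check vertices. This yields a caterpillar structure, which is acyclic and therefore a tree.

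Having verified the tree property, I would conclude by simply quoting Proposition~\ref{prop:eq-kappa}. No additional calculation is needed because the proposition gives $\bar{\cQ}(\cF_j) = \cK(\code_j^{(\chi)})$, which is precisely the desired identity $\bar{\cS_j} = \cK(\code_j^{(\chi)})$ after matching the notation $\cS_j = \cQ(\cF_j)$ and recalling that the projection onto the $(\hat{\bldf}_j, \hat{\bldh}_j)$ coordinates is exactly the projection used in the definition of $\bar{\cQ}$.

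The only step that requires even a little care is the tree verification; however this is essentially a combinatorial bookkeeping exercise on the definition of $\cF_j$, and no genuine obstacle is expected. The rest of the argument is automatic from the results proved earlier in the section.
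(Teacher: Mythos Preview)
Your proposal is correct and matches the paper's approach exactly: the paper derives the corollary by taking $\code = \code_j^{(\chi)}$ and $\cH = \cF_j$ in Proposition~\ref{prop:eq-kappa}, so that $\cQ(\cH) = \cS_j$. Your explicit verification that the Tanner graph of $\cF_j$ is a tree (a caterpillar with the $b$-variables as leaves and the $\chi$-variables forming a path through consecutive checks) is a useful detail that the paper leaves implicit.
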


The proof of the next theorem requires the following definition. Let $\bldb \in \code_j$, and let $\bldg \in \code^{(\chi)}_{j,\ell}$. We say that $\bldg$ \emph{coincides with} $\bldb$, writing $\bldg \bowtie \bldb$, if and only if $g_i^b = b_i $ for all $i \in \cI_{j,\ell}$ and $g_i^{\chi} = \chi^j_i(\bldb)$ for all $i \in \cL_{j,\ell}$.

\begin{theorem}
The set \\
$\bar{\cS} = \{ \bldf : \exists \; \bldh, \bldz \mbox{ s.t. } ((\bldf, \bldh), \bldz) \in \cS \}$
is equal to the set $\bar{\cQ} = 
\{ \bldf : \exists \; \bldw \mbox{ s.t. } (\bldf, \bldw) \in \cQ \}$, and
therefore, optimizing the linear cost function~(\ref{eq:object-function}) over $\cS$ is equivalent to optimizing 
over $\cQ$. 
\end{theorem}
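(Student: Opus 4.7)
The plan is to use Corollary~\ref{corollary:eq-K-S_j} as the main bridge: since the Tanner graph associated with $\cF_j$ is a tree for every $j$, the ``local'' LP polytope $\cS_j$ projects onto exactly $\cK(\code_j^{(\chi)})$, which is an object I can understand combinatorially via Theorem~\ref{thrm:code-equivalence}. The two containments $\bar{\cQ}\subseteq\bar{\cS}$ and $\bar{\cS}\subseteq\bar{\cQ}$ can then be obtained by chaining together three equivalences of the form $\hat\bldf_j\in\cK(\code_j)$ $\iff$ $(\hat\bldf_j,\hat\bldh_j)\in\cK(\code_j^{(\chi)})$ for some $\hat\bldh_j$ $\iff$ $((\hat\bldf_j,\hat\bldh_j),\hat\bldz_j)\in\cS_j$ for some $\hat\bldz_j$, applied separately for each $j\in\cJ$, and then glued together across $j$.

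The key preliminary observation I would record first is that the projection of $\cK(\code_j^{(\chi)})$ onto the $\bldb$-coordinates equals $\cK(\code_j)$. This follows from Theorem~\ref{thrm:code-equivalence}: the map $\bldb\mapsto(\bldb\,|\,\bldchi^j(\bldb))$ is a bijection $\code_j\to\code_j^{(\chi)}$, and under the embeddings $\bldXi$ and $\bldXi_j$ the vertices of $\cK(\code_j^{(\chi)})$ project exactly to the vertices of $\cK(\code_j)$; taking convex combinations, any lift $(\hat\bldf_j,\hat\bldh_j)\in\cK(\code_j^{(\chi)})$ has $\hat\bldf_j\in\cK(\code_j)$, and any $\hat\bldf_j\in\cK(\code_j)$, written as a convex combination of $\bldXi(\bldb^{(r)})$, admits the companion convex combination of $\bldXi_j(\bldb^{(r)}\,|\,\bldchi^j(\bldb^{(r)}))$ as a lift.

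For $\bar{\cQ}\subseteq\bar{\cS}$, I take $(\bldf,\bldw)\in\cQ$. The paper already notes that constraints (\ref{eq:equation-polytope-3})--(\ref{eq:equation-polytope-5}) force $\hat\bldf_j\in\cK(\code_j)$ for each $j$. By the observation above, for each $j$ there exists $\hat\bldh_j$ with $(\hat\bldf_j,\hat\bldh_j)\in\cK(\code_j^{(\chi)})$; by Corollary~\ref{corollary:eq-K-S_j}, $\cK(\code_j^{(\chi)})=\bar{\cS_j}$, so there exists $\hat\bldz_j$ with $((\hat\bldf_j,\hat\bldh_j),\hat\bldz_j)\in\cS_j$. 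Because the rows of $\cF$ partition into the disjoint blocks corresponding to each $\cF_j$, and the $\bldh$- and $\bldz$-variables are indexed separately by $j$, these local witnesses assemble into a single point $((\bldf,\bldh),\bldz)\in\cS$. For the reverse inclusion $\bar{\cS}\subseteq\bar{\cQ}$, I reverse the chain: given $((\bldf,\bldh),\bldz)\in\cS$, restricting to the variables tied to the block of rows $(j,\ell)$, $\ell=1,\ldots,d_j-2$, yields a point of $\cS_j$; Corollary~\ref{corollary:eq-K-S_j} then gives $(\hat\bldf_j,\hat\bldh_j)\in\cK(\code_j^{(\chi)})$, the projection observation gives $\hat\bldf_j\in\cK(\code_j)$, and this in turn yields local weights $\bldw_j=(w_{j,\bldbb})_{\bldbb\in\code_j}$ satisfying (\ref{eq:equation-polytope-3})--(\ref{eq:equation-polytope-5}) at $j$. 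Assembling the $\bldw_j$ across $j$ produces $\bldw$ with $(\bldf,\bldw)\in\cQ$.

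The only nontrivial step is the projection observation at the start: one must verify that the $\bldXi$/$\bldXi_j$ embeddings respect the projection, which is immediate since $\bldXi_j(\bldb\,|\,\bldchi^j(\bldb))$ begins with the block $(\bldxi(b_{i_1})\,|\,\cdots\,|\,\bldxi(b_{i_{d_j}}))=\bldXi(\bldb)$. Everything else is bookkeeping: ensuring that the block-diagonal structure of $\cF$ with respect to $j$ really does decouple the $(\bldh,\bldz)$-variables across different $j$, so that per-$j$ witnesses assemble into a global witness without creating new consistency constraints. The equality $\boldsymbol{\Lambda}(\bldy)\bldf^T$ of the two objective values is automatic since both LPs use the same $\bldf$ variables with the same cost vector, so equivalence of the projected feasible sets $\bar\cS=\bar\cQ$ immediately implies equivalence of the optimization problems.
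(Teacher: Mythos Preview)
Your proof is correct and, for the direction $\bar{\cS}\subseteq\bar{\cQ}$, essentially identical to the paper's: both restrict to each $j$, invoke Corollary~\ref{corollary:eq-K-S_j} to land in $\cK(\code_j^{(\chi)})$, then use the bijection $\bldb\mapsto\tilde\bldb(\bldb)$ from Theorem~\ref{thrm:code-equivalence} to extract convex-combination weights $w_{j,\bldbb}$.

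For the direction $\bar{\cQ}\subseteq\bar{\cS}$, however, you take a different route. You invoke Corollary~\ref{corollary:eq-K-S_j} a second time, arguing $\hat\bldf_j\in\cK(\code_j)\Rightarrow(\hat\bldf_j,\hat\bldh_j)\in\cK(\code_j^{(\chi)})=\bar{\cS_j}$ and then pulling out a witness $\hat\bldz_j$. The paper instead constructs the $\bldz$ and $\bldh$ variables explicitly from $\bldw$: it sets $z_{j,\ell,\bldgg}=\sum_{\bldbb\in\code_j,\,\bldgg\bowtie\bldbb}w_{j,\bldbb}$ and $h_{j,i}^{(\alpha)}=\sum_{\bldbb\in\code_j,\,\chi_i^j(\bldbb)=\alpha}w_{j,\bldbb}$, and verifies the constraints of $\cS$ directly. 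Your argument is more symmetric and conceptually tidier, since the same corollary does the work in both directions; the paper's argument is more elementary, since Corollary~\ref{corollary:eq-K-S_j} rests on Proposition~\ref{prop:eq-kappa} and Lemma~\ref{lemma:eq-polyts}, which in turn use the graph-cover pseudocodeword equivalence and a closure argument, none of which is needed for the explicit construction.
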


\begin{proof}
\begin{enumerate}
\item
Let $\bldf \in \bar{\cQ}$. Then, there exists $\bldw$ such that $(\bldf, \bldw) \in \cQ$. 
%In particular, for all $j \in \cJ$, $(\bldf, \bldw) \in \cQ_j$. 
Therefore, 
\begin{equation}
\forall j \in \cJ, \; \forall i \in \cI_j, \; \forall \alpha \in \rrrm,
\qquad f_i^{(\alpha)} = 
\sum_{\bldbb \in \code_j, \; b_i=\alpha} w_{j,\bldbb} \; .
\label{eq:equation-polytope-5-mod} 
\end{equation}
In addition, the entries in $\bldw$ satisfy~(\ref{eq:equation-polytope-3}) 
and~(\ref{eq:equation-polytope-4}).  

We set the values of the variables $z_{j,\ell,\bldgg}$ as follows: 
\begin{equation*}
 \forall j \in \cJ, \; \forall \ell = 1, 2, \cdots, d_j-2, \; \forall \bldg \in \code^{(\chi)}_{j,\ell}, 
\qquad z_{j,\ell, \bldgg} = \sum_{ \bldbb \in \code_j, \bldgg \bowtie \bldbb} w_{j, \bldbb}
\; . 
\end{equation*}
So we have that
\begin{multline}
\forall j \in \cJ, \; \forall \ell = 1, 2, \cdots, d_j-2, 
\; \forall i \in \cI_{j,\ell}, \; \forall \alpha \in \rrrm,  \\\sum_{\bldgg \in \code^{(\chi)}_{j,\ell}, \; g_i^b=\alpha} z_{j,\ell,\bldgg} = \sum_{\bldbb \in \code_j, \; b_i=\alpha} w_{j,\bldbb} = f_i^{(\alpha)} \; ,
\label{eq:equation-polytope-5-mod-4} 
\end{multline}
using~(\ref{eq:equation-polytope-5-mod}), since $\cI_{j,\ell} \subseteq \cI_j$ for all $\ell = 1, 2, \cdots, d_j-2$. In addition, we define the variables $h_{j,i}^{(\alpha)}$ as follows. 
\begin{equation}
\forall j \in \cJ, \; \forall i \in \cL_j, \; \forall \alpha \in \rrrm,
\qquad h_{j,i}^{(\alpha)} = \sum_{\bldbb \in \code_j, \;  \chi^j_i(\bldbb)=\alpha} w_{j,\bldbb} \; .
\label{eq:equation-polytope-5-mod-3} 
\end{equation} 
Note that all variables $h_{j,i}^{(\alpha)}$ are well defined. It then follows that
\begin{multline}
\forall j \in \cJ, \; \forall \ell = 1, 2, \cdots, d_j-2, \; 
\forall i \in \cL_{j, \ell}, \; \forall \alpha \in \rrrm, \\ 
\sum_{\bldgg \in \code^{(\chi)}_{j,\ell}, \;  g_i^\chi=\alpha} z_{j,\ell,\bldgg} = \sum_{\bldbb \in \code_j, \;  \chi^j_i(\bldbb)=\alpha} w_{j,\bldbb} = h_{j,i}^{(\alpha)} \; ,
\label{eq:equation-polytope-5-mod-5} 
\end{multline} 
using~(\ref{eq:equation-polytope-5-mod}), since $\cL_{j,\ell} \subseteq \cL_j$ for all $\ell = 1, 2, \cdots, d_j-2$. 

Next, we claim that 
\begin{equation} 
((\bldf, \bldh), \bldz) \in \cS \; . 
\label{eq:in-S}
\end{equation}
In order to show this, it is necessary to show~(\ref{eq:equation-polytope-3})-(\ref{eq:equation-polytope-5}) with 
respect to $((\bldf, \bldh), \bldz)$ and the code $\code^{(\chi)}$. 
However~(\ref{eq:equation-polytope-3}) and~(\ref{eq:equation-polytope-4})
follow easily from the definition of the variables $z_{j,\ell,\bldgg}$ and the properties of 
the variables $w_{j,{\bldbb}}$. As to~(\ref{eq:equation-polytope-5}), it follows from the 
combination of~(\ref{eq:equation-polytope-5-mod-4}) and (\ref{eq:equation-polytope-5-mod-5}).  

Finally,~(\ref{eq:in-S}) yields that $\bldf \in \bar{\cS}$, as required. 

\item
Now, assume that $\bldf \in \bar{\cS}$. This means that there exist $\bldh$, $\bldz$ such that $((\bldf, \bldh), \bldz) \in \cS$. 
Then, for all $j \in \cJ$, $((\hat{\bldf}_j, \hat{\bldh}_j), \hat{\bldz}_j) \in \cS_j$.
By Corollary~\ref{corollary:eq-K-S_j}, $(\hat{\bldf}_j, \hat{\bldh}_j)$ lies in $\cK(\code_j^{(\chi)})$. 
Therefore, 
\begin{equation}
(\hat{\bldf}_j, \hat{\bldh}_j) = \sum_{\tilde{\bldbb} \in \code_j^{(\chi)}} \beta_{j, \tilde{\bldbb}} \cdot \bldXi_j (\tilde{\bldb}), 
\label{eq:f-convex-hull}
\end{equation}
where $\sum_{\tilde{\bldbb} \in \code_j^{(\chi)} } \beta_{j, \tilde{\bldbb}} = 1$ and $\beta_{j, \tilde{\bldbb}} \ge 0$ for all 
$\tilde{\bldb} \in \code_j^{(\chi)}$. 

For all $j \in \cJ$, ${\bldb} \in \code_j$ set the value of $w_{j,{\bldbb}}$ as 
\begin{equation*}
 w_{j, \bldbb} = \beta_{j,\tilde{\bldbb}(\bldbb)} \; ,
\end{equation*}
and thus
\begin{equation}
\sum_{{\bldbb} \in \code_j } w_{j, {\bldbb}} = 1 \; , 
\label{eq:w-1}
\end{equation}
and 
\begin{equation}
w_{j, {\bldbb}} \ge 0 \quad \mbox{ for all } {\bldb} \in \code_j \; .
\label{eq:w-2}
\end{equation}
Then,~(\ref{eq:f-convex-hull}) becomes 
\begin{equation*}
(\hat{\bldf}_j, \hat{\bldh}_j) = \sum_{{\bldbb} \in \code_j} w_{j, {\bldbb}} \cdot \bldXi_j (\tilde{\bldb}(\bldb)) \; .
\end{equation*}
Comparing the first set of coordinates, we obtain that 
\begin{equation*}
\forall i \in \cI_j, \; \forall \alpha \in \rrrm, \quad
f_i^{(\alpha)} = 
\sum_{{\bldbb} \in \code_j, \; b_i=\alpha} w_{j, {\bldbb}} \; .
\end{equation*}
This set of equations holds for all $j \in \cJ$. 
Together with~(\ref{eq:w-1}) and~(\ref{eq:w-2}) this means that $(\bldf, \bldw) \in \cQ$. 
Therefore, $\bldf \in \bar{\cQ}$, as required.  
\end{enumerate}
\end{proof}

The polytope representation described in this section leads to a polynomial-time 
decoder for a wide variety of classical nonbinary codes (for example, generalized Reed-Solomon codes). 

\section{Simulation Study}
\label{sec:sims}
\subsection{Comparison with ML Decoding}
In this section we compare performance of the linear-programming decoder with hard-decision and soft-decision based ML decoding. For such a comparison, a code and modulation scheme are needed which possess sufficient symmetry properties to enable derivation of analytical ML performance results. We consider encoding of $6$-symbol blocks
according to the $\left[11,6\right]$ ternary Golay code, and modulation
of the resulting ternary symbols with $3$-PSK modulation prior to
transmission over the AWGN channel. Figure~\ref{cap:Golay} shows the symbol error rate (SER) and codeword error rate (WER) performance of this code under LP decoding using the polytope $\cQ$ of Section \ref{sec:lp}. Note that this is the same as its performance using the polytope $\cU$ of Section \ref{sec:alternative_polytope}, and its performance using the polytope $\cS$ of Section \ref{sec:cascaded_polytope}. When the decoder reports a decoding failure, the SER and WER are both taken to be $1$. To
quantify performance, we define the signal-to-noise ratio (SNR) per
information symbol $\gamma_{s}=E_{s}/N_{0}$ as the ratio of the received
signal energy per information symbol to the noise power spectral density.
Also shown in the figure are two other performance curves for WER.
\begin{figure}
\begin{center}\includegraphics[%
  width=0.9\columnwidth,
  keepaspectratio]{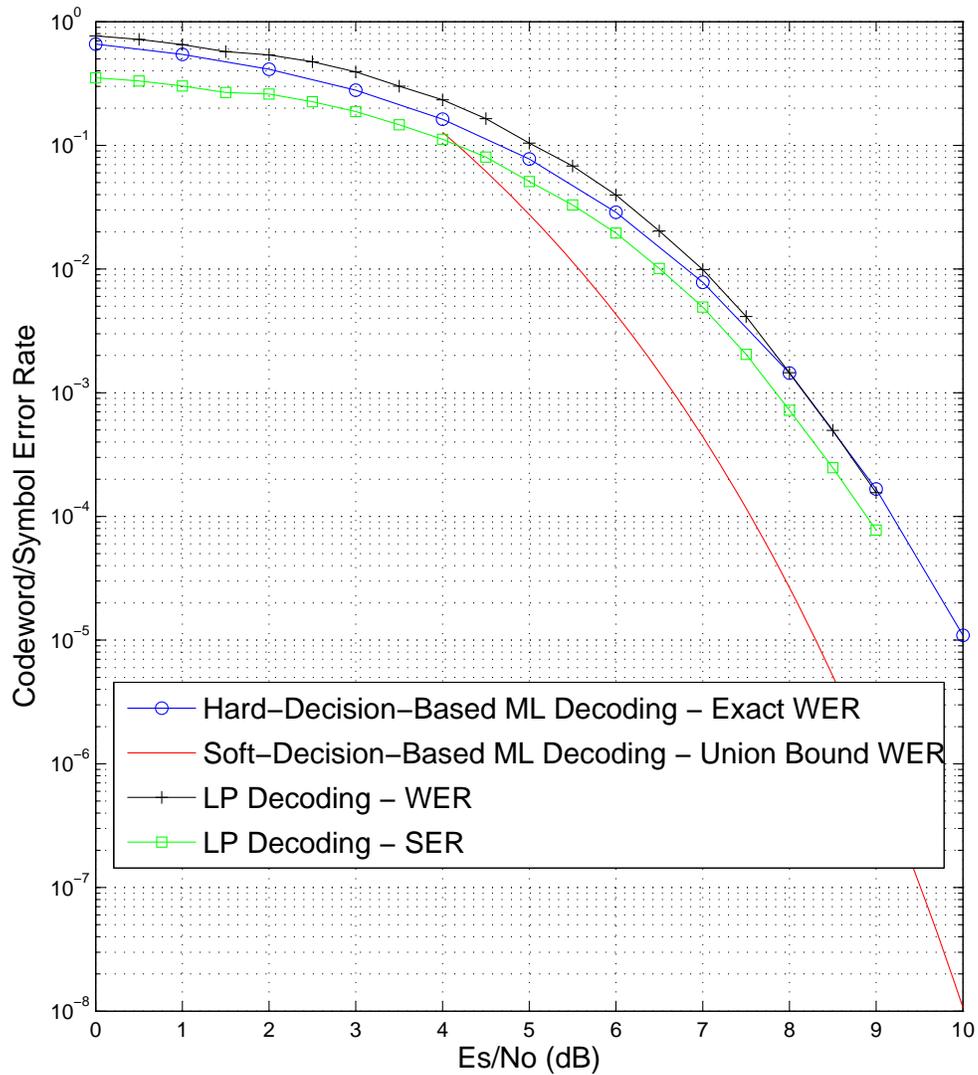}
\end{center}
\caption{\label{cap:Golay}Codeword error rate (WER) and symbol error rate (SER) for the $[11,6]$ ternary
Golay code with $3$-PSK modulation over the AWGN channel. The figure shows performance
under LP decoding, as well as the exact result for hard-decision decoding
and the union bound for soft-decision decoding.}
\end{figure}
The first is the exact result for ML hard-decision decoding of the
ternary Golay code; since the Golay code is perfect, this is obtained
from 
\[
\textrm{WER}(\gamma_s)=\sum_{\ell=3}^{11}\binom{11}{\ell}(p(\gamma_s))^{\ell}\left(1-p(\gamma_s)\right)^{11-\ell} \; , 
\]
where $p(\gamma_s)$ represents the probability of incorrect hard decision at
the demodulator and was evaluated for each value of $\gamma_s$ using numerical
integration. The second WER curve represents the union bound for ML
soft-decision decoding. Using the symmetry of the $3$-PSK constellation,
this may be obtained from 
\[
\textrm{WER}(\gamma_s)<\frac{1}{2}\sum_{\bldcc \in \code}
\textrm{erfc }\left(\sqrt{\frac{3}{4}w_{H}(\bldc) R(\code) \gamma_{s}}\right) \; , 
\] 
where $R(\code) = 6/11$ denotes the code rate, and the Hamming weight of the codeword $\bldc \in \code$, $w_{H}(\bldc)$, 
is distributed according to the weight enumerating polynomial~\cite{MacWilliams_Sloane}
\[
W\left(x\right)=1+132x^{5}+132x^{6}+330x^{8}+110x^{9}+24x^{11} \; . 
\]
The performance of LP decoding is approximately the same as that of
codeword-error-rate optimum hard-decision decoding. The performance
lies $0.1$ dB from the result for ML hard-decision decoding and
$1.53$ dB from the union bound for codeword-error-rate optimum soft-decision
decoding at a WER of $10^{-4}$. These results are comparable to those of a similar study conducted for the binary case in \cite{Feldman}.
\subsection{Low-Density Code Performance}
Figure \ref{cap:cyclic_LDPC} shows SER and WER simulation performance results for two low-density parity-check (LDPC) codes. The first code $\code^{(1)}$, of length $n=150$, is over the ring $\rrr=\mathbb{Z}_3$, where nonbinary coded symbols are mapped directly to ternary PSK signals and transmitted over an AWGN channel, the mapping described in Example \ref{ex:qary_PSK_AWGN} being used for modulation. The parity-check matrix $\cH^{(1)}$ consists of $m=60$ rows and is equal to the right-circulant matrix
\[
\cH^{(1)}_{j,i} = \left\{ \begin{array}{cc}
1 & \textrm{ if } i - j \in \{ 0, 51, 80 \} \\
2 & \textrm{ if } i - j \in \{ 8, 30, 90 \} \\
0 & \textrm{ otherwise. }\end{array}\right. \;
\]
\begin{figure}
\begin{center}\includegraphics[%
  width=0.9\columnwidth,
  keepaspectratio]{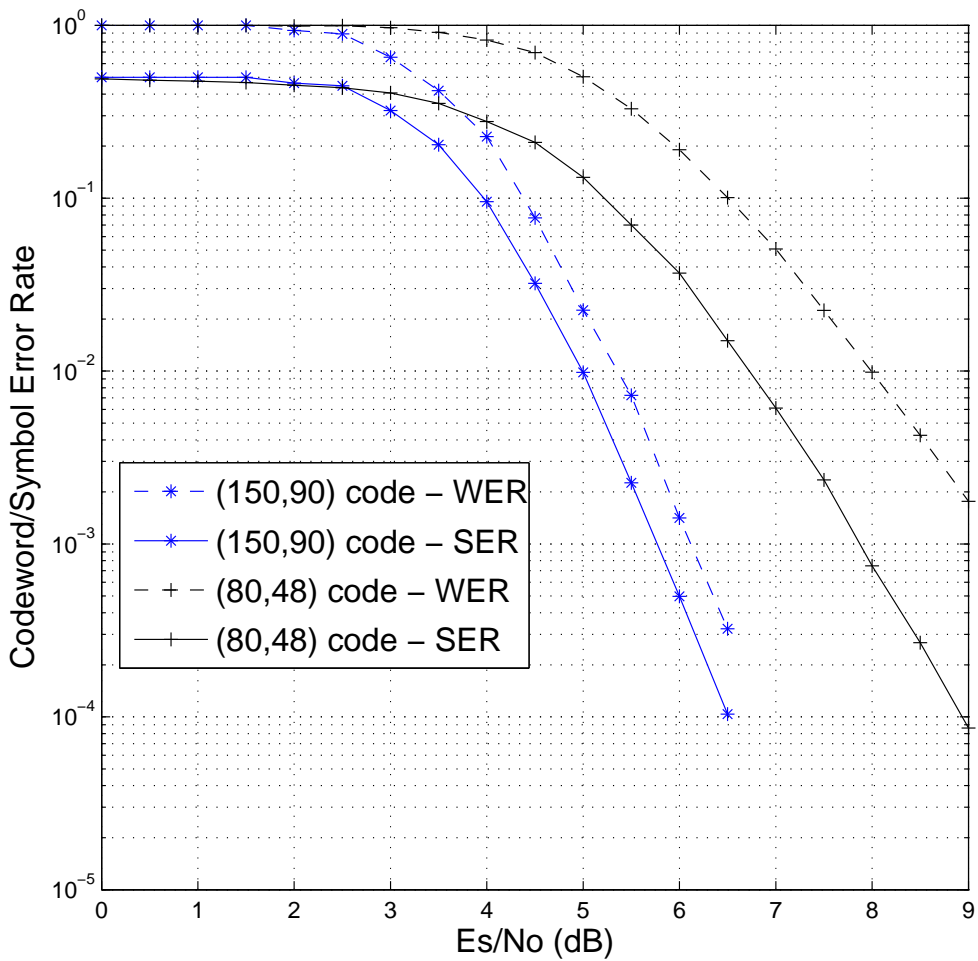}
\end{center}
\caption{\label{cap:cyclic_LDPC} Codeword error rate (WER) and symbol error rate (SER) for the $[150,90]$ ternary
LDPC code $\code^{(1)}$ under ternary PSK modulation, and for the $[80,48]$ quaternary LDPC code $\code^{(2)}$ under QPSK modulation.}
\end{figure}
The code rate is $R(\code^{(1)}) = 0.6$. As expected, the performance of the low-density code $\code^{(1)}$ is significantly better than that of the ternary Golay code given in Figure~\ref{cap:Golay}. The second code $\code^{(2)}$, of length $n=80$, is over the ring $\rrr=\mathbb{Z}_4$, where nonbinary coded symbols are mapped directly to quaternary phase shift keying (QPSK) signals and transmitted over an AWGN channel, the mapping described in Example \ref{ex:qary_PSK_AWGN} again being used for modulation. The parity-check matrix $\cH^{(2)}$ consists of $m=32$ rows and is equal to the right-circulant matrix
\[
\cH^{(2)}_{j,i} = \left\{ \begin{array}{cc}
1 & \textrm{ if } i - j \in \{ 0, 41, 48 \} \\
3 & \textrm{ if } i - j \in \{ 8, 25 \} \\
0 & \textrm{ otherwise. }\end{array}\right. \;
\]
This code also has rate $R(\code^{(2)}) = 0.6$. The quaternary code has a higher SER and WER than the ternary code for the same $E_s/N_0$; however it has a smaller block length and a higher spectral efficiency. In both systems, when the decoder reports a decoding failure the SER and WER are both taken to be $1$.

\section{Future Research}
\label{sec:future}

Sections~\ref{sec:alternative_polytope} and~\ref{sec:cascaded_polytope} presented two alternative polytope representations, which have a smaller number of variables and constraints
than the respective standard LP representation in certain contexts. It would be interesting to further reduce 
the complexity of the polytope representation in order to yield more efficient decoding algorithms. Alternatively, one could try to reduce complexity of the LP solver for the nonbinary decoding problem
by exploiting knowledge of the polytope structure. 

The notion of \emph{pseudodistance} for nonbinary codes was recently defined in~\cite{SF}, and lower bounds on the pseudodistance of nonbinary codes under $q$-ary PSK modulation 
over the AWGN channel were presented. It would be interesting to obtain lower bounds on the pseudodistance for other families of nonbinary linear codes and for other modulation schemes.

\appendix

\subsection*{Proof of Proposition~\ref{prop:lemma-14}}

Preliminary to proving this Proposition we give some background material on flow networks. 

\subsubsection*{Flow Networks} 
 
Let $\sG = (\sV, \sE)$ be a directed graph, and let $\{ s,t \} \subseteq \sV, s \neq t$.  
A flow network $(\sG(\sV, \sE), \scp)$ is a graph $\sG = (\sV, \sE)$ with 
a nonnegative \emph{capacity} function $\scp : \sE \longrightarrow \rr \cup \{+\infty\}$ defined
for every edge.  

For a subset $\sV' \subseteq \sV$ let $\sV'' = \sV \backslash \sV'$. We 
define a cut $(\sV' : \sV'')$ induced by $\sV'$ as a set of edges 
$ \{ (u,v) \; : \; u \in \sV', \; v \in \sV'' \}$.  
The capacity of this cut, $\scp(\sV' : \sV'')$,  is defined as 
\[
\scp(\sV' : \sV'') = \sum_{u \in \sV', \; v \in \sV''} \scp((u,v)) \; . 
\]

For the edge $e = (u, v)$ we use the notation $e \in \mbox{in}(v)$ and $e \in \mbox{out}(u)$. 
We also use the notation $\ngbr(v)$ to denote the set of neighbors of $v$, namely 
\[
\ngbr(v) = \left\{ u \; : \; (u,v) \in \sE \right\} \cup \left\{ u' \; : \; (v,u') \in \sE \right\} \; . 
\]
For a set of vertices $\sV_0 \subseteq \sV$, denote 
\[
\ngbr(\sV_0) = \cup_{v \in \sV_0} \ngbr(v) \backslash \sV_0 \; .
\]  
The flow in the graph (network) $\sG$ with a source $s$ and a sink $t$ 
is defined as a function $\sfn : \sE \longrightarrow \rr \cup \{ + \infty \}$
that satisfies $0 \le \sfn(e) \le \scp(e)$ for all $e \in \sE$, and 
\[
\forall v \in \sV \backslash \{s, t \}, \; \sum_{e \in \sE, \; e \in \mbox{\scriptsize in}(v)} \sfn(e) = 
\sum_{e \in \sE, \; e \in \mbox{\scriptsize out}(v)} \sfn(e) \; . 
\]
The value of the flow $\sfn$ is defined as 
\[
\sum_{e \in \sE, \; e \in \mbox{\scriptsize in}(t)} \sfn(e) = 
\sum_{e \in \sE, \; e \in \mbox{\scriptsize out}(s)} \sfn(e) \; . 
\]

The maximum flow in the network is defined as the flow $\sfn$ that attains the maximum 
possible value. There are several known algorithms, for instance the Ford-Fulkerson algorithm, 
for finding the maximum flow in a network, 
the reader can refer to~\cite[Section 26.2]{Cormen}. It is well known that the value of
the maximum flow in the network is equal to the capacity of the minimum cut induced by a vertex set 
$\sV'$ such that $s \in \sV'$ and $t \notin \sV'$ (see~\cite{Cormen}). 

Finally, we prove the Proposition. 

\begin{proof}
The proof will be by induction on $M$. 
We set $w_\bldaa = 0$ for all $\blda \in \code_\Gamma^{(\bldkk)}$. 
We show that there exists a vector $\blda = \left( a_i \right)_{i \in \Gamma} \in \code_\Gamma^{(\bldkk)}$
such that 
\begin{enumerate}
\item[(i)]
For every $i \in \Gamma$ and $\alpha \in \rrrm$, 
\[
a_i = \alpha \quad \Longrightarrow \quad x^{(\alpha)}_i > 0 \; . 
\]
\item[(ii)]
If for some $i \in \Gamma$, $\sum_{\alpha \in \rrrm} x^{(\alpha)}_i = M$, then 
$a_i = \alpha$ for some $\alpha \in \rrrm$. 
\end{enumerate}

Then, we `update' the values of $x^{(\alpha)}_i$'s and $M$ as follows. For 
every $i \in \Gamma$ and $\alpha \in \rrrm$ with 
$a_i = \alpha$ we set $x^{(\alpha)}_i \leftarrow x^{(\alpha)}_i - 1$. 
In addition, we set $M \leftarrow M - 1$. We also set $w_\bldaa \leftarrow w_\bldaa + 1$. 

It is easy to see that the `updated' values of $x^{(\alpha)}_i$'s and $M$ satisfy  
\[
\sum_{i \in \Gamma} x^{(\alpha)}_i = k_\alpha M
\]
 for all $\alpha \in \rrrm$, 
and $\sum_{\alpha \in \rrrm} x^{(\alpha)}_i \le M$ for all $i \in \Gamma$.
Therefore, the inductive step can be applied with respect to these new values. 
The induction ends when the value of $M$ is equal to zero. 

It is straightforward to see that when the induction 
ter\-mi\-na\-tes, (\ref{eq:lemma14-claim-1}) and~(\ref{eq:lemma14-claim-2}) hold with respect 
to the origi\-nal va\-lu\-es of the $x^{(\alpha)}_i$ and $M$. 
\newline

\subsubsection*{Proof of existence of $\blda$ that satisfies (i)}

We construct a flow network  $\sG = (\sV, \sE)$ as follows: 
\[
\sV = \{ s, t \} \cup \sU_1 \cup \sU_2 \; , 
\]
where 
\[
\sU_1 = \rrrm \quad \mbox{ and } \quad
\sU_2 = \Gamma \; . 
\]
Also set
\[
\sE = \{ (s, \alpha) \}_{\alpha \in \rrrm} \; \cup 
\; \{ (i, t) \}_{i  \in \Gamma} 
\; \cup \; \{ (\alpha, i) \}_{x^{(\alpha)}_i > 0} \; .  
\]
We define an integer capacity function $\scp : \sE \longrightarrow \nn \cup \{ + \infty \}$ as follows:
\begin{equation}
\scp(e) = \left\{ 
\begin{array}{cl}
k_\alpha & \mbox{ if } e = (s, \alpha), \; \alpha \in \rrrm \\
1 & \mbox{ if } e = (i, t), \; i \in \Gamma \\
+\infty & \mbox{ if } e = (\alpha, i), \; \alpha \in \rrrm, \; i \in \Gamma
\end{array} \right. \; . 
\end{equation}

Next, apply the Ford-Fulkerson algorithm on the network $(\sG(\sE, \sV), \scp)$ to produce a maximal flow $\sfn_{max}$. Since 
all the values of $\scp(e)$ are integer for all $e \in \sE$, so 
the values of 
$\sfn_{max}(e)$ must all be integer for every $e \in \sE$ (see~\cite{Cormen}).

We will show that the minimum cut in this graph has capacity $\scp_{min} = \sum_{\alpha \in \rrrm} k_\alpha$. 
First, consider the cut induced by the set $\sV' = \{ s \}$. This cut has capacity 
$\sum_{\alpha \in \rrrm} k_\alpha$, and therefore $\scp_{min} \le \sum_{\alpha \in \rrrm} k_\alpha$. 

Assume that there is another cut, which has smaller capacity. If this smaller cut is induced by the set 
$\sV' = \sV \backslash \{ t \}$, its capacity must be $N \ge \sum_{\alpha \in \rrrm} k_\alpha$ $-$
it is not smaller. Therefore, without loss of generality, assume that the minimum cut is 
induced by the set $\sV'$, where $\sV' = \{ s \} \; \cup \sX' \; \cup \; \sY'$, 
$\sX' \subseteq \sU_1$ and $\sY' \subseteq \sU_2$. Let $\sX'' = \sU_1 \backslash \sX'$ and 
$\sY'' = \sU_2 \backslash \sY'$ (and so $\sV'' = \{ t \} \; \cup \sX'' \; \cup \; \sY''$).

Observe that there are 
no edges $(\alpha, i) \in \sE$ with $\alpha \in \sX'$, 
$i \in \sY''$, because otherwise the capacity of the respective cut would be infinitely large
(so it cannot be a minimum cut). Thus, 
\begin{equation}
\label{eq:bound-Y-1}
|\sY'| \ge \left| \sU_2 \cap \ngbr(\sX') \right|
\end{equation}

Observe also that 
\[
\sum_{i \in \Gamma} \sum_{\alpha \in \sX'} x^{(\alpha)}_i = 
\sum_{\alpha \in \sX'} k_\alpha M 
\] 
and 
\[
\sum_{\alpha \in \sX'} x^{(\alpha)}_i \le \sum_{\alpha \in \rrrm} x^{(\alpha)}_i \le M \; . 
\]
Therefore, 
\begin{equation}
\left| \sU_2 \cap \ngbr(\sX') \right| \ge 
\sum_{\alpha \in \sX'} k_\alpha \; .
\label{eq:bound-Y-2}
\end{equation}
We obtain that 
\begin{equation}
\scp(\sV':\sV'') = \sum_{ \alpha \in \sX''} k_\alpha + |\sY'|
\ge \sum_{ \alpha \in \sX''} k_\alpha + \sum_{ \alpha \in \sX'} k_\alpha 
= \sum_{\alpha \in \rrrm} k_\alpha \; , 
\end{equation}
where the inequality is due to~(\ref{eq:bound-Y-1}) and~(\ref{eq:bound-Y-2}).  
This leads to a contradiction of the non-minimality of $\scp (\sV' : \sV'')$ for $\sV' = \{ s \}$.  

If we apply the Ford-Fulkerson algorithm (or a similar algorithm) on the network $(\sG(\sV,\sE), \scp)$, we obtain 
that the integer flow $\sfn_{max}$ in $\sG$ has a value of $\sum_{\alpha \in \rrrm} k_\alpha$. 
Observe that $\sfn_{max}((\alpha, i)) \in \{ 0, 1 \}$ for all 
$\alpha \in \rrrm$ and $i \in \Gamma$. 
Then, for all $i \in \Gamma$, we define 
\[
a_i = \left\{ \begin{array}{cl}
\alpha & \mbox{ if } \sfn_{max}((\alpha, i)) = 1 \mbox{ for some } \alpha \in \sU_1 \\ 
0 & \mbox{ otherwise } 
\end{array} \right. \; . 
\]
For this selection of $\blda = (a_1, a_2, \cdots, a_N)$, we have $\blda \in \code_\Gamma^{(\bldkk)}$
and $a_i = \alpha$ only if $x^{(\alpha)}_i > 0$. 
\newline

\subsubsection*{Proof of existence of $\blda$ that satisfies (i) and (ii) simultaneously} 

We start with the following definition. 
\begin{definition}
The vertex $i \in \sU_2$ is called a \emph{critical} vertex, if 
\[
     \sum_{\alpha \in \rrrm} x^{(\alpha)}_i = M \; . 
\]
\end{definition}
In order to have~(\ref{eq:lemma-14-req-2}) satisfied after the next inductive step, we have 
to decrease the value of $\sum_{\alpha \in \rrrm} x^{(\alpha)}_i$ by (exactly) 1 
for every critical vertex. This is equivalent to having $\sfn_{max}((i, t)) = 1$. 

We have just shown that the maximum (integer) flow in $\sG$ has value $\sum_{\alpha \in \rrrm} k_\alpha$. 
Now, we aim to show that there exists a flow $\sfn^*$ of the same value, which has 
$\sfn^*((i, t)) = 1$ for every critical vertex $i$. 

Suppose that 
there is no such flow. Then, consider the maximum flow $\sfn'$, which has 
$\sfn'((i, t)) = 1$ for the \emph{maximal possible number} of the critical vertices $i \in \sU_2$. 
In the sequel, we assume that there is a critical vertex ${i_0} \in \sU_2$, 
which has $\sfn'(({i_0}, t)) = 0$. 
We will show that the flow $\sfn'$ can be modified towards the flow $\sfn''$ of the same value, 
such that for $\sfn''$ the number of critical vertices $i \in \sU_2$ having 
$\sfn''((i, t)) = 1$ is strictly larger than for $\sfn'$. 

Indeed, if there exists vertex ${\alpha_0} \in \ngbr(i)$ 
such that $({\alpha_0}, {i_1}) \in \sE$ and 
$\sfn'(({\alpha_0}, {i_1})) = 1$ for some \emph{non-critical} vertex ${i_1}$, then 
$\sfn'(({\alpha_0}, {i_0})) = 0$, $\sfn'(({i_0}, t)) = 0$ 
and $\sfn'(({i_1}, t)) = 1$. We define the flow $\sfn''$ as
\[
\sfn''(e) = \left\{ \begin{array}{cl}
1 & \mbox{ if } e \in \{ ({\alpha_0}, {i_0}), ({i_0}, t) \} \\
0 & \mbox{ if } e \in \{ ({\alpha_0}, {i_1}), ({i_1}, t) \} \\
\sfn'(e) & \mbox{ for all other edges $e \in \sE$}
\end{array} \right. \; . 
\]
It is easy to see that $\sfn''$ is a legal flow in $(\sG(\sV,\sE), \scp)$. Moreover, it has the same value as $\sfn'$, and the 
number of critical vertices $i \in \sU_2$ satisfying 
$\sfn''((i, t)) = 1$ is strictly larger than for $\sfn'$. 

In the general case (when there is no vertex ${\alpha_0}$ as above), we iteratively define a 
maximal set $\sZ$ of vertices $\alpha \in \sU_1$ satisfying the next two rules:
\begin{enumerate}
\item
For any ${\alpha} \in \sU_1$: if $({\alpha}, {i_0}) \in \sE$ then ${\alpha} \in \sZ$.  
\item
For any ${\alpha} \in \sU_1$ and $i \in \sU_2$: 
if $({\alpha}, {i}) \in \sE$, $\sfn'(({\alpha}, {i})) = 0$,
and there exists ${\beta} \in \sZ$ such that 
$({\beta}, {i}) \in \sE$, $\sfn'(({\beta}, {i})) = 1$ and all ${i'} \in \sU_2$ with 
$\sfn'(({\beta}, {i'})) = 1$ are critical, then ${\alpha} \in \sZ$.    
\end{enumerate} 

Consider the set $\sZ$. There are two cases. 
\begin{description}
\item[{\sl Case 1:}] $\quad$
Every vertex ${\alpha}$ in $\sZ$ satisfies 
\begin{equation*}
\forall i \in \sU_2 : ({\alpha}, i) \in \sE \mbox{ and } 
{i} \mbox{ is not critical }
\Longrightarrow \quad \sfn'(({\alpha}, {i})) = 0 \; . 
\end{equation*} 
Then, for every ${\alpha} \in \sZ$ there are exactly $k_\alpha$ vertices $i$ such that 
$(\alpha, i) \in \sE$ and $\sfn'(({\alpha}, {i})) = 1$. Define 
\begin{equation*}
\sT = \{ i \in \sU_2 \mbox{ is critical } 
\; : \; \exists \alpha \in \sZ \mbox{ s.t. }
(\alpha, i) \in \sE \; \mbox{ and } \; \sfn'(({\alpha}, {i})) = 1  \} \; . 
\end{equation*}
We have 
\begin{equation}
| \sT | = \sum_{i \in \sT} 1 = \sum_{\alpha \in \sZ} k_\alpha \; . 
\label{eq:k-alpha-vertices}
\end{equation}

Note that $i_0 \notin \sT$ and recall that 
\begin{equation}
(\beta, {i_0}) \in \sE \mbox{ for some }  {\beta} \in \sZ \; .
\label{eq:edge-rho-i-0}
\end{equation} 

Note also that if $\gamma \notin \sZ$ and $i \in \sT$, 
then there is no edge between $\gamma$ and $i$
(otherwise, $\sfn'((\gamma, i)) = 0$, and so $\gamma$ should be in $\sZ$). 
Therefore, $x^{(\gamma)}_i = 0$, and so 
\begin{equation}
\sum_{\alpha \in \sZ} \sum_{i \in \sT} x^{(\alpha)}_i 
= \sum_{\alpha \in \rrrm} \sum_{i \in \sT} x^{(\alpha)}_i \; . 
\label{eq:no-edges-equation}
\end{equation}

We obtain
\begin{eqnarray*}
\sum_{\alpha \in \sZ} k_\alpha M & = & 
\sum_{\alpha \in \sZ} \sum_{i \in \Gamma} x^{(\alpha)}_i > 
\sum_{\alpha \in \sZ} \sum_{i \in \sT} x^{(\alpha)}_i \\
& = & \sum_{\alpha \in \rrrm} \sum_{i \in \sT} x^{(\alpha)}_i 
= \sum_{i \in \sT} \sum_{\alpha \in \rrrm} x^{(\alpha)}_i \\
& = & \sum_{i \in \sT} M = \sum_{\alpha \in \sZ} k_\alpha M \; . 
\end{eqnarray*}
Here the first equality is due to~(\ref{eq:lemma-14-req-1}), 
the strict inequality is due to~(\ref{eq:edge-rho-i-0}) and the second equality is due to~(\ref{eq:no-edges-equation}). 
The third equality is obtained by
the change of the order of the summation. The fourth equality is true because all vertices in $\sT$ are critical. 
Finally, the fifth equality is due to~(\ref{eq:k-alpha-vertices}).

Therefore, this case yields a contradiction. 

\item[{\sl Case 2:}] $\quad$ There is a vertex ${\alpha_0}$ in $\sZ$ which satisfies 
\begin{equation*}
\exists {j_0} \in \sU_2 , \; ({\alpha_0}, {j_0}) \in \sE, \; 
{j_0} \mbox{ is not critical}
\mbox{ and } \sfn'(({\alpha_0}, {j_0})) = 1 \; . 
\end{equation*} 
However, by the definition of $\sZ$, there is an integer $\ell$ and a set of edges 
\begin{equation*}
 \{ ({\alpha_h}, {j_{h+1}}) \}_{h = 0, 1, \cdots, \ell} \subseteq \sE \mbox { and }
 \{ ({\alpha_h}, {j_{h}}) \}_{h = 1, 2, \cdots, \ell} \subseteq \sE \; ,
\end{equation*} 
such that ${j_{\ell+1}} = {i_0}$, 
\begin{eqnarray*}
{\alpha_h} \in \sZ & \mbox{for} & h = 0, 1, \cdots, \ell \; , \\
{j_h} \in \sU_2 & \mbox{for} & h = 1, 2, \cdots, \ell+1 \; , 
\end{eqnarray*}
and 
\begin{eqnarray*} 
 \sfn'(({\alpha_h}, {j_{h+1}})) = 0  \mbox{ for } h = 0, 1, \cdots, \ell \; , \\
 \sfn'(({\alpha_h}, {j_{h}}) = 1 \mbox{ for } h = 1, 2, \cdots, \ell \; .  
\end{eqnarray*} 

We define the flow $\sfn''$ as
\begin{equation*}
\hspace{-5ex} \sfn''(e) = 
 \left\{ \begin{array}{cl}
1 & \hspace{-1ex} \mbox{if } e \in \{ ({\alpha_h}, {j_{h+1}}) \}_{h = 0, 1, \cdots, \ell} 
\; \cup \; \{ ( {j_{\ell+1}},t) \} \\
0 & \hspace{-1ex} \mbox{if } e \in \{ ({\alpha_h}, {j_{h}}) \}_{h = 0, 1, \cdots, \ell} 
\; \cup \; \{ ( {j_{0}},t) \} \\
\sfn'(e) & \hspace{-1ex} \mbox{for all other edges $e$}
\end{array} \right. \hspace{-2ex} . 
\end{equation*}
This $\sfn''$ is a legal flow in $(\sG(\sV, \sE), \scp)$. Moreover, it has the same value as $\sfn'$, and the 
number of critical vertices $i \in \sU_2$ having 
$\sfn''((i, t)) = 1$ is strictly larger than for $\sfn'$. 
\end{description}

We conclude that there exists an integer flow $\sfn^*$ in $(\sG(\sV, \sE), \scp)$ of 
value $\sum_{\alpha \in \rrrm} k_\alpha$,
such that for every critical vertex $i \in \sU_2$, $\sfn^*((i,t)) = 1$. 
We define 
\[
a_i = \left\{ \begin{array}{cl}
\alpha & \mbox{ if } \sfn^*((\alpha, i)) = 1 \mbox{ for some } \alpha \in \sU_1 \\ 
0 & \mbox{ otherwise } 
\end{array} \right. \; . 
\]
and $\blda = (a_i)_{i \in \Gamma}$. For this selection of $\blda$, we have $\blda \in \code_\Gamma^{(\bldkk)}$
and the properties (i) and (ii) are satisfied. 
\end{proof}

\section*{Acknowledgements} 

The authors would like to thank the anonymous reviewers, as well as the associate editor I.~Sason, for their comments which improved the presentation of the paper. They would also like to thank I.~Duursma, J.~Feldman, R.~Koetter and O.~Milenkovic for helpful discussions.

%-----------------------------------------------------------------------

%-----------------------------------------------------------------------


\begin{thebibliography} {99}

\bibitem{Gallager}
    {R. G. Gallager,}
    {``Low-density parity-check codes,''}
    {\em IRE Transactions on Information Theory,} vol. IT-8, pp.~21--28, Jan. 1962.    

\bibitem{Wiberg}
		{N. Wiberg,}
    {\em Codes and Decoding on General Graphs.} 
    Ph.D. Thesis, Link\"oping University, Sweden, 1996. 

\bibitem{FKKR}
  {G. D. Forney, R. Koetter, F. R. Kschischang, and A. Reznik,}
  {``On the effective weights of pseudocodewords for codes defined on graphs with cycles,''}
  vol. 123 of {\em Codes, Systems, and Graphical Models,} IMA Vol. Math. Appl., ch. 5, pp.~101-112,
  Springer, 2001. 

\bibitem{KV-characterization}
  {R. Koetter, W.-C. W. Li, P. O. Vontobel, and J. L. Walker,}
  {``Characterizations of pseudo-codewords of LDPC codes,''}
  Arxiv report arXiv:cs.IT/0508049, Aug. 2005.

\bibitem{KV-IEEE-IT}
		{P. Vontobel and R. Koetter,}
    {``Graph-cover decoding and finite-length analysis of message-passing iterative decoding of LDPC codes,''}
    to appear in {\em IEEE Transactions on Information Theory,}
    Arxiv report arXiv:cs.IT/0512078, Dec. 2005.   

\bibitem{Feldman-thesis}
    {J. Feldman,}
    {\em Decoding Error-Correcting Codes via Linear Programming.}
    Ph.D. Thesis, Massachusetts Institute of Technology, Sep. 2003.    

\bibitem{Feldman}
    {J. Feldman, M. J. Wainwright, and D. R. Karger,}
    {``Using linear programming to decode binary linear codes,''}
    {\em IEEE Transactions on Information Theory,} vol. 51, no. 3, pp.~954--972, March 2005. 

\bibitem{Caire}
    {G. Caire, G. Taricco, and E. Biglieri,}
    {``Bit-interleaved coded modulation,''}
    {\em IEEE Transactions on Information Theory,} vol. 44, no. 3, pp.~927--946, May 1998.    

\bibitem{Ritcey}
    {X. Li and J. A. Ritcey,}
    {``Bit-interleaved coded modulation with iterative decoding,''}
    {\em Proc. IEEE International Conference on Communications (ICC),} vol. 2, pp.~858--863, Sep. 1999. 

\bibitem{Deepak}
    {D. Sridhara and T. E. Fuja,}
    {``LDPC codes over rings for PSK modulation,''}
    {\em IEEE Transactions on Information Theory,} vol. 51, no. 9, pp.~3209--3220, Sep. 2005.       

\bibitem{MacKay}
    {M. C. Davey and D. J. C. MacKay,}
    {``Low density parity check codes over $\GF(q)$,''}
    {\em IEEE Communications Letters,} vol. 2, no. 6, pp.~165--167, June 1998.    

\bibitem{LSLG}
    {X. Li, M. R. Soleymani, J. Lodge, and P. S. Guinand,}
    {``Good LDPC codes over $\GF(q)$ for bandwidth efficient transmission,''}
    {\em Proc. 4th IEEE Workshop on Signal Processing Advances in Wireless Communications (SPAWC),} 
    pp.~95--99, June 2003.    

\bibitem{Bennatan1}
    {A. Bennatan and D. Burshtein,}
    {``On the application of LDPC codes to arbitrary discrete-memoryless channels,''}
    {\em IEEE Transactions on Information Theory,} vol. 50, no. 3, pp.~417--438, March 2004.    

\bibitem{Bennatan2}
    {A. Bennatan and D. Burshtein,}
    {``Design and analysis of nonbinary LDPC codes for arbitrary discrete-memoryless channels,''}
    {\em IEEE Transactions on Information Theory,} vol. 52, no. 2, pp.~549--583, Feb. 2006.    

\bibitem{Bennatan-thesis}
    {A. Bennatan,}
    {\em The Application of LDPC Codes to New Problems in Communications.}
    Ph.D. Thesis, Tel Aviv University, Jan. 2007.    

\bibitem{Kelley-Sridhara-ISIT-2006}
  {C. A. Kelley, D. Sridhara, and J. Rosenthal,}
  {``Pseudocodeword weights for non-binary LDPC codes,''}
  {\em Proc. IEEE International Symposium on Information Theory (ISIT)}, Seattle, USA, pp.~1379-1383, July 2006.   

\bibitem{Forney}
  {G. D. Forney, Jr.,}
  {``Geometrically uniform codes,''}
  {\em IEEE Transactions on Information Theory}, vol.\ 37, issue 5, pp.~1241--1260, Sep. 1991.

\bibitem{Richardson}
  {T. Richardson and  R. E. Urbanke,}
  {``On the capacity of LDPC codes under message-passing decoding,''}
  {\em IEEE Transactions on Information Theory}, vol.\ 51, no.\ 9, pp.~3209--3220, Sep. 2005.

\bibitem{Flanagan_ISIT_08}
  {M. F. Flanagan,}
  {``Codeword-independent performance of nonbinary linear codes under linear-programming and sum-product decoding,''}
  {\em Proc. IEEE International Symposium on Information Theory (ISIT)}, Toronto, Canada, pp.~1503--1507, July 2008.

\bibitem{Hof_Sason_Shamai}
  {E. Hof, I. Sason, and S. Shamai (Shitz),}
  {``Performance bounds for nonbinary linear block codes over memoryless symmetric channels,''}
  {\em IEEE Transactions on Information Theory}, vol.\ 55, no.\ 3, pp.~977--996, March 2009.

\bibitem{Chertkov}
  {M. Chertkov and  M. Stepanov,}
  {``Pseudo-codeword landscape,''}
  {\em Proc. IEEE International Symposium on Information Theory (ISIT)}, Nice, France, pp.~1546--1550, June 2007.

\bibitem{Feldman-Yang}
  {K. Yang, X. Wang, and J. Feldman,}
  {``Cascaded formulation of the fundamental polytope of general linear block codes,''}
  {\em Proc. IEEE International Symposium on Information Theory (ISIT)}, Nice, France, pp.~1361--1365, June 2007.

\bibitem{Feldman-Yang2}
  {K. Yang, X. Wang, and J. Feldman,}
  {``A new linear programming approach to decoding linear block codes,''}
  {\em IEEE Transactions on Information Theory}, vol.\ 54, no.\ 3, pp.~1061--1072, March 2008.

\bibitem{Boyd}
    {S. Boyd, L. Vandenberghe,}
    {\em Convex Optimization,} 
    Cambridge: Cambridge University Press, 2004.  
 
\bibitem{Schrijver}
    {A. Schrijver,}
    {\em Theory of Linear and Integer Programming,} 
    New York: John Wiley \& Sons, 1998.  
 
\bibitem{Barvinok}
    {A. Barvinok,} 
    {\em A Course in Convexity,}
    vol. 54 of {\em Graduate Studies in Mathematics.} American Mathematical Society, Providence, RI, 2002. 

\bibitem{MacWilliams_Sloane}
    {F. J. MacWilliams and N. J. A. Sloane,} 
    {\em The Theory of Error Correcting Codes.}
    Amsterdam: North-Holland, 1977.  

\bibitem{SF}
    {V. Skachek, M. F. Flanagan,}
    {``Lower bounds on the minimum pseudodistance for linear
    codes with $q$-ary PSK modulation over AWGN,''}
    {\em Proc. 5-th International Symposium on Turbo Codes and Related Topics,} 
    Lausanne, Switzerland, September 2008.

\bibitem{Cormen}
    {T. H. Cormen, C. E. Leiserson, R. L. Rivest, and C. Stein,} 
    {\em Introduction to Algorithms.}
    Second edition, MIT Press and McGraw-Hill, 2001.  

%\bibitem{Urbanke2} 
%    { T. Richardson, A. Shokrollahi, R. Urbanke,}
%    { Design of ca\-pacity-approaching low-density parity-check codes,}
%    {\em IEEE Trans.\ Inform.\ Theory,} vol. 47, no. 2, pp. 619--637, Feb. 2001.    

%\bibitem{Bennatan}
%    { A. Bennatan, D. Burshtein,}
%    { Design and analysis of nonbinary LDPC codes for arbitrary discrete-memoryless channels,}
%    {\em IEEE Trans. Inform. Theory,} vol. 52, no. 2, pp. 549--583, Feb. 2006.    

%\bibitem{Bennatan-thesis}
%    { A. Bennatan,}
%    {\em The Application of LDPC Codes to New Problems in Communications,}
%    Ph.D. Thesis, Tel Aviv University, Jan. 2007.    

%\bibitem{Caire}
%    { G. Caire, G. Taricco, E. Biglieri,}
%    { Bit-interleaved coded modulation,}
%    {\em IEEE Trans. Inform. Theory,} vol. 44, no. 3, pp. 927--946, May 1998.    

%\bibitem{MacKay}
%    { M.C. Davey, D.J.C. MacKay,}
%    { Low density parity check codes over $\GF(q)$,}
%    {\em IEEE Communications Letters,} vol. 2, no. 6, pp. 165--167, 1998.    
    
%\bibitem{KV-Turbo}
%		{ R. Koetter, P. Vontobel,}
%    { Graph-covers and iterative decoding of finite-length codes,}
%    {\em Proc. IEEE International Symposium on Turbo Codes and Applications,}
%    Brest, France, September 2003. 

%\bibitem{Li}
%    { X. Li, J. A. Ritcey,}
%    { Bit-interleaved coded modulation with iterative decoding,}
%    {\em IEEE Communications Letters,} vol. 1, no. 6, pp. 169--171, Nov. 1997.    

%\bibitem{LMSS}
%    { M. Luby, M. Mitzenmacher, A. Shokrollahi, D. Spielman,}
%    { Efficient erasure correcting codes,}
%    {\em IEEE Trans. Inform. Theory,} vol. 47, no. 2, pp. 569--584, Feb. 2001.    

%\bibitem{Shokrollahi} 
%    { P. Oswald, A. Shokrollahi,}
%    { Capacity-achieving sequences for the erasure channel,}
%    {\em IEEE Trans. Inform. Theory,} vol. 48, no. 12, pp. 3017--3028, Dec. 2002.    

%\bibitem{Tanner-lower-bounds}
%		{ R.M. Tanner,}
%    { Minimum-Distance Bounds by Graph Analysis,}
%    {\em IEEE Trans.\ Inform.\ Theory}, IT-47, p. 808-821, Feb. 2001. 

%\bibitem{KV-lower-bounds}
%  { P.O. Vontobel, R. Koetter}
%  { Lower Bounds on the Minimum Pseudo-Weight of Linear Codes,}
%  {\em Proc. IEEE International Symposium on Information Theory (ISIT), June/July 2004, p.67,}
%  Chicago, USA. 

\end{thebibliography}
\end{document}